\newcommand\floor[1]{\left\lfloor#1\right\rfloor}
\newcommand\msize[1]{{\left|#1\right|}}
\newcommand\cur[1]{\mbox{\rm\sc Cur}(#1)}% to denote curvature 
\newcommand\dist{{\rm dist}}% to denote distance(F_0,F_i)
\newcommand\calF{{\cal F}}% to denote the set of regions F_i
\newtheorem{theorem}{Theorem}
\newtheorem{lemma}[theorem]{Lemma}
\renewenvironment{proof}{\par\noindent{\sf Proof.}}{\par}
\newtheorem{obs}[theorem]{Observation}
\newtheorem{cor}[theorem]{Corollary}
\begin{document}

% \title{Insert your title here%\thanks{Grants or other notes
% %about the article that should go on the front page should be
% %placed here. General acknowledgments should be placed at the end of the article.}
% }
% \subtitle{Do you have a subtitle?\\ If so, write it here}

\title{Efficient Folding Algorithms for Regular Polyhedra
\thanks{A part of this research was presented at CCCG 2020.
A part of this research is supported by JSPS KAKENHI Grant Number JP17H06287 and 18H04091.}}

%\titlerunning{Short form of title}        % if too long for running head

% \author{First Author         \and
%         Second Author %etc.
% }

\author{
Tonan Kamata$^1$ \and
Akira Kadoguchi$^2$ \and
Takashi Horiyama$^3$ \and
Ryuhei Uehara$^1$
}
\date{1 School of Information Science, Japan Advanced Institute
              of Science and Technology (JAIST), Ishikawa, Japan \\
              \texttt{\{kamata,uehara\}@jaist.ac.jp} \\
2 Intelligent Vision \&{} Image Systems (IVIS), Tokyo, Japan \\
              \texttt{akira.kadoguchi@ivis.co.jp} \\
3 Faculty of Information Science and Technology, Hokkaido
          University, Hokkaido, Japan \\
	  \texttt{horiyama@ist.hokudai.ac.jp}}
\maketitle

\begin{abstract} % 150-200 words
We investigate the folding problem that asks if a polygon $P$ can be folded to a polyhedron $Q$ for given $P$ and $Q$.
Recently, an efficient algorithm for this problem has been developed when $Q$ is a box.
We extend this idea to regular polyhedra, also known as Platonic solids.
The basic idea of our algorithms is common, which is called stamping.
However, the computational complexities of them are different depending on their geometric properties.
We developed four algorithms for the problem as follows.
(1) An algorithm for a regular tetrahedron, which can be extended to a
tetramonohedron.
(2) An algorithm for a regular hexahedron (or a cube), which is much efficient than the previously known one.
(3) An algorithm for a general deltahedron, which contains the cases that $Q$ is a regular octahedron or a regular icosahedron.
(4) An algorithm for a regular dodecahedron.
Combining these algorithms, we can conclude that the folding problem can be solved pseudo-polynomial time 
when $Q$ is a regular polyhedron and other related solid.

\noindent
{\it Keywords:}
Computational origami 
\and 
folding problem
\and
pseudo-polynomial time algorithm
\and 
regular polyhedron (Platonic solids)
\and
stamping
\end{abstract}

\section{Introduction}
\label{sec:intro}

In 1525, the German painter Albrecht D\"{u}rer published his masterwork 
on geometry~\cite{durer1525underweysung}, whose title translates as
``On Teaching Measurement with a Compass and Straightedge for lines, planes, and whole bodies.''  
In the book, he presented each polyhedron by drawing a {\em net},
which is an unfolding of the surface of the polyhedron to a planar layout
without overlapping by cutting along its edges.
To this day, it remains an important open problem 
whether every convex polyhedron has a net by cutting along its edges.
On the other hand, when we allow to cut anywhere, 
any convex polyhedron can be unfolded to a planar layout without overlapping.
There are two known algorithms; one is called \emph{source unfolding}, and 
the other is called \emph{star unfolding} (see \cite{DemaineORourke2007}).

\begin{figure}[h]
\centering
\includegraphics[width=0.3\linewidth]{./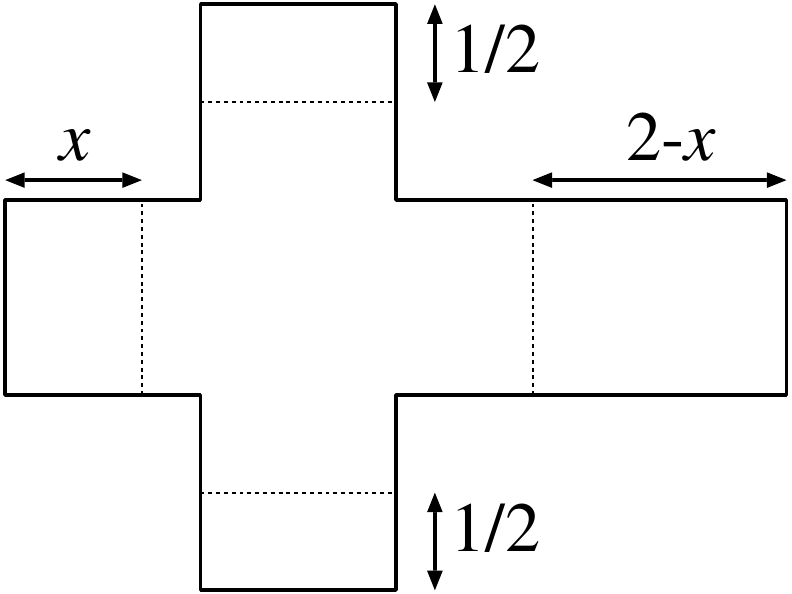}
\caption{A Latin cross made by six unit squares.
 For any real number $x$ with $0<x<1$, folding along dotted lines,
we can obtain a doubly-covered fat cross.}
\label{fig:cross}
\end{figure}

In order to understand unfolding, it is interesting to look at the inverse:
what kind of polyhedra can be folded from a given polygonal sheet of paper?
For example, the Latin cross, which is one of the eleven nets of a cube, 
can be folded to 23 different convex polyhedra by 85 distinct ways of folding \cite{DemaineORourke2007}
and an infinite number of doubly covered concave polygons (\figurename~\ref{fig:cross}).
Comprehensive surveys of folding and unfolding can be found in \cite{DemaineORourke2007}.
In this simple example, we can find that the convexity of 
a polyhedron plays an important role in this context.

We investigate the folding problem when both a polygon $P$ and a polyhedron $Q$ are explicitly given.
That is, for a given polygon $P$ and a polyhedron $Q$, the folding problem asks if $P$ can fold to $Q$ or not. 
This is a natural problem, however, there are a few results so far.
When $Q$ is a regular tetrahedron, we have a mathematical characterization of its net \cite{Akiyama2007};
according to this result, $P$  can fold to $Q$ if and only if $P$ is a kind of tiling.
%
%
%% Abel et al.~investigated the folding problem of bumpy pyramids \cite{ADDISU2018}:
%% For a given petal polygon $P$ (convex $n$-gon $B$ with $n$ triangular petals),
%% the problem asks if we can fold to a pyramid (with flat base $B$) or a convex bumpy pyramid by 
%% folding along a certain triangulation of $B$.
%% They gave nontrivial linear time algorithms for the problem in \cite{ADDISU2018}.
%
Recently, the folding problem was investigated for the case that $Q$ is a box.
Some special cases were investigated in \cite{HoriyamaMizunashi2016,XHSU2017},
and the problem for a box $Q$ is solved in \cite{MHU2020} in general.
The running time of the algorithm in \cite{MHU2020} is $O(D^{11}n^2(D^5+\log n))$, where $D$ is the diameter of $P$.
In the algorithm, $Q$ is given as just a ``box'' without size, and the algorithm tried all feasible sizes.
If $Q$ is explicitly given as a ``cube'',
the running time of the algorithm in \cite{MHU2020} is reduced to $O(D^7 n^2(D^5+\log n))$ time.

In this paper, we will show that the folding problem can be solved efficiently for a regular polyhedron,
which is also known as a Platonic solid. While there are five Platonic solids, 
our main result consists of four algorithms. 
That is, we investigate four problems depending on the target solid 
and show pseudo-polynomial time algorithms for them.

The first algorithm solves the folding problem for a regular tetrahedron.
We give a bit stronger algorithm that solves the folding problem for a tetramonohedron, 
which is a tetrahedron that consists of four congruent acute triangles.
As we have already mentioned, a mathematical characterization of a net of a regular tetrahedron is given in
\cite{Akiyama2007}, and its extension to a tetramonohedron is given by
Akiyama and Matsunaga \cite{AkiyamaMatsunaga2020}\footnote{In the literature, a tetramonohedron is called an \emph{isotetrahedron}.}.
However, they only gave mathematical characterizations for the solids, and as far as the authors know,
any algorithmic result for checking these characterizations has never 
been given explicitly\footnote{The authors thank an anonymous referee of \cite{KKHU2020}, who mentioned this point.}.
%from the viewpoint of algorithms.
We give a pseudo-polynomial time algorithm for the folding problem for a tetramonohedron.

The next algorithm solves the folding problem for a regular hexahedron, or a cube.
We give a bit stronger algorithm that solves the folding problem for a box of size $a\times b\times c$, where $a$, $b$, and $c$ are integers.
As mentioned above, a known algorithm for this problem in \cite{MHU2020} runs
in $O(D^7 n^2(D^5+\log n))$ time when $a,b,c$ are explicitly given.
We improve this running time to $O(D^2 n^3)$ time.
(In our context, the running time can be represented as $O(L (L+n) n^2)$ time, where $L$ is the perimeter of $P$.)

The third algorithm solves the folding problem for a regular dodecahedron.

The last algorithm solves the folding problem for a set of special convex deltahedra.
Usually, a deltahedron means a polyhedron whose faces are congruent equilateral triangles.
(More precisely, it is a regular tetrahedron, 
a regular octahedron, a regular icosahedron, 
a triangular bipyramid, 
a pentagonal bipyramid, a snub disphenoid, a triangulated triangular prism, or a gyroelongated square bipyramid.)
This algorithm cannot deal with a special case that all vertices have curvature $180^\circ$.
%which consists of three equilateral triangles. 
Therefore, it cannot deal with a regular tetrahedron among the set of deltahedron.
On the other hand, the algorithm can deal with a vertex of curvature $360^\circ$, 
where six equilateral triangles make a flat hexagonal face. 
That is, we allow each face to consist of coplanar regular triangles, or each face can be any convex polyiamond, 
which consists of convex polygon obtained by gluing a collection of equal-sized equilateral triangles
arranged with coincident sides (see \figurename~\ref{fig:HiDelta} for example).
Thus there are an infinite of non-strictly convex deltahedra that our algorithm can deal with.
In summary, if $Q$ has a non-strictly convex deltahedra with at least two vertices of curvature not equal to $180^\circ$,
our algorithm solves the folding problem in pseudo-polynomial time.
This set includes a regular octahedron and a regular icosahedron.
Therefore, using these algorithms, we can efficiently solve the folding problem for Platonic solids (and more).

\begin{figure}[h]
\centering
\includegraphics[width=0.7\linewidth]{./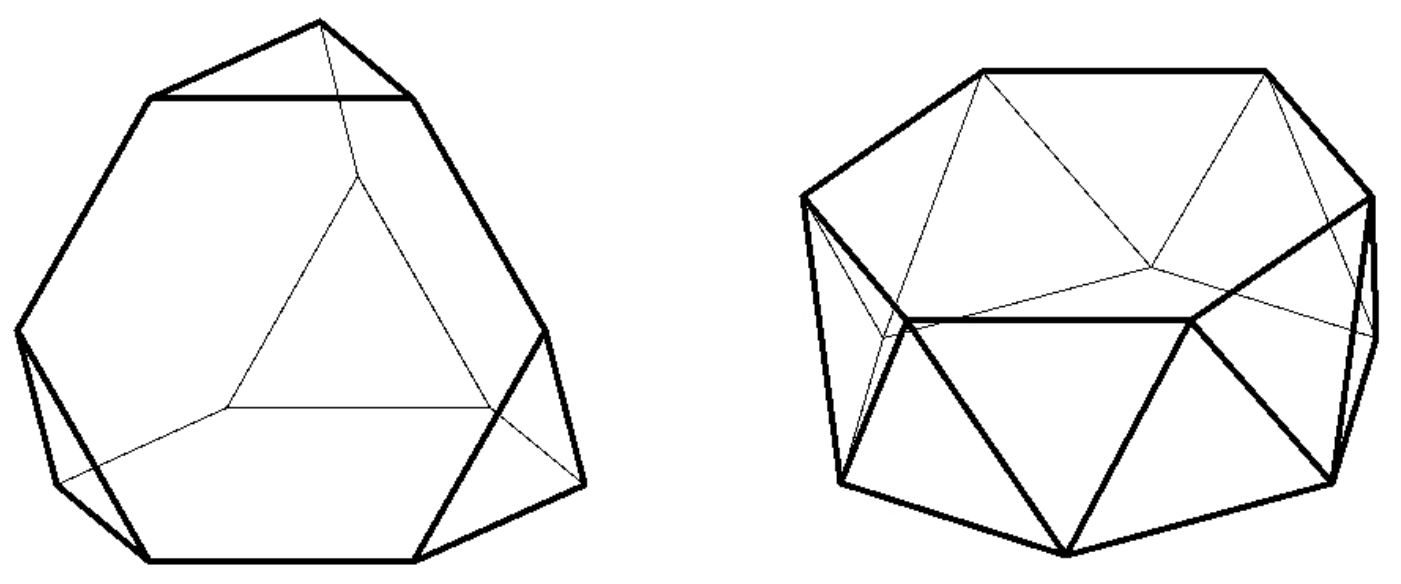}
\caption{Examples of non-strictly convex deltahedra}
\label{fig:HiDelta}
\end{figure}

% =============

% For these two cases, we give pseudo-polynomial time algorithms:
% \begin{theorem}
% \label{th:penta}
% Let $P$ be a simple polygon with $n$ vertices. We denote by $L$ the perimeter of $P$.
% Then the folding problem of a regular dodecahedron from $P$ can be solved in $O(n^2 (n+L)^4)$ time.
% \end{theorem}
% %
% \begin{theorem}
% \label{th:tri}
% Let $P$ be a simple polygon with $n$ vertices of perimeter $L$.
% Let $Q$ be a non-concave deltahedron\footnote{For simplicity, we call ``non-concave deltahedron''
% a polyhedron that is either a convex deltahedron or a non-strictly-convex deltahedron.} with $m$ vertices.
% Then the folding problem of $Q$ from $P$ can be solved in $O(n^2m(L+n)^2)$ time.
% \end{theorem}

% Combining with the result in $\cite{MHU2020}$, we have the following:

% \begin{cor}
% \label{cor:Platonic}
% The folding problem for the five regular polyhedra (also known as Platonic solids) 
% can be solved in pseudo-polynomial time.
% \end{cor}

% We here note that we use real RAM model, 
% and the time complexity is evaluated by the number of mathematical operations.

%% \begin{table}
%% \begin{tabular}{r|ccccc}\hline
%%  Platonic solids & regular tetrahedron & regular hexahedron (cube) & regular octahedron & regular dodecahedron & regular icosahedron \\
%%  Running times   &    $O(n^2(L+n)^2)$  &   $O(L^2 n^3)$            & $O(n^2m(L+n)^2)$   & $O(n^2 (n+L)^4)$     & $O(n^2m(L+n)^2)$ \\\hline
%% \end{tabular}
%% \caption{Running times of our algorithms for regular polyhedra (Platonic solids)}
%% \label{tab:alg}
%% \end{table}

\begin{table}\centering
\begin{tabular}{|rc|}\hline
 Platonic solids & Running times\\ \hline
 Regular tetrahedron & $O(L(L+n)n^2)$ \\ %O(n^2(L+n)^2)
 Regular hexahedron (cube) & $O(L(L+n)n^2)$ \\
 Regular octahedron & $O(L(L+n)n^2)$ \\
 Regular dodecahedron & $O((L+n)^4n^2 )$ \\
 Regular icosahedron & $O(L(L+n)n^2)$ \\\hline
\end{tabular}
\caption{Running times of our algorithms for Platonic solids, where
$n$ is the number of vertices of $P$ and $L$ is the perimeter of $P$.}
\label{tab:alg}
\end{table}

The running times of our algorithms can be summarized in Table \ref{tab:alg}.

\section{Preliminaries}
\label{sec:pre}

We first state our problem.
For a given polygon $P$ and a polyhedron $Q$, the \emph{folding problem} asks if $P$ can fold to $Q$ or not.
Since we will design an algorithm for each specific $Q$, 
 the input is a polygon $P=(p_0,p_1,\ldots,p_{n-1},p_n=p_0)$.
Let $x(p)$ and $y(p)$ be the $x$-coordinate and $y$-coordinate of a point $p$, respectively. 
For a line segment $\ell$, $\msize{\ell}$ denotes its length (precisely, the length of a geodesic line).
We assume the real RAM model for computation; 
each coordinate is an exact real number, and the running time is measured by the number of 
mathematical operations.

Since the length of the edges of $Q$ can be computed from the area of $P$,
we assume that the length of an edge of $Q$ is 1 when $Q$ is a regular polyhedron 
and $P$ has an area that is consistent with $Q$ without loss of generality.
%
% (When $Q$ is not a regular polyhedron, 
% we first compute the area of $P$ and compute the length of each edge of $Q$ from the area.)
Each algorithm has the information of $Q$ which is represented in 
the standard form in computational geometry (see \cite{BCKO}).
Precisely, for each $i,j,k$, the polyhedron $Q$ consists of vertices $q_i=(x(q_i),y(q_i),z(q_i))$, 
edges $\{q_i,q_j\}$, and faces $f_1,\ldots,f_k$, 
where each $f_i$ is represented by a cycle of vertices 
in counterclockwise-order in relation to the normal vector of the face.

We will construct a one-to-one correspondence between each point on $Q$ with a point on $P$ on the $xy$-plane where $P$ is placed.
To deal with that, we define a \emph{local coordinate} of a point $q_j$ on a face $f_i$ of $Q$ by $q_j=(f_i;x(q_j),y(q_j))$.
To simplify, when $Q$ is a regular polyhedron, we suppose that 
each face $f_i$ of $Q$ has a vertex of $f_i$ with local coordinate $(f_i;0,0)$ and another vertex with local coordinate $(f_i;1,0)$.
That is, each face $f_i$ has the edge $((f_i;0,0),(f_i;1,0))$ of unit length.
We note that each point inside of $f_i$ has a unique local coordinate,
a point on an edge of $f_i$ except its endpoints has two local coordinates $(f_i;x,y)$ and $(f_{i'};x',y')$,
where $f_{i'}$ is the face sharing the edge, and
each vertex of the polygon $f_i$ has $d$ local coordinates,
where $d$ is the number of the faces sharing the vertex on $Q$.
(Namely, the value of $d$ is 3, 4, or 5 in this paper.)

\begin{figure}
\centering
\includegraphics[width=0.6\textwidth]{./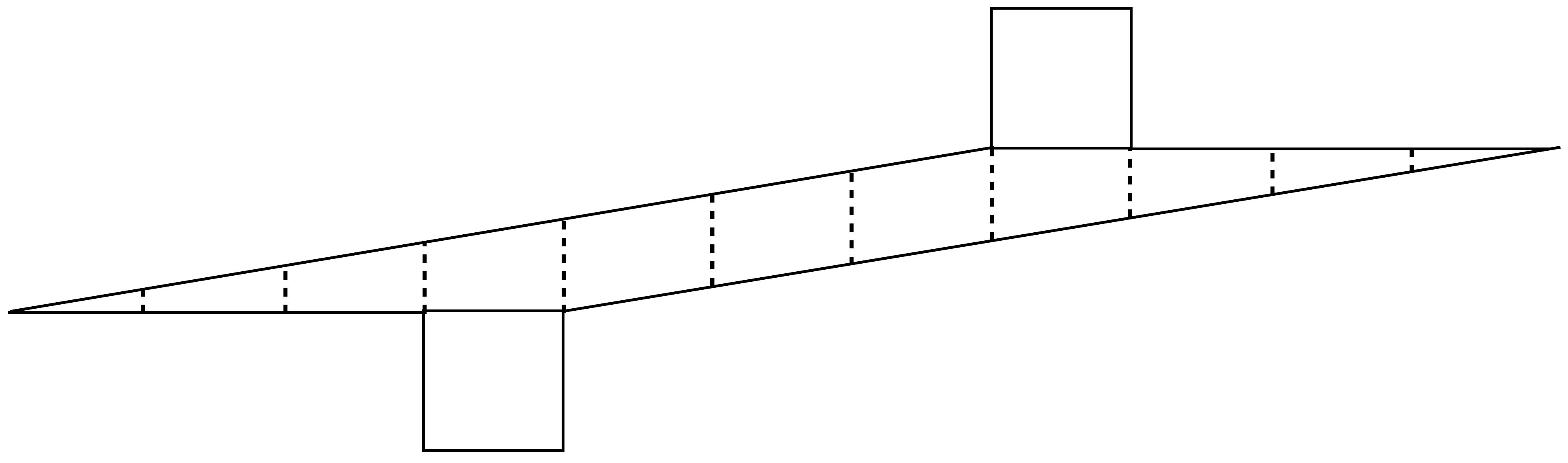}
\caption{An unfolding of a unit cube having long edges}
\label{fig:longl}
\end{figure}

In order to estimate time complexity, for a given polygon $P=(p_0,p_1,\ldots,p_{n-1},p_n=p_0)$,
we define its \emph{diameter} $D_P$ and \emph{perimeter} $L_P$ as follows:
\begin{eqnarray*}
D_P &=& \max_{p,p' \mbox{\scriptsize on }\partial P } \msize{p p'}\\
L_P &=& \sum_{0\le i<n}\msize{p_i p_{i+1}},
\end{eqnarray*}
where $\msize{p q}$ is the distance between two points $p$ and $q$, 
and $\partial P$ is the boundary of the polygon $P$.
When the polygon $P$ is clear, the subscript $P$ is omitted.
We also denote by $\ell_{\max}$ the length of the longest edge of $P$ 
defined by $\max_{0\le i<n}\msize{p_i p_{i+1}}$.
We observe that $\ell_{\max}\le D_P$ and $2D_P<L_P$ for any simple polygon $P$ of positive area.
In some $P$, $\ell_{\max}$ can be quite long compared to the face of $Q$ (see \figurename~\ref{fig:longl}). 
We give the upper bound of the number of faces that an edge of length $\ell_{\max}$ can go through on $Q$.
\begin{lemma}
\label{lem:traverse}
Let $Q$ a polyhedron that consists of convex polyiamonds, integral rectangles, or regular pentagons as its faces.
Then an edge of length $\ell_{\max}$ can go through $M$ faces on $Q$, where $M=O(D_P)$.
\end{lemma}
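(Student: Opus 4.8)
The plan is to work in the development (unfolding) of $Q$ along the path. Folding $P$ onto $Q$ sends an edge of $P$ of length $\ell_{\max}$ to a geodesic on the surface, and if we successively unfold the faces it traverses into the plane, this geodesic becomes a straight segment $\sigma$ of length $\ell_{\max}\le D_P$. The faces crossed become convex regions $F_1,\dots,F_M$ tiling a neighborhood of $\sigma$, and $M$ equals one plus the number of times $\sigma$ crosses an (image of an) edge of $Q$. So it suffices to bound the number of edge crossings by $O(\ell_{\max})$. Two elementary facts about the allowed face shapes drive everything: every face has all edge lengths $\ge 1$ (unit lattice edges for polyiamonds, integer sides for integral rectangles, unit sides for regular pentagons), and every face has minimum width at least a positive constant $w_0$; one checks $w_0=\tfrac{\sqrt3}{2}$ works, the extreme case being a single unit triangle, with rectangles and pentagons strictly wider.

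Next I would classify each maximal sub-segment of $\sigma$ lying inside a single face $F_i$ (call its length $\lambda_i$, so $\sum_i\lambda_i=\ell_{\max}$) as \emph{long} if $\lambda_i\ge w_0/2$ and \emph{short} otherwise. The number of long chords is at most $2\ell_{\max}/w_0=O(D_P)$ immediately, since they contribute disjoint lengths summing to $\ell_{\max}$. The content is therefore entirely in the short chords. Here I would use convexity together with the width bound: in a convex polygon of minimum width $\ge w_0$, a chord of length $<w_0$ cannot separate two vertices onto opposite sides, so it must cut off a single corner, i.e.\ the two edges it crosses are incident to a common vertex of $Q$. Thus each short chord is \emph{charged} to a well-defined vertex-image of $Q$ on $\sigma$.

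Finally I would bound the total number of short chords. Group them into maximal runs charged to the \emph{same} vertex-image $\tilde v$; such a run corresponds to $\sigma$ sweeping through the fan of faces around $\tilde v$ in the development. Since the edges incident to $\tilde v$ all emanate from the apex $\tilde v$ and a straight line meets each such ray at most once, and since only $d\le 5$ faces meet at any vertex of the solids we consider, each run has at most $5$ short chords. It remains to bound the number of runs, and this is the step I expect to be the main obstacle: I must argue that consecutive runs, charged to distinct vertex-images, are separated along $\sigma$ by a chord of length bounded below by a constant. Intuitively, after $\sigma$ finishes clipping the corners around $\tilde v$ it exits the star of $\tilde v$ across an edge of length $\ge 1$, and reaching the star of a different vertex-image forces it to cross a face transversally, using length $\Omega(1)$ (this is where the lower bounds on edge length and width, rather than mere convexity, are essential, and where curvature at the vertices must be handled so that the fans genuinely separate). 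Granting this separation, the number of runs is at most one more than the number of long chords, hence $O(D_P)$, so the short chords number $O(D_P)$ as well. Adding the long chords and the final $+1$ gives $M=O(D_P)$, completing the argument.
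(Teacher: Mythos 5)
Your argument takes a genuinely different and much heavier route than the paper's. The paper disposes of the lemma in a few lines by an extremal observation: since every face angle is at least $60^\circ$ and every edge has length at least $1$, any four consecutively crossed triangles force the crossing segment to consume length at least $\sqrt{3}$ (the configuration of \figurename~\ref{fig:traverse}(a)), so $M\le \ell_{\max}\cdot 4/\sqrt{3}$; rectangles and pentagons are handled the same way with factor $2/a$. Your long/short-chord charging scheme aims at the same bound, but the step you yourself flag --- that consecutive runs of short chords charged to distinct vertex-images are separated by a chord of length $\Omega(1)$ --- is a real gap as written, and the mechanism you sketch for it (``exits the star \ldots forces it to cross a face transversally'') is not the right one. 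A priori two \emph{consecutive} short chords can be charged to two \emph{different} vertices with no intervening transversal crossing at all: the two chords share a crossing point on a common edge $e$, and the corners they cut off could sit at the two distinct endpoints of $e$. Nothing about leaving the star of one vertex rules this out, so as stated the number of runs is not controlled.

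The gap is closable, and the closing argument is where the unit edge length actually does the work. A short chord of length $c<w_0/2=\sqrt{3}/4$ cutting off a corner of interior angle $\alpha\ge 60^\circ$ at a vertex $w$ has both of its crossing points at distance at most $2c/\sqrt{3}<1/2$ from $w$ (law of cosines with $\cos\alpha\le 1/2$). Two consecutive short chords share one crossing point on their common edge $e$; if they were charged to the two distinct endpoints of $e$, that single point would lie within distance $<1/2$ of both endpoints of an edge of length at least $1$, which is impossible. Hence consecutive short chords are always charged to the same vertex-image, each maximal run is bounded by the at most five faces in a fan around a convex vertex, and runs really are separated by long chords, of which there are at most $2\ell_{\max}/w_0$. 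With this insertion your proof is correct, but you should weigh whether the bookkeeping buys anything over the paper's one-line extremal bound; its only advantage is that it generalizes more readily to faces that merely have bounded width and edge length rather than a common lattice structure.
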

\begin{proof}
We first focus on $Q$ that consists of faces of convex polyiamonds.
Then the minimum angle of a face is $60^\circ$.
Therefore, an edge of length $\sqrt{3}$ can penetrate at most 4 faces on $P$ since
 the minimum distance between $p$ and $q$ in \figurename~\ref{fig:traverse}(a) is $\sqrt{3}$.
In other words, the number of faces that an edge of length $\ell_{\max}$ 
has intersections is at most $\ell_{\max}\times \frac{4}{\sqrt{3}}$.

Next we turn to the integral rectangular faces.
Let $a$ be the length of the shortest edge of the rectangles.
Then $a$ is an integer by assumption.
Thus, by a similar argument with \figurename~\ref{fig:traverse}(b),
we can show that the number of faces that an edge of length $\ell_{\max}$ 
has intersections is at most $\ell_{\max}\times \frac{2}{a}$.
When a face is a regular pentagon, we can use similar arguments.

By $\ell_{\max}\le D_P$ and $2D_P<L_P$, we have $M=O(D_P)$ and $M=O(L_P)$ in both cases.
\qed\end{proof}

We note that when the minimum angle in the faces and the minimum distance between two non-adjacent edges
of $Q$ are bounded by some constants, we have the same claim with some geometric
parameters given by the bounds.

\begin{figure}
\centering
% Graphic for TeX using PGF
% Title: /home/uehara/src/papers/poly-unfold/draft08/figure/traverse.dia
% Creator: Dia v0.97.3
% CreationDate: Tue Oct 13 18:41:57 2020
% For: uehara
% \usepackage{tikz}
% The following commands are not supported in PSTricks at present
% We define them conditionally, so when they are implemented,
% this pgf file will use them.
\ifx\du\undefined
  \newlength{\du}
\fi
\setlength{\du}{15\unitlength}
\begin{tikzpicture}
\pgftransformxscale{1.000000}
\pgftransformyscale{-1.000000}
\definecolor{dialinecolor}{rgb}{0.000000, 0.000000, 0.000000}
\pgfsetstrokecolor{dialinecolor}
\definecolor{dialinecolor}{rgb}{1.000000, 1.000000, 1.000000}
\pgfsetfillcolor{dialinecolor}
\pgfsetlinewidth{0.050000\du}
\pgfsetdash{}{0pt}
\pgfsetdash{}{0pt}
\pgfsetbuttcap
\pgfsetmiterjoin
\pgfsetlinewidth{0.050000\du}
\pgfsetbuttcap
\pgfsetmiterjoin
\pgfsetdash{}{0pt}
\definecolor{dialinecolor}{rgb}{1.000000, 1.000000, 1.000000}
\pgfsetfillcolor{dialinecolor}
\fill (4.550000\du,11.000000\du)--(5.600000\du,13.000000\du)--(3.500000\du,13.000000\du)--cycle;
\definecolor{dialinecolor}{rgb}{0.000000, 0.000000, 0.000000}
\pgfsetstrokecolor{dialinecolor}
\draw (4.550000\du,11.000000\du)--(5.600000\du,13.000000\du)--(3.500000\du,13.000000\du)--cycle;
\pgfsetlinewidth{0.050000\du}
\pgfsetdash{}{0pt}
\pgfsetdash{}{0pt}
\pgfsetbuttcap
\pgfsetmiterjoin
\pgfsetlinewidth{0.050000\du}
\pgfsetbuttcap
\pgfsetmiterjoin
\pgfsetdash{}{0pt}
\definecolor{dialinecolor}{rgb}{1.000000, 1.000000, 1.000000}
\pgfsetfillcolor{dialinecolor}
\fill (6.650000\du,11.000000\du)--(7.700000\du,13.000000\du)--(5.600000\du,13.000000\du)--cycle;
\definecolor{dialinecolor}{rgb}{0.000000, 0.000000, 0.000000}
\pgfsetstrokecolor{dialinecolor}
\draw (6.650000\du,11.000000\du)--(7.700000\du,13.000000\du)--(5.600000\du,13.000000\du)--cycle;
\pgfsetlinewidth{0.050000\du}
\pgfsetdash{}{0pt}
\pgfsetdash{}{0pt}
\pgfsetbuttcap
\pgfsetmiterjoin
\pgfsetlinewidth{0.050000\du}
\pgfsetbuttcap
\pgfsetmiterjoin
\pgfsetdash{}{0pt}
\definecolor{dialinecolor}{rgb}{1.000000, 1.000000, 1.000000}
\pgfsetfillcolor{dialinecolor}
\fill (5.600000\du,13.006250\du)--(6.650000\du,15.006250\du)--(4.550000\du,15.006250\du)--cycle;
\definecolor{dialinecolor}{rgb}{0.000000, 0.000000, 0.000000}
\pgfsetstrokecolor{dialinecolor}
\draw (5.600000\du,13.006250\du)--(6.650000\du,15.006250\du)--(4.550000\du,15.006250\du)--cycle;
\pgfsetlinewidth{0.050000\du}
\pgfsetdash{}{0pt}
\pgfsetdash{}{0pt}
\pgfsetbuttcap
{
\definecolor{dialinecolor}{rgb}{0.000000, 0.000000, 0.000000}
\pgfsetfillcolor{dialinecolor}
% was here!!!
\definecolor{dialinecolor}{rgb}{0.000000, 0.000000, 0.000000}
\pgfsetstrokecolor{dialinecolor}
\draw (4.550000\du,11.000000\du)--(6.650000\du,11.000000\du);
}
\pgfsetlinewidth{0.050000\du}
\pgfsetdash{}{0pt}
\pgfsetdash{}{0pt}
\pgfsetbuttcap
{
\definecolor{dialinecolor}{rgb}{0.000000, 0.000000, 0.000000}
\pgfsetfillcolor{dialinecolor}
% was here!!!
\definecolor{dialinecolor}{rgb}{0.000000, 0.000000, 0.000000}
\pgfsetstrokecolor{dialinecolor}
\draw (3.500000\du,13.000000\du)--(4.550000\du,15.006250\du);
}
\pgfsetlinewidth{0.050000\du}
\pgfsetdash{}{0pt}
\pgfsetdash{}{0pt}
\pgfsetbuttcap
{
\definecolor{dialinecolor}{rgb}{0.000000, 0.000000, 0.000000}
\pgfsetfillcolor{dialinecolor}
% was here!!!
\definecolor{dialinecolor}{rgb}{0.000000, 0.000000, 0.000000}
\pgfsetstrokecolor{dialinecolor}
\draw (7.700000\du,13.000000\du)--(6.650000\du,15.006250\du);
}
\pgfsetlinewidth{0.050000\du}
\pgfsetdash{}{0pt}
\pgfsetdash{}{0pt}
\pgfsetbuttcap
{
\definecolor{dialinecolor}{rgb}{0.000000, 0.000000, 0.000000}
\pgfsetfillcolor{dialinecolor}
% was here!!!
\pgfsetarrowsend{stealth}
\definecolor{dialinecolor}{rgb}{0.000000, 0.000000, 0.000000}
\pgfsetstrokecolor{dialinecolor}
\draw (3.300000\du,13.900000\du)--(7.800000\du,11.500000\du);
}
\pgfsetlinewidth{0.000000\du}
\pgfsetdash{}{0pt}
\pgfsetdash{}{0pt}
\pgfsetbuttcap
{
\definecolor{dialinecolor}{rgb}{0.000000, 0.000000, 0.000000}
\pgfsetfillcolor{dialinecolor}
% was here!!!
\pgfsetarrowsend{stealth}
\definecolor{dialinecolor}{rgb}{0.000000, 0.000000, 0.000000}
\pgfsetstrokecolor{dialinecolor}
\draw (3.847049\du,15.293750\du)--(3.843750\du,13.787500\du);
}
\pgfsetlinewidth{0.000000\du}
\pgfsetdash{}{0pt}
\pgfsetdash{}{0pt}
\pgfsetbuttcap
{
\definecolor{dialinecolor}{rgb}{0.000000, 0.000000, 0.000000}
\pgfsetfillcolor{dialinecolor}
% was here!!!
\pgfsetarrowsend{stealth}
\definecolor{dialinecolor}{rgb}{0.000000, 0.000000, 0.000000}
\pgfsetstrokecolor{dialinecolor}
\draw (7.072049\du,15.256250\du)--(7.071250\du,11.936875\du);
}
% setfont left to latex
\definecolor{dialinecolor}{rgb}{0.000000, 0.000000, 0.000000}
\pgfsetstrokecolor{dialinecolor}
\node[anchor=west] at (3.800000\du,15.600000\du){$p$};
% setfont left to latex
\definecolor{dialinecolor}{rgb}{0.000000, 0.000000, 0.000000}
\pgfsetstrokecolor{dialinecolor}
\node[anchor=west] at (6.900000\du,15.600000\du){$q$};
% setfont left to latex
\definecolor{dialinecolor}{rgb}{0.000000, 0.000000, 0.000000}
\pgfsetstrokecolor{dialinecolor}
\node[anchor=west] at (3.900000\du,15.600000\du){};
\pgfsetlinewidth{0.050000\du}
\pgfsetdash{}{0pt}
\pgfsetdash{}{0pt}
\pgfsetmiterjoin
\definecolor{dialinecolor}{rgb}{1.000000, 1.000000, 1.000000}
\pgfsetfillcolor{dialinecolor}
\fill (10.200000\du,11.000000\du)--(10.200000\du,13.200000\du)--(11.600000\du,13.200000\du)--(11.600000\du,11.000000\du)--cycle;
\definecolor{dialinecolor}{rgb}{0.000000, 0.000000, 0.000000}
\pgfsetstrokecolor{dialinecolor}
\draw (10.200000\du,11.000000\du)--(10.200000\du,13.200000\du)--(11.600000\du,13.200000\du)--(11.600000\du,11.000000\du)--cycle;
\pgfsetlinewidth{0.050000\du}
\pgfsetdash{}{0pt}
\pgfsetdash{}{0pt}
\pgfsetmiterjoin
\definecolor{dialinecolor}{rgb}{1.000000, 1.000000, 1.000000}
\pgfsetfillcolor{dialinecolor}
\fill (10.200000\du,13.200000\du)--(10.200000\du,15.000000\du)--(11.600000\du,15.000000\du)--(11.600000\du,13.200000\du)--cycle;
\definecolor{dialinecolor}{rgb}{0.000000, 0.000000, 0.000000}
\pgfsetstrokecolor{dialinecolor}
\draw (10.200000\du,13.200000\du)--(10.200000\du,15.000000\du)--(11.600000\du,15.000000\du)--(11.600000\du,13.200000\du)--cycle;
\pgfsetlinewidth{0.050000\du}
\pgfsetdash{}{0pt}
\pgfsetdash{}{0pt}
\pgfsetbuttcap
{
\definecolor{dialinecolor}{rgb}{0.000000, 0.000000, 0.000000}
\pgfsetfillcolor{dialinecolor}
% was here!!!
\pgfsetarrowsend{stealth}
\definecolor{dialinecolor}{rgb}{0.000000, 0.000000, 0.000000}
\pgfsetstrokecolor{dialinecolor}
\draw (9.200000\du,14.400000\du)--(12.400000\du,12.400000\du);
}
\pgfsetlinewidth{0.000000\du}
\pgfsetdash{}{0pt}
\pgfsetdash{}{0pt}
\pgfsetbuttcap
{
\definecolor{dialinecolor}{rgb}{0.000000, 0.000000, 0.000000}
\pgfsetfillcolor{dialinecolor}
% was here!!!
\pgfsetarrowsend{stealth}
\definecolor{dialinecolor}{rgb}{0.000000, 0.000000, 0.000000}
\pgfsetstrokecolor{dialinecolor}
\draw (9.400000\du,12.600000\du)--(10.130100\du,13.713700\du);
}
\pgfsetlinewidth{0.000000\du}
\pgfsetdash{}{0pt}
\pgfsetdash{}{0pt}
\pgfsetbuttcap
{
\definecolor{dialinecolor}{rgb}{0.000000, 0.000000, 0.000000}
\pgfsetfillcolor{dialinecolor}
% was here!!!
\pgfsetarrowsend{stealth}
\definecolor{dialinecolor}{rgb}{0.000000, 0.000000, 0.000000}
\pgfsetstrokecolor{dialinecolor}
\draw (12.600000\du,14.500000\du)--(11.673850\du,13.001200\du);
}
% setfont left to latex
\definecolor{dialinecolor}{rgb}{0.000000, 0.000000, 0.000000}
\pgfsetstrokecolor{dialinecolor}
\node[anchor=west] at (8.800000\du,12.400000\du){$p$};
% setfont left to latex
\definecolor{dialinecolor}{rgb}{0.000000, 0.000000, 0.000000}
\pgfsetstrokecolor{dialinecolor}
\node[anchor=west] at (12.300000\du,15.000000\du){$q$};
% setfont left to latex
\definecolor{dialinecolor}{rgb}{0.000000, 0.000000, 0.000000}
\pgfsetstrokecolor{dialinecolor}
\node[anchor=west] at (5.200000\du,10.500000\du){(a)};
% setfont left to latex
\definecolor{dialinecolor}{rgb}{0.000000, 0.000000, 0.000000}
\pgfsetstrokecolor{dialinecolor}
\node[anchor=west] at (10.400000\du,10.500000\du){(b)};
% setfont left to latex
\definecolor{dialinecolor}{rgb}{0.000000, 0.000000, 0.000000}
\pgfsetstrokecolor{dialinecolor}
\node[anchor=west] at (10.600000\du,10.400000\du){};
\end{tikzpicture}
\caption{(a) 
An edge of length $\sqrt{3}$ can penetrate at most 4 regular triangles, and 
(b) an edge of length $a$ can penetrate at most 2 rectangles.}
\label{fig:traverse}
\end{figure}
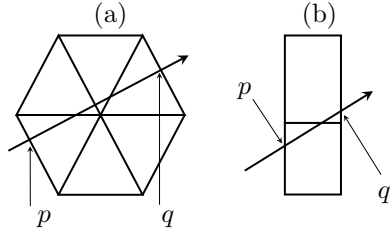

Let $Q$ be a polyhedron and $q$ be a point on $Q$.
The \emph{curvature} $\cur{q}$ at $q$ is the angle defined by the value $360^\circ-A$
and \emph{co-curvature} at $q$ is the angle $A$,
where $A$ is the total angle of the angles on faces of $Q$ adjacent to $q$.
That is, the curvature of $q$ on a convex polyhedron $Q$ is 
less than $360^\circ$ if and only if $q$ is a vertex of $Q$.

We will use the following Gauss-Bonnet Theorem:
\begin{theorem}[{Gauss-Bonnet Theorem}]
\label{th:gauss}
The total sum of the curvature of all vertices of a convex polyhedron is $720^\circ$.
\end{theorem}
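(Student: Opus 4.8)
The plan is to derive the identity from Euler's polyhedral formula $V-E+F=2$, where $V$, $E$, and $F$ denote the numbers of vertices, edges, and faces of $Q$. First I would rewrite the total curvature directly from the definition. Since $\cur{v}=360^\circ - A_v$, where $A_v$ is the total face angle meeting at the vertex $v$, summing over all vertices gives
\[
\sum_{v} \cur{v} = 360^\circ\, V - \sum_{v} A_v .
\]
So everything reduces to computing $\sum_v A_v$, the sum of all face angles of $Q$.

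The key step is to reorganize this count face by face rather than vertex by vertex. Each interior angle of $Q$ belongs to exactly one face, so $\sum_v A_v$ equals the sum over all faces of the interior-angle sum of that face. A convex $k$-gon has interior-angle sum $(k-2)\cdot 180^\circ$, so writing $k_f$ for the number of sides of face $f$ we get $\sum_v A_v = \sum_f (k_f - 2)\cdot 180^\circ = 180^\circ\bigl(\sum_f k_f - 2F\bigr)$.

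Next I would invoke the handshake identity $\sum_f k_f = 2E$, which holds because every edge of $Q$ is shared by exactly two faces and is therefore counted twice when we sum the side-counts over all faces. Substituting gives $\sum_v A_v = 180^\circ(2E - 2F) = 360^\circ(E-F)$, and plugging this back in yields
\[
\sum_v \cur{v} = 360^\circ V - 360^\circ(E-F) = 360^\circ(V-E+F).
\]
Euler's formula $V-E+F=2$ then finishes the proof, giving total curvature $720^\circ$.

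The computation is routine arithmetic; the only nontrivial ingredient is Euler's formula, which I would simply cite as a standard topological fact about convex polyhedra (equivalently, about closed surfaces homeomorphic to the sphere), so I do not expect a genuine obstacle here. The single point worth a word of care is the interior-angle-sum formula: it is stated above for convex faces, but $(k_f-2)\cdot 180^\circ$ holds for any simple polygonal face, so even if one allows the flattened or coplanar faces discussed earlier in the paper the argument goes through verbatim.
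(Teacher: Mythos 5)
Your derivation is correct: the reduction of $\sum_v \mathrm{Cur}(v) = 360^\circ V - \sum_v A_v$, the face-by-face angle count $\sum_f (k_f-2)\cdot 180^\circ$, the handshake identity $\sum_f k_f = 2E$, and Euler's formula combine exactly as you say to give $360^\circ(V-E+F)=720^\circ$. The paper itself offers no proof of this theorem --- it simply cites Demaine and O'Rourke (Sec.~21.3) --- and your argument is precisely the standard one found there, so there is nothing to compare beyond noting that you have supplied the omitted details correctly, including the observation that the angle-sum formula needs only simplicity of the faces, not convexity.
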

See \cite[Sec.~21.3]{DemaineORourke2007} for details.

\begin{figure}
\centering
\includegraphics[width=0.7\linewidth]{./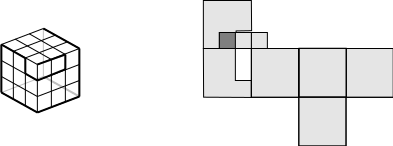}
\caption{A simple development of a cube overlaps.}
\label{fig:cube-overlap}
\end{figure}

Let $Q$ be a convex polyhedron. 
A \emph{development} of a convex polyhedron $Q$ results when we cut $Q$ along a set of polygonal lines
and unfold on a plane. If the development produces a connected simple non-overlapping polygon $P$, 
we call $P$ a \emph{net} of $Q$.
We note that the property of non-overlapping is counterintuitive; we may have it 
even for a simple development of a cube (\figurename~\ref{fig:cube-overlap}).
We assume that any cut ends at a point with curvature less than $360^\circ$.
Otherwise, since $Q$ is convex, it makes a redundant cut on $P$, 
which can be eliminated (the proof can be found in \cite[Theorem 3]{MU2008}).
Let $T$ be the set of cut lines on $Q$ to obtain a net $P$.
Then the following is well known as a basic property of unfolding 
(see \cite[Sec.~22.1.3]{DemaineORourke2007} for details):
\begin{theorem}
\label{th:spanning}
$T$ forms a spanning tree of the vertices of $Q$.
\end{theorem}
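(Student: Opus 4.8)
The plan is to verify the two properties that characterise a spanning tree separately: that every vertex of $Q$ is incident to $T$ (spanning), and that $T$, regarded as the graph whose nodes are the vertices of $Q$ and whose arcs are the maximal cut lines joining them, is both connected and acyclic (a tree). The single structural fact I would rely on throughout is that a net $P$ is a topological disk obtained by an isometric development of the cut surface, so that the development preserves the total face angle around every point sent to the interior of $P$.

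First I would prove that $T$ is spanning. Since the interior of $P$ is flat, the total face angle around any point mapped into the interior of $P$ must be exactly $360^\circ$; equivalently, a point $q$ with $\cur{q}\neq 0^\circ$ can only be sent to the boundary of $P$, that is, it must lie on a cut. Because $Q$ is convex, every vertex has strictly positive curvature (indeed, by Theorem~\ref{th:gauss} the vertex curvatures are positive and sum to $720^\circ$), so every vertex of $Q$ lies on $T$. Together with the standing assumption that each cut terminates at a point of nonzero curvature, i.e. at a vertex, this identifies the node set of $T$ with the vertex set of $Q$.

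To show that $T$ is a tree I would argue topologically, treating $T$ as a graph embedded on the surface of $Q$, which is homeomorphic to the sphere $S^2$. Write $c$ for the number of connected components of $T$ and $\beta_1$ for its number of independent cycles. The number of connected pieces created by cutting $S^2$ along $T$ equals the number of faces of the embedding, which by the Euler relation $V_T-E_T+F_T=1+c$ equals $\beta_1+1$; since $P$ is a single polygon there is exactly one piece, forcing $\beta_1=0$, so $T$ is a forest. A regular-neighbourhood computation then gives $\chi(P)=\chi(S^2)-\chi(T)=2-(V_T-E_T)$, and since $P$ is a disk we have $\chi(P)=1$, whence $V_T-E_T=1$; for a forest this means $c=1$, so $T$ is connected. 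A connected acyclic graph that meets every vertex is a spanning tree, which is the claim; the details of the surface-cutting identities can be modelled on \cite[Sec.~22.1.3]{DemaineORourke2007}.

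I expect the main obstacle to be making the two topological identities precise rather than the curvature argument. In particular I would need to distinguish carefully between producing one connected piece and producing one simply connected piece (a disk), and to handle the possibility that cut lines pass through the interiors of faces or meet at branch points that are not vertices of $Q$; ruling such configurations out, or absorbing them into the Euler count, is the delicate step. Here the hypothesis that $P$ is a genuine net — a non-overlapping, simply connected polygon — is exactly what pins $\chi(P)=1$ and lets the count close.
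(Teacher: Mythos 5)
The paper does not prove this statement at all: it is quoted as a known fact with a pointer to \cite[Sec.~22.1.3]{DemaineORourke2007}, so there is no in-paper argument to compare against. Your proof is correct and is essentially the standard one underlying that reference: positive curvature at every vertex of the convex $Q$ forces all vertices onto the cut set (this is the same observation the paper isolates as Lemma~\ref{lem:boundary}), and the topology of the complement pins down the tree structure. Your two Euler counts are both right and genuinely independent: $F_T=\beta_1+1$ uses only that $P$ is \emph{connected} (one piece) to kill cycles, while $\chi(S^2\setminus T)=2-(V_T-E_T)=1$ uses that $P$ is \emph{simply connected} (a disk) to force connectivity of $T$; omitting either would leave a forest or a cyclic graph as a possibility, and you correctly flag that distinction as the crux. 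The branch-point worry you raise resolves benignly: take the nodes of the embedded graph to be all vertices of $Q$ together with all endpoints and branch points of cuts (the paper's standing assumption that every cut terminates at a point of curvature less than $360^\circ$ puts all leaves at vertices of $Q$); the Euler computation is indifferent to which nodes are vertices of $Q$, and the conclusion is that $T$ is a tree \emph{containing} every vertex of $Q$, which is exactly what ``spanning tree of the vertices of $Q$'' means here. The one hypothesis you should state explicitly rather than leave implicit is that $T$ consists of finitely many polygonal arcs, so that it really is a finite embedded graph to which these counts apply.
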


\subsection{Properties of Unfolding}

A \emph{tetramonohedron} is a tetrahedron that consists of four congruent acute triangles.
We here note that a vertex of curvature $180^\circ$ on a polyhedron can be on an edge on its unfolding.
Therefore, we need some ideas to find vertices of curvature
$180^\circ$ of $Q$ on a polygon $P$ which may be a net of $Q$.
In this context, a tetramonohedron has a special property:

\begin{lemma}
\label{lem:tetra}
Let $Q$ be a convex polyhedron. 
Then $Q$ is a tetramonohedron if and only if the curvature of every vertex is $180^\circ$.
\end{lemma}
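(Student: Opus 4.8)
The plan is to prove the two directions separately, treating the ``only if'' direction (every curvature $180^\circ$ $\Rightarrow$ tetramonohedron) as the substantial one.

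\medskip
\noindent\emph{The ``if'' direction.} First I would recall the classical fact that a tetramonohedron is an \emph{isosceles tetrahedron}: its opposite edges are equal. Label the vertices $A,B,C,D$ with $\msize{AB}=\msize{CD}$, $\msize{AC}=\msize{BD}$, and $\msize{AD}=\msize{BC}$, so that every face has the same three edge lengths $\{a,b,c\}$ and hence the same three angles $\alpha,\beta,\gamma$ with $\alpha+\beta+\gamma=180^\circ$. The key observation is that the three face-angles meeting at any fixed vertex are exactly $\alpha$, $\beta$, $\gamma$, one of each: the angle contributed by an incident face at that vertex is the one opposite the face-edge \emph{not} touching the vertex, and for the three incident faces those three ``opposite'' edges have the three distinct lengths $a,b,c$. (For instance, at $A$ the faces $ABC$, $ABD$, $ACD$ contribute the angles opposite $BC$, $BD$, $CD$, i.e.\ opposite edges of lengths $a,b,c$.) Hence the total angle at every vertex is $\alpha+\beta+\gamma=180^\circ$, so $\cur{q}=360^\circ-180^\circ=180^\circ$ for each vertex $q$.

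\medskip
\noindent\emph{The ``only if'' direction: reducing to a tetrahedron.} Suppose every vertex of $Q$ has curvature $180^\circ$. By the Gauss--Bonnet Theorem (Theorem~\ref{th:gauss}) the curvatures sum to $720^\circ$, so $Q$ has exactly $720^\circ/180^\circ=4$ vertices, and all other surface points have curvature $0$. A convex polyhedron with exactly four vertices is the convex hull of four non-coplanar points, i.e.\ a tetrahedron; call its vertices $A,B,C,D$. It remains to show the four faces are congruent acute triangles.

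\medskip
\noindent\emph{The heart: the unfolding argument.} Fix the face $ABC$ in the plane and unfold the three remaining faces onto that plane by rotating $BCD$, $ACD$, $ABD$ about the edges $BC$, $CA$, $AB$ respectively; let $D_a,D_b,D_c$ be the corresponding images of the apex $D$. Since unfolding is an isometry, $\msize{AD_b}=\msize{AD_c}=\msize{AD}$, and likewise at $B$ and $C$. The flap over $AB$ is unfolded to the side of $AB$ away from $C$ and the flap over $CA$ to the side away from $B$, so reading the angles consecutively around $A$ — from ray $AD_c$, across $AB$, through the interior angle $\angle BAC$, across $AC$, to ray $AD_b$ — traverses precisely the three face-angles at $A$, whose sum is the co-curvature $360^\circ-\cur{A}=180^\circ$. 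Therefore $\angle D_bAD_c=180^\circ$, so $D_b,A,D_c$ are collinear and, as $\msize{AD_b}=\msize{AD_c}$, the point $A$ is the midpoint of the segment $D_bD_c$. The same argument at $B$ and $C$ makes them the midpoints of $D_aD_c$ and $D_aD_b$. Hence $\triangle ABC$ is the medial triangle of $\triangle D_aD_bD_c$, and the medial subdivision splits $\triangle D_aD_bD_c$ into four mutually congruent triangles, which are exactly the base face $ABC$ together with the three unfolded faces of $Q$. Thus all four faces of $Q$ are congruent. Finally, because $Q$ is a genuine three-dimensional convex polyhedron, the common face must be acute: folding the medial configuration back up yields a positive apex height only for an acute triangle (a right triangle folds flat and an obtuse one cannot close up), so the faces are acute and $Q$ is a tetramonohedron.

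\medskip
\noindent I expect the main obstacle to be the unfolding step of the ``only if'' direction: carefully tracking the orientation of the three unfolded flaps around each of $A,B,C$ so that the curvature condition genuinely forces the collinearity $\angle D_bAD_c=180^\circ$, and then invoking the (standard but worth stating) facts that the medial subdivision yields four congruent triangles and that non-degeneracy of $Q$ forces acuteness. The Gauss--Bonnet count and the ``if'' direction are routine by comparison.
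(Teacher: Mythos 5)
Your proof is correct and follows essentially the same route as the paper's: Gauss--Bonnet forces exactly four vertices, an unfolding into a single planar triangle whose curvature-$180^\circ$ vertices become edge midpoints yields the medial subdivision into four congruent triangles, and acuteness is extracted from non-degeneracy of $Q$. The only differences are cosmetic: you unfold about the base face (i.e., cut the three edges at the apex $D$) where the paper star-unfolds from one vertex along shortest geodesics, and you spell out the ``if'' direction via the equal-opposite-edges characterization where the paper merely appeals to the symmetry of the gluing.
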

\begin{proof}
If $Q$ is a tetramonohedron, by its symmetric property, 
each vertex $q$ consists of three distinct angles of a congruent triangle.
(Otherwise, we cannot glue corresponding edges to form $Q$ from four acute congruent triangles.)
Thus the curvature at $q$ is $180^\circ$. 

In order to show the opposite, 
we assume that every vertex of a convex polyhedron $Q$ has curvature $180^\circ$.
Then, by the Gauss-Bonnet Theorem, $Q$ has four vertices. 
Let $q_0,q_1,q_2,q_3$ be these four vertices.
We cut along three shortest straight lines $q_0q_1$, $q_0q_2$, and $q_0q_3$ on $Q$, respectively.
(We note that it is called star unfolding in literature \cite[Sec.~24.3]{DemaineORourke2007}.)
Since the curvature at any point on $Q$ except $q_0,q_1,q_2,q_3$ is $360^\circ$,
we can take three non-crossing straight lines from $q_0$ to $q_1$, $q_2$, and $q_3$ on $Q$.
%and they are the shortest lines from $q_0$ to them.
By developing $Q$ from $q_0$ along these three cut lines, 
we obtain a polygon $P=(q_0,q_1,q_0',q_2,q_0'',q_3,q_0)$.
Then, by assumption, curvatures at $q_1$, $q_2$, $q_3$ are all $180^\circ$.
That is, $P$ is a triangle with three vertices $q_0$, $q_0'$, and $q_0''$.
Moreover, each edge of the triangle consists of two cut lines which form an edge on $Q$.
Therefore, $q_1$, $q_2$, and $q_3$ are all the middle points of three edges 
$q_0q_0'$, $q_0'q_0''$, and $q_0''q_0$ of the triangle $P=(q_0,q_0',q_0'',q_0)$, respectively.
Thus all four triangles $q_0q_1q_3$, $q_1q_0'q_2$, $q_3q_2q_0''$, and $q_2q_3q_1$ are congruent.
If they are not acute, $P$ cannot fold to any solid by these crease lines (see, e.g., \cite{Uehara2020}).
Therefore, we can conclude that $Q$ is a tetramonohedron.
\qed\end{proof}

Combining Lemma \ref{lem:tetra} and Theorem \ref{th:gauss}, we have
the following corollary.
\begin{cor}
\label{cor:nontetra}
Let $Q$ be a convex polyhedron. When $Q$ is not a tetramonohedron,
$Q$ has at least two vertices of curvature not equal to $180^\circ$.
\end{cor}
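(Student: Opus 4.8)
The plan is to establish the contrapositive: assuming $Q$ has at most one vertex whose curvature differs from $180^\circ$, I would show that $Q$ must be a tetramonohedron. This is exactly the setting where Lemma~\ref{lem:tetra} and the Gauss-Bonnet Theorem (Theorem~\ref{th:gauss}) combine cleanly, so no new geometric machinery should be needed.

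First I would distinguish two cases according to the number of vertices of $Q$ with curvature not equal to $180^\circ$. If there are none, then every vertex has curvature exactly $180^\circ$, and Lemma~\ref{lem:tetra} immediately gives that $Q$ is a tetramonohedron; this case requires nothing further.

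The only real work is in the second case, where exactly one vertex $q^*$ satisfies $\cur{q^*}\neq 180^\circ$ while all remaining vertices have curvature $180^\circ$. Here I would invoke Theorem~\ref{th:gauss}: letting $k$ denote the total number of vertices, the total-curvature identity reads $\cur{q^*}+180^\circ(k-1)=720^\circ$, so $\cur{q^*}=900^\circ-180^\circ k$. Because $Q$ is convex, the angle sum around each vertex is strictly positive and strictly less than $360^\circ$, so each curvature lies strictly between $0^\circ$ and $360^\circ$. Imposing $0^\circ<900^\circ-180^\circ k<360^\circ$ forces $k=4$, whence $\cur{q^*}=900^\circ-720^\circ=180^\circ$, contradicting $\cur{q^*}\neq 180^\circ$. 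Hence the second case is vacuous, and the contrapositive follows.

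I expect the only point needing care to be the range bound $0^\circ<\cur{\cdot}<360^\circ$ used in the arithmetic: it is the positivity of curvature (from convexity) that excludes $k\geq 5$, and the strict upper bound $\cur{q^*}<360^\circ$ that excludes $k\leq 3$, together pinning $k$ to $4$. Once this bound is justified, the corollary is a direct consequence of the two cited results, with no further geometric input required.
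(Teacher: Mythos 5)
Your proof is correct and rests on the same two ingredients as the paper's argument (the Gauss--Bonnet theorem and Lemma~\ref{lem:tetra}); the only difference is organizational, since you argue the contrapositive and pin down the vertex count $k=4$ arithmetically from the bounds $0^\circ<\cur{q^*}<360^\circ$, whereas the paper cases directly on the number of vertices ($3$, $4$, or more). The curvature bound you flag as the point needing care is exactly the convexity fact the paper itself invokes (positive curvature at each vertex, and total angle strictly positive so curvature is strictly below $360^\circ$), so no gap remains.
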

\begin{proof}
Let $q_0,\ldots,q_k$ be the vertices of $Q$. Since $Q$ is a convex polyhedron, $k\ge 3$.
When $k=3$, the only possible solid is a doubly covered triangle. Then $Q$ satisfies the claim.
If $k>4$, by Theorems \ref{th:gauss}, at least two vertices have curvature not equal to $180^\circ$. 
Thus we focus on the case $k=4$. 
By Lemma \ref{lem:tetra}, since $Q$ is not a tetramonohedron, four vertices cannot have curvatures equal to $180^\circ$.
By the Gauss-Bonnet Theorem, it is impossible that three vertices have curvature $180^\circ$ except one.
Thus $Q$ has at least two vertices $q$ and $q'$ of curvature not equal to $180^\circ$.
\qed\end{proof}

The following simple lemma is basic but important:
\begin{lemma}[{\cite{DemaineORourke2007}}]
\label{lem:boundary}
Let $Q$ be a convex polyhedron, and $P$ be a net of $Q$. 
Then all vertices of $Q$ appear on $\partial P$.
\end{lemma}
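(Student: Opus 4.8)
The plan is to argue that interior points of a net correspond exactly to points of $Q$ carrying a full $360^\circ$ of surface angle, and then to rule vertices out of the interior using convexity. First I would recall that forming the net $P$ amounts to cutting $Q$ along the cut set $T$ and developing the resulting surface isometrically onto the plane; on each face this development is a rigid motion, so it preserves lengths and angles, and it is globally injective precisely because $P$ is required to be a simple (non-overlapping) polygon.

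Next I would establish the local picture at an interior point. Suppose $p$ lies in the interior of $P$, and let $q$ be the point of $Q$ that develops to $p$. Choose a small disk $B$ around $p$ contained in the interior of $P$. Pulling $B$ back through the development, the total face angle accumulated around $q$ must be exactly $360^\circ$, since the development is a local isometry and $B$ wraps a full turn around $p$ without overlap. In the terminology of the preliminaries this says the co-curvature at $q$ equals $360^\circ$, equivalently $\cur{q}=0^\circ$.

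The heart of the argument is then a one-line consequence of convexity. Since $Q$ is a convex polyhedron, every vertex $q$ of $Q$ has co-curvature strictly less than $360^\circ$ (equivalently positive curvature $\cur{q}>0^\circ$), as recorded just after the definition of curvature. Because $P$ is a net developed from the entire surface of $Q$, every point of $Q$---in particular every vertex---appears somewhere on $P$. By the previous paragraph a vertex cannot appear in the interior of $P$, as that would force its co-curvature to be $360^\circ$; hence every vertex of $Q$ must lie on $\partial P$.

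I would also note the shorter route afforded by Theorem \ref{th:spanning}: the cut set $T$ forms a spanning tree of the vertices of $Q$, so every vertex of $Q$ lies on $T$, and since cutting along $T$ turns it into part of the boundary upon unfolding, every vertex lands on $\partial P$. The only point demanding care in either approach is the rigorous claim that interior points of a net carry a full $360^\circ$ of surface angle; this is where non-overlapping (the simplicity of $P$) and the local-isometry property of the development are genuinely used, and it is the step I would write out most carefully.
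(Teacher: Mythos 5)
Your argument is correct and is essentially the paper's own proof, just written out in more detail: the paper likewise observes that convexity forces every vertex of $Q$ to have positive curvature, so no vertex can develop to an interior point of $P$ (where the surrounding angle must be exactly $360^\circ$). The extra care you give to the local-isometry/non-overlap step, and the alternative route via Theorem~\ref{th:spanning}, are both sound but not needed beyond what the paper already records.
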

\begin{proof}
Since $Q$ is convex, each vertex of $Q$ has positive curvature.
Hence it cannot correspond to an interior point of $P$. Thus we have the lemma.
\qed\end{proof}

The following lemma is useful for dealing with polyhedra except tetramonohedra.
\begin{lemma}
\label{lem:non180}
Let $Q$ be a convex polyhedron that is not a tetramonohedron, and $P$ be a net of $Q$.
Then $P$ has at least two vertices of angle not equal to $180^\circ$ that correspond to distinct vertices of $Q$.
Moreover, if $Q$ has no vertex of curvature $180^\circ$, $P$ has at least two vertices such that 
each of the vertices on $P$ is glued to the corresponding vertex of $Q$ without extra angle from $P$.
\end{lemma}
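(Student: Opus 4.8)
The plan is to track, for each vertex $q$ of $Q$, how the cone angle at $q$ is split among the copies of $q$ appearing on $\partial P$, and then single out the vertices of $Q$ that are \emph{forced} to create a genuine corner (interior angle $\neq 180^\circ$) of $P$. By Theorem~\ref{th:spanning} the set of cuts $T$ is a spanning tree of the vertices of $Q$, so every vertex $q$ has degree $\deg_T(q)\ge 1$ in $T$. The co-curvature $A(q)=360^\circ-\cur{q}$ is the total face angle around $q$, and the $\deg_T(q)$ cut edges meeting at $q$ divide this cone into $\deg_T(q)$ angular sectors; unfolding separates these sectors, so $q$ contributes exactly $\deg_T(q)$ points to $\partial P$ (all on the boundary, consistent with Lemma~\ref{lem:boundary}), and the angles of these copies sum to $A(q)$.

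The crux is the following dichotomy. If $\deg_T(q)\ge 2$, then at least one copy of $q$ is a true corner of $P$: were all $\deg_T(q)\ge 2$ copies flat (angle $180^\circ$), their angles would sum to at least $360^\circ$, contradicting $A(q)=360^\circ-\cur{q}<360^\circ$, which holds since $\cur{q}>0$ at every vertex of a convex polyhedron. If instead $q$ is a leaf ($\deg_T(q)=1$), its single copy has angle exactly $A(q)=360^\circ-\cur{q}$, and this is a corner precisely when $\cur{q}\neq 180^\circ$. Combining the two cases, \emph{every} vertex $q$ of $Q$ with $\cur{q}\neq 180^\circ$ produces at least one vertex of $P$ that corresponds to $q$.

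The first statement then follows at once: by Corollary~\ref{cor:nontetra}, a non-tetramonohedron $Q$ has two distinct vertices $q,q'$ with curvature $\neq 180^\circ$, and the corners they produce are two vertices of $P$ corresponding to distinct vertices of $Q$. For the ``moreover'' part I read ``glued to the corresponding vertex without extra angle from $P$'' as saying that a single corner of $P$ already realizes the full cone angle $A(q)$, which by the local picture above happens exactly when $q$ is a leaf of $T$. Since $T$ is a tree on at least two nodes it has at least two leaves $\ell_1,\ell_2$; and when $Q$ has no vertex of curvature $180^\circ$, each leaf copy has angle $A(\ell_i)=360^\circ-\cur{\ell_i}\neq 180^\circ$, hence is a genuine vertex of $P$ glued to $\ell_i$ with its entire cone angle and no contribution from elsewhere on $P$.

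I expect the main obstacle to be making the opening step fully rigorous, namely the claim that $q$ appears $\deg_T(q)$ times on $\partial P$ with angles summing to $A(q)$; this rests on the geometry of cutting a cone along its radial geodesics and flattening, together with the convention that every cut ends at a vertex of curvature below $360^\circ$. Once this angle bookkeeping is established, the counting arguments and the appeal to Corollary~\ref{cor:nontetra} are routine.
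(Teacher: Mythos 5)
Your proposal is correct and follows essentially the same route as the paper: the first claim is obtained by combining Corollary~\ref{cor:nontetra} with the observation that the angles of the copies of a vertex $q$ on $\partial P$ sum to the co-curvature $360^\circ-\cur{q}<360^\circ$, so they cannot all be $180^\circ$ unless there is a single copy of angle $\neq 180^\circ$; the ``moreover'' part uses the spanning tree of cuts and its two leaves exactly as the paper does. Your explicit dichotomy on $\deg_T(q)$ is a slightly more detailed bookkeeping of the same argument, not a different approach.
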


\begin{proof}
By Corollary \ref{cor:nontetra}, $Q$ has at least two vertices $q$ and $q'$ of curvature not equal to $180^\circ$.
Then, by Lemma \ref{lem:boundary}, $q$ and $q'$ correspond to distinct vertices of $P$.
Now consider the set $S_q$ of vertices of $P$ that are glued together to form $q$.
Then, since the curvature of $q$ is not equal to $180^\circ$ and less than $360^\circ$,
at least one of the elements in $S_q$ has an angle not equal to $180^\circ$.
Thus $q$ produces at least one vertex on $\partial P$ of angle not equal to $180^\circ$.
We have another vertex on $\partial P$ produced by $q'$ by the same argument.

Now we assume that $Q$ has no vertex of curvature $180^\circ$.
By Theorem \ref{th:spanning}, the set of cut lines forms a spanning tree of the vertices of $Q$.
Then each leaf of the tree corresponds to a vertex of $Q$, and this vertex forms a vertex of $P$ since 
the curvature is not $180^\circ$. Since any tree has at least two leaves, we have the lemma.
\qed
\end{proof}

\subsection{Outline of Algorithms}

Lemma \ref{lem:non180} plays an important role in this paper except Section \ref{sec:tetra}.
For each (regular) polyhedron except tetramonohedron, we can use at least two vertices $p_i$ and $p_{i'}$ of $P$ such that
these vertices can fold to two vertices $q_j$ and $q_{j'}$ without any extra vertices of $P$, respectively.
That is, intuitively, we can start gluing at the vertices $p_i$ and $p_{i'}$ of $P$ to 
fold $q_j$ and $q_{j'}$ of $Q$ and each gluing can be done by zipping from the vertex locally.
In the context of stamping, the outline of the algorithms can be described as follows.

\begin{algorithm}[h]
 \caption{Common outline of our folding algorithm (except tetramonohedron)}
 \label{alg:common}
 \SetKwInOut{Input}{Input}
 \SetKwInOut{Output}{Output}
 \Input{A polygon $P=(p_0,p_1,\ldots,p_{n-1},p_0)$ and a convex polyhedron $Q$}
 \Output{All ways of folding $P$ to $Q$ (if one exists)}
  Let $\{q_0,\ldots,q_{m-1}\}$ be the set of vertices of $Q$\;
 \ForEach{pair of two vertices $\{p_{i},p_{i'}\}$ of $P$}{
   \ForEach{vertex $q_j$ of $Q$}{
       Check if $Q$ is reachable from $p_i$ to $p_{i'}$ on $P$ by stamping $Q$ 
         so that $p_i$ and $p_{i'}$ correspond to $q_j$ and $q_{j'}$, respectively,
         for some $q_{j'}$ with $j\neq j'$ on $Q$\;
       Check if $P$ is a net of $Q$ by folding and gluing $P$ based on the partition of $P$ by stamping of $Q$\;
   }
 }
\end{algorithm}
The stamping operation and the check of the gluing are key aspects in the algorithm.
We describe the details of each of them.

\subsection{Stamping}
\label{sec:stamping}

When we fold $P$ to $Q$, we have two viewpoints depending on which we focus on.
One viewpoint is that we fix $Q$ in 3D space and ``wrap'' $Q$ by folding $P$ on it. 
The other is that we put $P$ on a plane and ``roll'' $Q$ on it.
The second idea is called \emph{stamping} and used to give 
a characterization of a net of a regular tetrahedron in \cite{Akiyama2007}.
In \cite{Akiyama2007}, Akiyama rolls a regular tetrahedron on a plane as a \emph{stamper} 
and obtains a tiling by the stamping. 
The key property of the stamping in \cite{Akiyama2007} is that
a regular tetrahedron has the same direction and position
when it returns to the original position, no matter what the route is.
Therefore, the cut lines of any net on the surface of a regular tetrahedron 
tile plane, or the net tiles plane based on $180^\circ$ rotation symmetry.

In \cite{MHU2020}, the authors extend the idea of stamping to the box folding problem.
When the stamper is a box, the positions and directions are changed by the rolling, 
however, all faces on the plane are always orthogonal,
 or each edge of a face is always parallel to one of the $xy$-axes
 defined by the first face on the plane.

In this paper, we extend the idea to the general polyhedra.
We first give the details of the stamping.
It consists of two steps.

In the first step, we put $Q$ on $P$. 
Precisely, we put the vertex $q_0=(0,0,0)$ of $Q$ on the vertex $p_0=(0,0)$ of $P$ on the plane.
We adjust their relative angle suitably, which will be discussed later for each case.
(In fact, this ``first position'' is the first crucial point.)
We assume that the face $f_0$ of $Q$ is put on the plane.
(The local coordinate of $q_0$ on $f_0$ is $(f_0;0,0)$.)
We consider the intersection of $f_0$ and $P$. In general, the intersection is not connected;
it is a set of simple polygons. Among them, we define a simple polygon $F_0$ by a simple polygon that contains $p_0=q_0$.
When two or more simple polygons share the point $p_0$ in the intersection, we choose any one of them.
This $F_0$ is said to be the \emph{initial region} of $P$.
From $F_0$, we obtain a sequence of regions $F_i$ with $i>0$ by stamping.

Now we consider the intersection of two boundaries $\partial F_0 \cap \partial P$.
We classify each edge $e={q_i,q_{i'}}$ of $f_0$ to two groups.
We first note that by Lemma \ref{lem:boundary}, if $q_i$ and $q_{i'}$ should be 
outside of $P$ or on $\partial P$. Otherwise, we cannot fold $Q$ by $P$ from this first position.
If $e \cap (\partial F_0 \cap \partial P)$ contains no point in $P\setminus \partial P$, 
we say that $e$ is \emph{closed}. Intuitively, the face $f_0$ contains no more points in $P$ beyond the
closed edge $e$, and hence we do not need to roll $Q$ on the edge $e$ to extend the net.
Otherwise, we say that $e$ is \emph{open}. 
For each open edge $e$ on $F_0$, the net should be extended beyond $e$, and
hence we roll $Q$ along $e$ from $F_0$ to obtain the next face.
We note that $e \cap (\partial F_0 \cap \partial P)$ can consist of two or more line segments.
Let $e_1$ and $e_2$ be two distinct line segments in $e \cap (\partial F_0 \cap \partial P)$.
Then we obtain two different faces $F_1$ and $F_2$ sharing $e_1$ and $e_2$ with $F_0$, respectively.
After rolling $Q$ along $e_0$, we have the following claim: $F_1\cap F_2=\emptyset$.
Otherwise, $P$ should contain a hole surrounded by $F_0,F_1,F_2$,
which contradicts the assumption that $P$ is a simple polygon. 
Therefore, for each line segment in $e \cap (\partial F_0 \cap \partial P)$ for each edge $e$ of $f_0$,
we obtain a new set of faces $F_1,F_2,\ldots$ surrounding $F_0$.

For the notational simplicity, we number the faces in $f_i\cap P$ in this manner.
That is, from $F_0$, we give the index of each intersection $f_i\cap P$ as $F_1,F_2,\ldots,$
in the breadth-first search (BFS) manner.
We here define the distance $\dist(F_0,F_i)$ for each $i>0$ by 
the number of rollings made to reach to $F_i$ in the shortest way.
That is, $\dist(F_0,F_0)=0$ and $\dist(F_0,F_i)=\min_{i'}\dist(F_0,F_{i'})+1$,
where $F_{i'}$ is a region that sharing an edge $e$ with $F_i$.
Intuitively, $F_i$ is obtained by rolling $Q$ from $F_{i'}$ along the edge $e$.
By the property of the BFS, we have $i'<i$ for them.

Aside the numbering, the implementation of the stamping is simpler when we apply the 
depth-first search (DFS) manner rather than the BFS manner.
That is, starting from $F_0$, as the second step of the stamping, 
we repeat rolling $Q$ as long as we can along open edges,
and back to the last unvisited open edge when we get stuck.
Therefore, hereafter, we will describe the algorithm in DFS style,
while the indices of regions $F$ are numbered in BFS manner,
and these indices are supposed to be precomputed in the same time bound of the algorithm.

In the stamping, the following observation is important:
\begin{obs}
\label{obs:out}
Let $P$ be a net of $Q$. When we perform the stamping of $Q$ on $P$ starting from the initial region $F_0$, 
and it is the right place and the right direction to fold $Q$, no vertex of $Q$ is placed inside of $P$.
\end{obs}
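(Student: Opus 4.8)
The plan is to exploit the fact that a correct stamping is nothing but the inverse of a genuine folding of $P$ onto $Q$. First I would observe that, under the hypothesis that the rolling is at the right place and in the right direction, each region $F_i$ lying in $P$ is a face of $Q$ laid flat, and the family $\{F_i\}$ reassembles the development of $Q$ that equals the net $P$. Thus every time a vertex $v$ of $Q$ is stamped at a point $p$ of the plane, the corner of the laid-out face at $p$ is, in this development, a point that the folding map carries to $v$. In other words, a placement of $v$ at a point $p\in P$ exhibits $p$ as a point of the net whose image under the folding is the vertex $v$.

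With this correspondence in hand, the core is a one-line contradiction via Lemma~\ref{lem:boundary}. Suppose a vertex $v$ of $Q$ were placed at an interior point $p$ of $P$. Because $Q$ is convex, $\cur{v}>0$, so the co-curvature $360^\circ-\cur{v}$ at $v$ is strictly less than $360^\circ$; on the other hand an interior point of $P$ is surrounded by a full flat angle of $360^\circ$. Hence $v$ cannot be the image of any interior point of $P$. This is precisely Lemma~\ref{lem:boundary}, which states that every vertex of $Q$ appears on $\partial P$, and it contradicts the assumption that $p$ is an interior point of $P$. Therefore $p$ lies on $\partial P$ or outside $P$, but never in the interior.

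Next I would dispose of placements that fall strictly outside $P$: there the stamped face simply protrudes past the boundary and the protruding corner corresponds to no point of the net, so such placements are harmless. Consequently a stamped vertex can only land on $\partial P$---exactly where the cut tree of Theorem~\ref{th:spanning} resides---or outside $P$, which is the assertion.

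The hard part will be the rigorous identification ``correct stamping $=$ inverse folding'', i.e.\ making the phrase ``the right place and the right direction'' precise enough that the stamping and the folding agree as isometries on every $F_i$. In particular one must rule out that an interior placement of $v$ is an artifact of an inconsistent rolling rather than a genuine gluing. The clean way to settle this is the angle count already implicit above: were $v$ seated at an interior point $p$, the faces of $Q$ meeting at $v$ would cover only the co-curvature $360^\circ-\cur{v}<360^\circ$ around $p$, so the remaining angle at $p$ would have to be supplied by other regions $F_j$, producing either a multivalued folding or a self-overlap---both impossible since $P$ is a non-overlapping net. Once this identification is secured, the curvature argument through Lemma~\ref{lem:boundary} closes the proof immediately.
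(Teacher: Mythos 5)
Your proposal is correct and rests on exactly the same argument as the paper: a vertex of the convex polyhedron $Q$ has positive curvature, so the total face angle around it is less than $360^\circ$, and hence it cannot correspond to an interior point of $P$ (where the flat angle is $360^\circ$), contradicting the assumption that $P$ is a net of $Q$. The extra material on identifying stamping with inverse folding and on placements outside $P$ is a harmless elaboration of what the paper treats as immediate.
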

\begin{proof}
If a vertex $q$ is put inside of $P$, 
$q$ cannot be folded from $P$ since it has curvature less than $360^\circ$. 
It contradicts to the assumption that $P$ is a net of a convex polyhedron $Q$.
\qed
\end{proof}

Let $\calF=\{F_0,F_1,\ldots\}$ be the set of regions starting from the initial region $F_0$.
Now we define a graph $T(P,Q,F_0)=(\calF,E)$ be a contact graph of $\calF$. 
That is, $\{F_i,F_j\}\in E$ if and only if they share an open edge $e$ such that
$F_j$ is stamped by rolling $Q$ from $F_i$ along the edge $e$ or vice versa.
Then the following lemma states that the graph $T(P,Q,F_0)$ is a well-defined tree
if $P$ is a net of $Q$ and $F_0$ is the right initial region.
\begin{lemma}
\label{lem:tree}
Let $P$ be a net of $Q$. 
We assume that we perform the stamping of $Q$ on $P$ starting from an initial region $F_0$,
and it is the right place and the right direction to fold $Q$.
Then the contact graph $T(P,Q,F_0)$ is a tree.
\end{lemma}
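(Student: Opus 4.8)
The plan is to establish the two defining properties of a tree separately: that $T(P,Q,F_0)$ is connected and that it is acyclic. Connectivity is immediate from the construction. Every region $F_i$ with $i>0$ is produced by rolling $Q$ across an open edge from an already-generated region, so by induction on the order in which the regions appear, each $F_i$ is joined to $F_0$ by a path in $T(P,Q,F_0)$. The entire content of the lemma therefore lies in ruling out cycles.

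For the acyclicity I would pass to the planar picture. Since $P$ is a net, the stamping decomposes $P$ into the regions $\calF=\{F_0,F_1,\ldots\}$, and I would regard this as a planar subdivision $\mathcal{S}$ of the simple polygon $P$ whose $2$-cells are the regions $F_i$, whose edges are the maximal open edges (creases, interior to $P$) together with the maximal closed edges (on $\partial P$), and whose vertices are their endpoints. First I would verify that each $F_i$ is a topological disk, so that Euler's formula $V-E+F=2$ applies to $\mathcal{S}$ with $F=|\calF|+1$ (counting the outer face). Writing $V=V_{\partial}+V_{\mathrm{int}}$ and $E=E_{\partial}+E_{\mathrm{int}}$ for the boundary and interior contributions, and using that $\partial P$ is a single simple closed curve and hence carries equally many vertices and edges ($V_{\partial}=E_{\partial}$), Euler's relation collapses to $E_{\mathrm{int}}=|\calF|-1+V_{\mathrm{int}}$. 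Since the edges of $T(P,Q,F_0)$ are exactly the interior creases, the vanishing $V_{\mathrm{int}}=0$ would give $E_{\mathrm{int}}=|\calF|-1$, and combined with connectivity this forces $T(P,Q,F_0)$ to be a tree. (The same count simultaneously excludes two regions meeting along two distinct creases, since that would leave the connected contact graph with strictly fewer than $|\calF|-1$ edges.)

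It therefore remains to prove that $\mathcal{S}$ has no vertex in the interior of $P$, and this is where I expect the real work to be. An interior vertex of $\mathcal{S}$ is a point of $\mathrm{int}(P)$ at which creases terminate or meet; traced back through the folding it is a point of $Q$ at which three or more stamped faces come together, hence a vertex of $Q$. Because each crease is a portion of an edge of $Q$, and an edge of $Q$ has only vertices of $Q$ as its endpoints, a crease endpoint lying in $\mathrm{int}(P)$ can only be a vertex of $Q$; here I would invoke the standing assumption that every cut terminates at a point of curvature less than $360^\circ$, which guarantees that a crease never stops in the middle of an edge at an interior point, but only at a vertex of $Q$ or where it crosses $\partial P$. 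Thus any interior vertex of $\mathcal{S}$ corresponds to a vertex of $Q$ placed in the interior of $P$, which is exactly what Observation \ref{obs:out} forbids for a correct stamping, so $V_{\mathrm{int}}=0$. The main obstacle, deserving the most care, is precisely this geometric claim: I must argue cleanly that creases cannot cross one another in $\mathrm{int}(P)$ except at shared endpoints, and that every interior meeting point of regions is a genuine vertex of $Q$ of positive curvature, while checking that the harmless degenerate configurations, namely collinear creases and flat points lying in the interior of a single face, do not produce subdivision vertices at all.
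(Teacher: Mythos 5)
Your proof is correct in outline but follows a genuinely different route from the paper's. The paper argues acyclicity directly by contradiction: it identifies the two combinatorial ways a cycle could arise --- two stamped regions $F_i$ and $F_j$ colliding ($F_i\cap F_j\neq\emptyset$), or a single region reachable by rolling from two distinct earlier regions --- and in each case concludes that either the regions would enclose a hole (contradicting the simplicity of $P$) or a vertex of $Q$ would land in the interior of $P$ (contradicting Lemma~\ref{lem:boundary} and Observation~\ref{obs:out}). Your Euler-characteristic count packages that case analysis into the single identity $E_{\mathrm{int}}=\msize{\calF}-1+V_{\mathrm{int}}$ and reduces everything to $V_{\mathrm{int}}=0$, which you correctly derive from Observation~\ref{obs:out} together with the standing convention that cuts terminate only at vertices of positive curvature (so a crease can end in $\mathrm{int}(P)$ only at a vertex of $Q$). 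Both proofs ultimately rest on the same two geometric facts --- $P$ has no holes, and no vertex of $Q$ lies interior to $P$; yours uses the first implicitly through $V-E+F=2$ for a disk and the second explicitly. What your route buys is a cleaner and more quantitative conclusion ($T(P,Q,F_0)$ has exactly $\msize{\calF}-1$ edges, and no two regions can share two distinct creases); what it costs is that you must justify up front that the stamped regions genuinely tile $P$ without overlap, since otherwise $\mathcal{S}$ is not a subdivision and $F=\msize{\calF}+1$ fails. The paper treats overlap as one of the failure modes to be excluded \emph{inside} the proof, whereas your setup presupposes its absence. That presupposition does follow from the hypothesis that the stamping starts at the right place and direction (each face placement then agrees with the folding map by induction along the rolls, and the folding map is injective on $\mathrm{int}(P)$ because $P$ is a net), but you should state this explicitly rather than fold it into the phrase ``the stamping decomposes $P$.''
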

\begin{proof}
By Observation \ref{obs:out}, each region $F_i$ with $i>0$ is defined by some rolling.
Thus the contact graph is trivially connected.
To show that $T(P,Q,F_0)$ is acyclic by contradictions,
we consider two cases that may produce a cycle in the contact graph.

The first case is that a region $F_i$ from $F_{i'}$ by rolling and the other region $F_j$ from $F_{j'}$
have a collision; that is, $F_i\cap F_j\neq\emptyset$ in $P$.
In this case, we have $i'<i$ and $j'<j$. We assume that this pair $(i',j')$ is the minimum in
the lexicographical ordering. (Note that $i'=j'$ is possible.)
Without loss of generality, we assume that $i<j$.
The contact graph induced by $\{F_0,F_1,\ldots,F_{j-1}\}$ is a tree that contains $F_{i'}$, $F_{j'}$, and $F_{i}$, 
and the one induced by $\{F_0,F_1,\ldots,F_{j}\}$ has a cycle by adding $F_j$, which overlaps with $F_i$.
In this case, if $F_i$, $F_j$, and other regions surround a space,
it contradicts that $P$ is a simple polygon with no hole.
On the other hand, when $F_i$ and $F_j$, and other regions share a point
(for example, $F_{i'}=F_{j'}$, $F_i$ and $F_j$ share a vertex of $Q$),
it means that a vertex of $Q$ is put inside of $P$, which contradicts Lemma \ref{lem:boundary}.

The second case is that a region $F_k$ is obtained from $F_{k'}$ and $F_{k''}$ with $k>k'>k''$.
That is, $F_k$ and $F_{k'}$ share an edge $e'$, $F_k$ and $F_{k''}$ share an edge $e''$, and 
$e'\neq e''$ with $k'\neq k''$. In this case, we can use a similar argument in the first case
by considering as $F_{k'}=F_{i'}$, $F_{k''}=F_{j'}$, and $F_{k}=F_{i}=F_{j}$.
That is, we have a hole in $P$ or a vertex of $Q$ is inside of $P$.

By arguments above, we can conclude that $T(P,Q,F_0)$ forms a tree 
if $P$ is a net of $Q$ and $F_0$ is the right initial region.
\qed
\end{proof}

Hereafter, we assume that $T(P,Q,F_0)$ is a tree (otherwise, the algorithm rejects the input) rooted at $F_0$.
Therefore, we can uniquely define the parent $F_i$ for each region $F_j$ in $\calF$ (except $F_0$),
and it is easy to observe that $F_j$ is obtained by rolling $Q$ from $F_i$ along 
an open edge $e$ shared by $F_i$ and $F_j$.
Thus we obtain the following theorem:
\begin{theorem}
\label{th:stamping}
Let $P$ be a net of $Q$. 
We assume that we perform the stamping of $Q$ on $P$ starting from an initial region $F_0$,
and it is the right place and the right direction to fold $Q$.
Let $\calF$ be the set of regions obtained by the stamping of $Q$.
Then the contact graph $T(P,Q,F_0)=(\calF,E)$ is a tree, 
and it can be computed in $O(\msize{\calF}n)$ time in general.
When $Q$ is a regular polyhedron, the running time is reduced to $O(\msize{\calF}+n)$ time.
\end{theorem}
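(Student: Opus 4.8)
The tree assertion requires no fresh work: it is exactly Lemma~\ref{lem:tree}, which already shows that whenever $P$ is a net of $Q$ and $F_0$ is the correct initial region, the contact graph $T(P,Q,F_0)$ is connected and acyclic. So the substance of the statement is the two running-time bounds, and the plan is to organize the proof around how $\calF$ and the edge set $E$ are actually produced during the stamping. I would run the stamping in the DFS manner described above, maintaining for each region $F_i$ the placement (position and orientation on the plane) of the face of $Q$ stamped onto it; given the parent region and the open edge along which we roll, this placement is one rigid motion, hence $O(1)$.

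For the general bound $O(\msize{\calF}n)$, the only nontrivial per-region cost is computing $F_i$ itself, namely clipping the stamped face -- a convex polygon with $O(1)$ edges -- against the simple polygon $P$ and keeping the connected components of the intersection. Since $P$ has $n$ edges, I would clip each edge of $P$ against the $O(1)$ supporting half-planes of the face and assemble the components in $O(n)$ time; the classification of each face edge as open or closed (using Lemma~\ref{lem:boundary}, which forces every vertex of $Q$ to lie outside $P$ or on $\partial P$) fits in the same scan. As there are at most $\msize{\calF}$ placements and the adjacencies in $E$ are recorded as we roll, the total is $O(\msize{\calF}n)$.

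The improvement to $O(\msize{\calF}+n)$ for a regular polyhedron comes from refusing to re-scan $P$ once per region. Here every stamped face is a unit regular polygon, and for the tetrahedron, cube, octahedron, and icosahedron the placements populate a fixed unit regular tiling (triangular or square), so locating a point costs $O(1)$ and stepping from a cell to its neighbor across a crossing costs $O(1)$. The plan is then to compute \emph{once} the overlay of $\partial P$ with this grid: locate each of the $n$ vertices of $P$ in $O(1)$, and trace each edge $p_ip_{i+1}$ cell by cell, spending $O(1)$ at each grid edge it crosses to emit the crossing point and advance to the adjacent cell. Reading the regions $F_i$ and their open-edge adjacencies off the resulting planar subdivision then takes time linear in its size.

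The crux is to bound that overlay size by $O(\msize{\calF}+n)$. First I would invoke Lemma~\ref{lem:traverse}: each edge of $P$ meets only $O(1)$ grid cells per unit of its length, so $\partial P$ makes $O(L_P)$ crossings with grid edges. Since the edge of $Q$ is normalized to $1$, the area of $P$ equals the (constant) area of $Q$, so only $O(1)$ cells lie fully inside $P$; every cell met by $\partial P$ has unit diameter and therefore carries $O(1)$ boundary length and contributes at least one region to $\calF$, which gives $L_P=O(\msize{\calF})$ and hence crossing count $O(\msize{\calF})$. Adding the $n$ vertices of $P$ yields overlay size $O(\msize{\calF}+n)$. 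The hard part will be the dodecahedron: regular pentagons do not tile the plane, so there is no genuine lattice in which to do $O(1)$ point location, and the placed faces need not fit together. I expect to handle this by carrying the current placement along the DFS and determining each successor placement from the finite rotation group of $Q$ (keeping every step $O(1)$), while the crossing bound survives because each region still sits inside a single unit face of constant diameter and complexity; Observation~\ref{obs:out} (no vertex of $Q$ interior to $P$) together with the tree structure of Lemma~\ref{lem:tree} rules out a super-linear proliferation of crossings.
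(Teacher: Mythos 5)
Your proposal follows the paper's own proof in essentially every respect: the tree claim is delegated to Lemma~\ref{lem:tree}, the general bound comes from an $O(n)$ intersection computation per stamped face over $\msize{\calF}$ faces, and the regular-polyhedron speedup comes from touching each vertex and edge of $P$ only a constant number of times during the incremental DFS --- your version is in fact considerably more detailed than the paper's one-sentence justification of the $O(\msize{\calF}+n)$ bound, particularly in isolating the dodecahedron as the case with no ambient lattice. One small inaccuracy: a unit-diameter cell need not carry $O(1)$ length of $\partial P$ (many short edges of $P$ can crowd into a single cell), so the intermediate claim $L_P=O(\msize{\calF})$ should really be $L_P=O(\msize{\calF}+n)$, obtained by charging $O(1)$ boundary length to each region plus $O(1)$ to each vertex of $P$ lying in it; since you already carry the $+n$ term into the final overlay bound, the conclusion is unaffected.
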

\begin{proof}
From $F_0$, we traverse and construct the tree $T(P,Q,F_0)$ and obtain the sequence of 
regions $\calF=(F_0,F_1,F_2,\ldots)$ incrementally (in the DFS or BFS manner).
At a region $F_i$, we roll $Q$ along an open edge and 
put a face $f$ of $Q$ on $P$, then we obtain the next region $F_{i'}$.
In order to compute $F_{i'}$ that is the intersection of a face $f$ of $Q$ and $P$, 
it takes $O(n)$ time to traverse the vertices in $P$.
The construction part of the tree $T(P,Q,F_0)$ requires time proportional to $\msize{\calF}$.
Therefore, the running time of $T(P,Q,F_0)$ is $O(\msize{\calF}n)$ time in general.

When $Q$ is a regular polyhedron and each face $f$ is a fixed regular polygon,
by performing DFS incrementally on $T(P,Q,F_0)$,
we can construct it so that each vertex of $P$ is touched constant time.
Then the running time can be reduced to $O(\msize{\calF}+n)$ time.
\qed
\end{proof}

The size $\msize{\calF}$ will be discussed in each case, which depends on the polyhedron $Q$.
During the stamping, each region $F_i$ is stamped by a face of $Q$.
At that time, each vertex $v$ of the polygonal region $F_i$ can also obtain
the corresponding local coordinate $(f_i;x(v),y(v))$ on $Q$.
Therefore, in the next phase, we can use the local coordinate of each vertex of a region $F_i$ in a constant time.

By the stamping of $Q$, the boundary $\partial P$ of the polygon $P=(p_0,p_1,\ldots,p_{n-1},p_n=p_0)$ is 
also partitioned into line segments which are intersections of faces of $Q$.
Precisely, each point $p$ on $\partial P$ is (1) an original vertex $p_i$ for some $i$ or (2) an internal point of an edge $p_{i}p_{i+1}$
from the viewpoint of $\partial P$, and
(a) a vertex $q_j$ for some face of $Q$,
(b) an internal point of an edge of $Q$, or
(c) an internal point of a face of $Q$.
We consider $\partial P$ as a new polygon $P'=(p'_0,p'_1,\ldots,p'_{n'-1},p'_{n'}=p'_0)$,
where $p'_{i}$ is a vertex in cases (1), (b), or (c).
Intuitively, when $P'$ is a net of $Q$, each vertex of $P'$ is either a crease point of $P$ which should be folded to form $Q$
or a corner of $P$ of angle not equal to $180^\circ$.
As shown in Lemma \ref{lem:non180}, if $Q$ is not a tetramonohedron,
we have at least two vertices on $P'$ that directly (i.e., without extra angle) correspond to the vertices of $Q$.
During the stamping, we can obtain $P'$ and the set of such vertices without extra computation time within a constant factor.
Let $S$ be the set of the vertices of $P'$, and call them \emph{gluing points} of $P$.

\subsection{Check of gluing}
\label{sec:glue-check}

%% \begin{figure}\centering
%% \includegraphics[width=0.7\linewidth]{./}
%% \caption{Gluing the regions of $P$ on a face of $Q$}
%% \label{fig:glue}
%% \end{figure}

In this phase, we check if we can fold $Q$ from $P$ by folding along the crease lines given in the first stamping phase.
It may seem to be the first phase is enough. However, we have not yet checked 
if some regions in $\calF$ cause overlap on a face of $Q$. 
In other words, we have to check each face of $Q$ is made by a certain set of 
regions of $P$ by gluing without overlap or hole.

This can be done by checking the new polygon $P'=(p'_0,p'_1,\ldots,p'_{n'-1},p'_{n'}=p'_0)$, which is constructed by the stamping.
By Theorem \ref{th:spanning}, when $P$ is an unfolding of $Q$, the set $T$ of cut lines on $Q$ forms a spanning tree.
Therefore, each line segment $\ell$ in $T$ appears twice as $\ell'$ and $\ell''$ on $P'$
(except its endpoints) and the pairs of $\ell'$ and $\ell''$ form a nest structure on $P'$.
This nest structure is discussed in \cite[Sec.~25.2.1]{DemaineORourke2007} for a dynamic programming formulation of
checking of all ways of edge-to-edge gluings.
Their algorithm checks all ways of edge-to-edge gluing based on dynamic programming.
On the other hand, in our glue checking, we can find each pair of line segments in
the polygon $P'=(p'_0,p'_1,\ldots,p'_{n'-1},p'_{n'}=p'_0)$ by using their local coordinates.
Precisely, for the given polygon $P'$, each point $p'_i$ has its local coordinate $(f;x,y)$
(or it has a list of local coordinates if it corresponds to a vertex of $Q$ or a point on an edge of $Q$).

\begin{theorem}
\label{th:glue}
Let $P'=(p'_0,p'_1,\ldots,p'_{n'-1},p'_{n'}=p'_0)$ be the polygon given by the stamping.
Then the gluing check of $P'$ that asks if $P'$ can fold to $Q$ can be done in $O(n')$ time.
\end{theorem}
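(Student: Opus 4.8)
The plan is to show that the gluing check reduces to matching up pairs of boundary segments on $P'$ that fold together, and verifying that each such matched pair is consistent (equal length, correct relative orientation) while the matching itself forms the required nested structure. By Theorem~\ref{th:spanning}, when $P'$ is an unfolding of $Q$ the cut set $T$ is a spanning tree of the vertices of $Q$, so each cut line $\ell$ of $T$ is traversed exactly twice as we walk around $\partial P'$, contributing two segments $\ell'$ and $\ell''$. The key observation I would exploit is that the stamping phase has already equipped every vertex $p'_i$ of $P'$ with its local coordinate $(f;x,y)$ on $Q$ (and a short list of local coordinates when $p'_i$ lands on an edge or vertex of $Q$). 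Hence for each segment of $\partial P'$ we know exactly which edge of $Q$ it maps onto, and two boundary segments must be glued precisely when they carry the same local-coordinate image on $Q$.

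First I would walk once around $P'=(p'_0,\ldots,p'_{n'})$ and, using the precomputed local coordinates, label each boundary segment $p'_ip'_{i+1}$ by the edge of $Q$ onto which it is stamped, together with the orientation induced along that edge. This pass is $O(n')$. Next I would verify that the segments pair up consistently: each edge of $Q$ that is cut must receive exactly two boundary segments of $P'$, and those two segments must have equal length and opposite orientations along the shared edge of $Q$, so that gluing them is geometrically valid. Because the local coordinates are already available at each $p'_i$ in constant time, pairing a segment with its partner is a constant-time lookup rather than a search, and checking all pairs is again $O(n')$.

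The remaining point, and the main technical step, is to confirm that a correct pairwise gluing actually produces $Q$ with no overlap and no gap, i.e.\ that the matched pairs form the nested (non-crossing) structure described in \cite[Sec.~25.2.1]{DemaineORourke2007}. I would argue that the tree property of $T(P,Q,F_0)$ from Lemma~\ref{lem:tree} already guarantees that the regions tile each face of $Q$ without overlap, so what the gluing check must add is only that the boundary identifications are consistent: total angle glued at each point equals the co-curvature of the corresponding point of $Q$, and the two copies of each cut edge close up. Since the pairing is realized by equality of local coordinates, the nest structure is implicit in the order the local coordinates appear along $\partial P'$, and I can certify it with a single stack-based sweep: push a segment when its edge of $Q$ is first seen and pop when its partner appears, rejecting if the stack discipline is violated. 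This sweep is linear in $n'$.

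The hard part will be arguing rigorously that checking only the \emph{local} consistency of each glued pair, together with the tree property of the stamping, is \emph{sufficient} to certify global foldability to $Q$ — that no subtle long-range overlap on a face of $Q$ escapes the local checks. I expect to handle this by invoking Observation~\ref{obs:out} and Lemma~\ref{lem:tree}: since no vertex of $Q$ lies interior to $P$ and the contact graph is a tree, every face of $Q$ is covered by regions of $\calF$ that are assembled edge-to-edge without cycles, so face-interior overlaps are excluded a priori and only the boundary identifications remain to be verified. Granting this, the whole procedure is a constant number of linear passes over $P'$, giving the claimed $O(n')$ running time.
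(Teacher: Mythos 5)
Your overall strategy --- recover the pairing of boundary segments of $P'$ from their images on $Q$ and certify the nesting with a stack sweep --- is a genuinely different route from the paper, which instead simulates the folding directly: it keeps $P'$ in a doubly linked list and repeatedly ``zips'' from a gluing point of $S$, shortening or merging boundary edges so that each step corresponds to deleting a leaf of the cut tree. As written, however, your argument has two concrete gaps.

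First, the pairing step. Boundary segments of $P'$ do not map to \emph{edges} of $Q$: they map to cut segments that in general lie in the interiors of faces of $Q$ (the cut set is a spanning tree of the \emph{vertices} of $Q$, not a subset of its edges). So the partner of a segment is the unique other segment of $\partial P'$ carrying the same pair of local coordinates in reversed orientation, and these coordinates are arbitrary real pairs. Your claim that finding the partner ``is a constant-time lookup rather than a search'' is therefore unjustified in the real RAM model --- there is no $O(1)$ dictionary over real coordinates, and sorting them would already cost $O(n'\log n')$. The stack sweep could in principle absorb the pairing (compare the current segment against the top of the stack), but then your ``pairing'' pass and your ``nesting'' pass are the same pass, and the proposal never makes this precise. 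The paper's zipping sidesteps the issue entirely: starting from a gluing point, the partner of the segment being glued is always its neighbor on the current, partially glued boundary, so it really is found in constant time.

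Second, the sufficiency argument. You assert that Lemma~\ref{lem:tree} ``already guarantees that the regions tile each face of $Q$ without overlap, so face-interior overlaps are excluded a priori.'' This is exactly backwards: the paper opens Section~\ref{sec:glue-check} by warning that after stamping ``we have not yet checked if some regions in $\calF$ cause overlap on a face of $Q$.'' The tree property controls overlaps of the regions inside $P$, not overlaps of their images when placed back on $Q$, and excluding the latter is the entire purpose of the gluing check. A correct version of your plan would have to prove that a complete, consistent boundary gluing (every cut segment doubly covered with matching lengths, angles summing to $360^\circ$ along cut interiors and to the co-curvature at vertices of $Q$) forces the induced locally isometric map onto $Q$ to be a degree-one covering and hence an isometry; you neither state nor establish this. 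The paper obtains correctness more cheaply by observing that each zipping step is the inverse of removing a leaf from the spanning tree $T$ of Theorem~\ref{th:spanning}.
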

\begin{proof}
We maintain $P'$ by a doubly linked list; each item corresponds to $p'_i$
which stores $p'_{i-1}$, $p'_{i+1}$, $\msize{p'_i p'_{i-1}}$, $\msize{p'_i p'_{i+1}}$, and $\angle p'_{i-1} p'_{i} p'_{i+1}$.
We also inherit the set $S$ of gluing points in $P'$ from the stamping step
 such that each gluing point $p$ has the angle $A$ at the point on $\partial P'$
      that corresponds to a vertex $q$ of $Q$ with $\cur{q}=360^\circ-A$.
Intuitively, each gluing point should be zipped up from the point to fold $Q$ from $P'$.
In other words, each gluing point corresponds to a leaf of the spanning tree $T$ of the vertices of $Q$ to cut and unfold to $P$.

Therefore, our gluing starts from any gluing point.
We pick up arbitrary gluing point $p'_i$ in $S$.
Then glue two line segments $p'_{i-1}p'_i$ and $p'_ip'_{i+1}$ from $p'_i$.
Now we have two cases.
The first case is $\msize{p'_i p'_{i-1}} \neq \msize{p'_i p'_{i+1}}$.
Without loss of generality, we assume that $\msize{p'_i p'_{i-1}} < \msize{p'_i p'_{i+1}}$.
In this case, we glue up to $p'_{i-1}$ from $p'_i$.
That is, we remove $p'_i$ from $S$ and $P'$, replace $p'_i$ in $p'_{i-1}$ by $p'_{i+1}$
with length $\msize{p'_{i} p'_{i+1}}-\msize{p'_{i-1} p'_{i}}$,
and replace $p'_i$ in $p'_{i+1}$ by $p'_{i-1}$ with length $\msize{p'_{i} p'_{i+1}}-\msize{p'_{i-1} p'_{i}}$.
The angle $\angle p'_{i-2} p'_{i-1} p'_{i}$ in $p'_{i-1}$ is replaced by
$\angle p'_{i-2} p'_{i-1} p'_{i+1}=\angle p'_{i-2} p'_{i-1} p'_{i}+180^\circ$.
The second case is $\msize{p'_i p'_{i-1}} = \msize{p'_i p'_{i+1}}$.
In this case, we glue $p'_{i-1} p'_i$ and $p'_{i} p'_{i+1}$ completely,
remove $p'_i$ from $S$ and $P'$, and we merge $p'_{i-1}$ and $p'_{i+1}$ to a new vertex $p'$ on $P'$ so that
$\msize{p'_{i-2} p'}=\msize{p'_{i-2} p'_{i-1}}$, $\msize{p',p'_{i+2}}=\msize{p'_{i+1} p'_{i+2}}$, 
and the angle at $p'$ is given by $\angle p'_{i-2} p' p'_{i+2}=\angle p'_{i} p'_{i-1} p'_{i-2}+ \angle p'_{i} p'_{i+1} p'_{i+2}$.
If the angle at $p'$ is equal to the co-curvature at the corresponding point on $Q$
(which can be done by checking the corresponding local coordinate in a constant time),
$p'$ is put into $S$ as a new gluing point. Otherwise, $p'$ is just a new vertex on $P'$.

From the viewpoint of the spanning tree $T$, each gluing step from a gluing point in $S$ corresponds to
removal of one leaf from $T$. Therefore, it is not difficult to see that the algorithm correctly works.
Each gluing process decreases at least one edge from $P'$ in a constant time.
Therefore, the running time of the gluing check is $O(n')$ time.
\qed
\end{proof}

\section{Regular Tetrahedron and Tetramonohedron}
\label{sec:tetra}

In this section, we give a pseudo-polynomial time algorithm for solving the folding problem for a
tetramonohedron $Q$.
We can solve the folding problem for a regular tetrahedron as a special case of it.
Precisely, the input of the folding problem in this section is $P$ and three lengths $a,b,c$ 
of an acute triangle of four congruent faces of $Q$.
To simplify, we denote by $Q(a,b,c)$ the tetramonohedron defined by the edge lengths.
By the Heron's formula, the surface area of $Q$ is given by $4\sqrt{s(s-a)(s-b)(s-c)}$, where $s=(a+b+c)/2$.
Thus, without loss of generality, we assume that the area of $P$ is equal to the surface area of $Q(a,b,c)$.
The reason why this special case is difficult is that 
each vertex of $Q(a,b,c)$ has curvature $180^\circ$ (Lemma \ref{lem:tetra}).
When the cut line ends at a vertex and $Q(a,b,c)$ is unfolded, the vertex makes $180^\circ$ on $\partial P$.
Thus we cannot find the vertex of $Q(a,b,c)$ on a straight line of $\partial P$ straightforwardly.
The mathematical characterization of a net of a tetramonohedron is done by a tiling, which is called
\emph{p2 tiling} in which rotation symmetry plays an important role. 
However, finding rotation centers is not easy since the point has curvature $180^\circ$.
In order to construct the pseudo-polynomial time algorithm, 
we use the notion of Conway condition \cite{Schattschneider1980}.

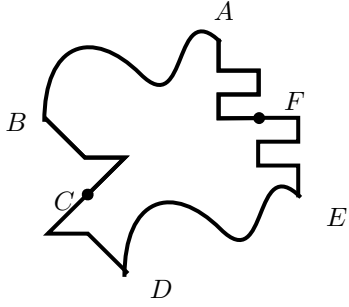
\begin{figure}\centering
% Graphic for TeX using PGF
% Title: /home/uehara/src/papers/poly-unfold/figure/Conway.dia
% Creator: Dia v0.97.3
% CreationDate: Wed Oct 14 17:14:58 2020
% For: uehara
% \usepackage{tikz}
% The following commands are not supported in PSTricks at present
% We define them conditionally, so when they are implemented,
% this pgf file will use them.
\ifx\du\undefined
  \newlength{\du}
\fi
\setlength{\du}{30\unitlength}
\begin{tikzpicture}
\pgftransformxscale{1.000000}
\pgftransformyscale{-1.000000}
\definecolor{dialinecolor}{rgb}{0.000000, 0.000000, 0.000000}
\pgfsetstrokecolor{dialinecolor}
\definecolor{dialinecolor}{rgb}{1.000000, 1.000000, 1.000000}
\pgfsetfillcolor{dialinecolor}
% setfont left to latex
\definecolor{dialinecolor}{rgb}{0.000000, 0.000000, 0.000000}
\pgfsetstrokecolor{dialinecolor}
\node[anchor=west] at (4.800000\du,11.200000\du){$A$};
% setfont left to latex
\definecolor{dialinecolor}{rgb}{0.000000, 0.000000, 0.000000}
\pgfsetstrokecolor{dialinecolor}
\node[anchor=west] at (2.200000\du,12.600000\du){$B$};
% setfont left to latex
\definecolor{dialinecolor}{rgb}{0.000000, 0.000000, 0.000000}
\pgfsetstrokecolor{dialinecolor}
\node[anchor=west] at (2.800000\du,13.600000\du){$C$};
% setfont left to latex
\definecolor{dialinecolor}{rgb}{0.000000, 0.000000, 0.000000}
\pgfsetstrokecolor{dialinecolor}
\node[anchor=west] at (4.000000\du,14.700000\du){$D$};
% setfont left to latex
\definecolor{dialinecolor}{rgb}{0.000000, 0.000000, 0.000000}
\pgfsetstrokecolor{dialinecolor}
\node[anchor=west] at (6.200000\du,13.800000\du){$E$};
% setfont left to latex
\definecolor{dialinecolor}{rgb}{0.000000, 0.000000, 0.000000}
\pgfsetstrokecolor{dialinecolor}
\node[anchor=west] at (5.675000\du,12.343800\du){$F$};
\pgfsetlinewidth{0.050000\du}
\pgfsetdash{}{0pt}
\pgfsetdash{}{0pt}
\pgfsetmiterjoin
\pgfsetbuttcap
{
\definecolor{dialinecolor}{rgb}{0.000000, 0.000000, 0.000000}
\pgfsetfillcolor{dialinecolor}
% was here!!!
}
\definecolor{dialinecolor}{rgb}{0.000000, 0.000000, 0.000000}
\pgfsetstrokecolor{dialinecolor}
\pgfpathmoveto{\pgfpoint{5.000000\du}{11.600000\du}}
\pgfpathcurveto{\pgfpoint{4.400000\du}{11.000000\du}}{\pgfpoint{4.600000\du}{12.600000\du}}{\pgfpoint{4.000000\du}{12.000000\du}}
\pgfpathcurveto{\pgfpoint{3.400000\du}{11.400000\du}}{\pgfpoint{2.800000\du}{11.600000\du}}{\pgfpoint{2.800000\du}{12.600000\du}}
\pgfusepath{stroke}
\pgfsetlinewidth{0.050000\du}
\pgfsetdash{}{0pt}
\pgfsetdash{}{0pt}
\pgfsetmiterjoin
\pgfsetbuttcap
{
\definecolor{dialinecolor}{rgb}{0.000000, 0.000000, 0.000000}
\pgfsetfillcolor{dialinecolor}
% was here!!!
}
\definecolor{dialinecolor}{rgb}{0.000000, 0.000000, 0.000000}
\pgfsetstrokecolor{dialinecolor}
\pgfpathmoveto{\pgfpoint{6.000000\du}{13.550000\du}}
\pgfpathcurveto{\pgfpoint{5.400000\du}{12.950000\du}}{\pgfpoint{5.600000\du}{14.550000\du}}{\pgfpoint{5.000000\du}{13.950000\du}}
\pgfpathcurveto{\pgfpoint{4.400000\du}{13.350000\du}}{\pgfpoint{3.800000\du}{13.550000\du}}{\pgfpoint{3.800000\du}{14.550000\du}}
\pgfusepath{stroke}
\pgfsetlinewidth{0.050000\du}
\pgfsetdash{}{0pt}
\pgfsetdash{}{0pt}
\pgfsetmiterjoin
\pgfsetbuttcap
{
\definecolor{dialinecolor}{rgb}{0.000000, 0.000000, 0.000000}
\pgfsetfillcolor{dialinecolor}
% was here!!!
{\pgfsetcornersarced{\pgfpoint{0.000000\du}{0.000000\du}}\definecolor{dialinecolor}{rgb}{0.000000, 0.000000, 0.000000}
\pgfsetstrokecolor{dialinecolor}
\draw (2.806250\du,12.550000\du)--(3.306250\du,13.050000\du)--(3.806250\du,13.050000\du)--(3.306250\du,13.550000\du);
}}
{\pgfsetcornersarced{\pgfpoint{0.000000\du}{0.000000\du}}\definecolor{dialinecolor}{rgb}{0.000000, 0.000000, 0.000000}
\pgfsetstrokecolor{dialinecolor}
\draw (2.806250\du,12.550000\du)--(3.306250\du,13.050000\du)--(3.806250\du,13.050000\du)--(3.306250\du,13.550000\du);
}\pgfsetlinewidth{0.050000\du}
\pgfsetdash{}{0pt}
\pgfsetmiterjoin
\pgfsetbuttcap
\definecolor{dialinecolor}{rgb}{0.000000, 0.000000, 0.000000}
\pgfsetfillcolor{dialinecolor}
\pgfpathmoveto{\pgfpoint{3.306250\du}{13.550000\du}}
\pgfpathcurveto{\pgfpoint{3.288572\du}{13.532322\du}}{\pgfpoint{3.288572\du}{13.496967\du}}{\pgfpoint{3.306250\du}{13.479289\du}}
\pgfpathcurveto{\pgfpoint{3.323928\du}{13.461612\du}}{\pgfpoint{3.359283\du}{13.461612\du}}{\pgfpoint{3.376961\du}{13.479289\du}}
\pgfpathcurveto{\pgfpoint{3.394638\du}{13.496967\du}}{\pgfpoint{3.394638\du}{13.532322\du}}{\pgfpoint{3.376961\du}{13.550000\du}}
\pgfpathcurveto{\pgfpoint{3.359283\du}{13.567678\du}}{\pgfpoint{3.323928\du}{13.567678\du}}{\pgfpoint{3.306250\du}{13.550000\du}}
\pgfusepath{fill}
\definecolor{dialinecolor}{rgb}{0.000000, 0.000000, 0.000000}
\pgfsetstrokecolor{dialinecolor}
\pgfpathmoveto{\pgfpoint{3.306250\du}{13.550000\du}}
\pgfpathcurveto{\pgfpoint{3.288572\du}{13.532322\du}}{\pgfpoint{3.288572\du}{13.496967\du}}{\pgfpoint{3.306250\du}{13.479289\du}}
\pgfpathcurveto{\pgfpoint{3.323928\du}{13.461612\du}}{\pgfpoint{3.359283\du}{13.461612\du}}{\pgfpoint{3.376961\du}{13.479289\du}}
\pgfpathcurveto{\pgfpoint{3.394638\du}{13.496967\du}}{\pgfpoint{3.394638\du}{13.532322\du}}{\pgfpoint{3.376961\du}{13.550000\du}}
\pgfpathcurveto{\pgfpoint{3.359283\du}{13.567678\du}}{\pgfpoint{3.323928\du}{13.567678\du}}{\pgfpoint{3.306250\du}{13.550000\du}}
\pgfusepath{stroke}
\pgfsetlinewidth{0.050000\du}
\pgfsetdash{}{0pt}
\pgfsetdash{}{0pt}
\pgfsetmiterjoin
\pgfsetbuttcap
{
\definecolor{dialinecolor}{rgb}{0.000000, 0.000000, 0.000000}
\pgfsetfillcolor{dialinecolor}
% was here!!!
{\pgfsetcornersarced{\pgfpoint{0.000000\du}{0.000000\du}}\definecolor{dialinecolor}{rgb}{0.000000, 0.000000, 0.000000}
\pgfsetstrokecolor{dialinecolor}
\draw (3.843750\du,14.506300\du)--(3.343750\du,14.006300\du)--(2.843750\du,14.006300\du)--(3.343750\du,13.506300\du);
}}
\pgfsetlinewidth{0.050000\du}
\pgfsetdash{}{0pt}
\pgfsetdash{}{0pt}
\pgfsetmiterjoin
\pgfsetbuttcap
{
\definecolor{dialinecolor}{rgb}{0.000000, 0.000000, 0.000000}
\pgfsetfillcolor{dialinecolor}
% was here!!!
{\pgfsetcornersarced{\pgfpoint{0.000000\du}{0.000000\du}}\definecolor{dialinecolor}{rgb}{0.000000, 0.000000, 0.000000}
\pgfsetstrokecolor{dialinecolor}
\draw (4.975000\du,11.556300\du)--(4.975000\du,11.956300\du)--(5.475000\du,11.956300\du)--(5.475000\du,12.256300\du)--(4.975000\du,12.256300\du)--(4.975000\du,12.556300\du)--(5.531250\du,12.553100\du);
}}
{\pgfsetcornersarced{\pgfpoint{0.000000\du}{0.000000\du}}\definecolor{dialinecolor}{rgb}{0.000000, 0.000000, 0.000000}
\pgfsetstrokecolor{dialinecolor}
\draw (4.975000\du,11.556300\du)--(4.975000\du,11.956300\du)--(5.475000\du,11.956300\du)--(5.475000\du,12.256300\du)--(4.975000\du,12.256300\du)--(4.975000\du,12.556300\du)--(5.531250\du,12.553100\du);
}\pgfsetlinewidth{0.050000\du}
\pgfsetdash{}{0pt}
\pgfsetmiterjoin
\pgfsetbuttcap
\definecolor{dialinecolor}{rgb}{0.000000, 0.000000, 0.000000}
\pgfsetfillcolor{dialinecolor}
\pgfpathmoveto{\pgfpoint{5.531250\du}{12.553100\du}}
\pgfpathcurveto{\pgfpoint{5.531394\du}{12.578100\du}}{\pgfpoint{5.506538\du}{12.603243\du}}{\pgfpoint{5.481538\du}{12.603387\du}}
\pgfpathcurveto{\pgfpoint{5.456539\du}{12.603531\du}}{\pgfpoint{5.431395\du}{12.578675\du}}{\pgfpoint{5.431252\du}{12.553675\du}}
\pgfpathcurveto{\pgfpoint{5.431108\du}{12.528676\du}}{\pgfpoint{5.455964\du}{12.503532\du}}{\pgfpoint{5.480963\du}{12.503388\du}}
\pgfpathcurveto{\pgfpoint{5.505963\du}{12.503245\du}}{\pgfpoint{5.531106\du}{12.528100\du}}{\pgfpoint{5.531250\du}{12.553100\du}}
\pgfusepath{fill}
\definecolor{dialinecolor}{rgb}{0.000000, 0.000000, 0.000000}
\pgfsetstrokecolor{dialinecolor}
\pgfpathmoveto{\pgfpoint{5.531250\du}{12.553100\du}}
\pgfpathcurveto{\pgfpoint{5.531394\du}{12.578100\du}}{\pgfpoint{5.506538\du}{12.603243\du}}{\pgfpoint{5.481538\du}{12.603387\du}}
\pgfpathcurveto{\pgfpoint{5.456539\du}{12.603531\du}}{\pgfpoint{5.431395\du}{12.578675\du}}{\pgfpoint{5.431252\du}{12.553675\du}}
\pgfpathcurveto{\pgfpoint{5.431108\du}{12.528676\du}}{\pgfpoint{5.455964\du}{12.503532\du}}{\pgfpoint{5.480963\du}{12.503388\du}}
\pgfpathcurveto{\pgfpoint{5.505963\du}{12.503245\du}}{\pgfpoint{5.531106\du}{12.528100\du}}{\pgfpoint{5.531250\du}{12.553100\du}}
\pgfusepath{stroke}
\pgfsetlinewidth{0.050000\du}
\pgfsetdash{}{0pt}
\pgfsetdash{}{0pt}
\pgfsetmiterjoin
\pgfsetbuttcap
{
\definecolor{dialinecolor}{rgb}{0.000000, 0.000000, 0.000000}
\pgfsetfillcolor{dialinecolor}
% was here!!!
{\pgfsetcornersarced{\pgfpoint{0.000000\du}{0.000000\du}}\definecolor{dialinecolor}{rgb}{0.000000, 0.000000, 0.000000}
\pgfsetstrokecolor{dialinecolor}
\draw (5.468750\du,12.550000\du)--(5.968750\du,12.550000\du)--(5.968750\du,12.850000\du)--(5.468750\du,12.850000\du)--(5.468750\du,13.150000\du)--(5.968750\du,13.150000\du)--(5.968750\du,13.550000\du);
}}
\end{tikzpicture}
\caption{A Conway tiling.}
\label{fig:Conway}
\end{figure}

Let $P$ be a polygon and $p,q$ be two points in $\partial P$. 
Then $[p,q]$ denotes the part of boundary $\partial P$ starting from $p$ to $q$ in counterclockwise.
Then, a polygon $P$ is called \emph{Conway tile} if it has six points 
$A$, $B$, $C$, $D$, $E$, and $F$ in $\partial P$ in counterclockwise such that
(1) $[A,B]$ can be moved to $[D,E]$ by translation $\tau$ with $\tau(A)=E$ and $\tau(B)=D$,
(2) each of $[B,C]$, $[C,D]$, $[E,F]$, and $[F,A]$ is rotation symmetry with respect to the midpoint of it, and
(3) at least three of these six points are distinct (\figurename~\ref{fig:Conway}).
In \cite{AkiyamaMatsunaga2020}, they prove that $P$ is a net of a tetramonohedron
if and only if $P$ is a Conway tile.
Using this characterization, we show the following theorem:

\begin{theorem}
\label{th:tetra} 
Let $P$ be a polygon of $n$ vertices, and $a,b,c$ be positive real numbers.
We can decide whether $P$ can fold to a tetramonohedron $Q(a,b,c)$ or not in $O(L(L+n)n^2)$ time.
\end{theorem}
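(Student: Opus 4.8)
The plan is to bypass the stamping framework of Algorithm~\ref{alg:common} (which relies on Lemma~\ref{lem:non180} and therefore fails here, since every vertex of a tetramonohedron has curvature $180^\circ$ by Lemma~\ref{lem:tetra}) and instead detect a Conway tile of the correct metric type. By the characterization of Akiyama and Matsunaga~\cite{AkiyamaMatsunaga2020}, $P$ is a net of some tetramonohedron if and only if $\partial P$ admits a Conway decomposition into six arcs $[A,B],[B,C],\dots,[F,A]$ as in \figurename~\ref{fig:Conway}. First I would arc-length parametrize $\partial P$ and record each vertex together with its turning angle. The four midpoints of the centrally symmetric arcs $[B,C]$, $[C,D]$, $[E,F]$, $[F,A]$ are the four $180^\circ$-rotation centers of the tiling and, by Lemma~\ref{lem:tetra}, are exactly the four vertices of the induced tetramonohedron; the translation $\tau$ carrying $[A,B]$ onto the reverse of $[D,E]$ encodes the pair of boundary pieces that are glued to one another by a pure translation. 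Consequently the face triangle, hence its three side lengths, is determined by any fixed decomposition, and the task reduces to deciding whether some valid decomposition yields side lengths $\{a,b,c\}$.

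The search exploits the fact that a polygonal arc is $180^\circ$-symmetric about its midpoint if and only if its sequence of edge lengths and turning angles is a palindrome, the center lying either at an interior vertex of the arc or in the relative interior of its middle edge. I would therefore anchor two of the six break points at candidate vertices of $P$, giving $O(n^2)$ combinatorial ways for the decomposition to be supported on corners of $P$. The essential difficulty is that, all curvatures being $180^\circ$, a rotation center or break point may lie in the relative interior of a straight portion of $\partial P$ rather than at a corner; the single remaining degree of freedom is the position of such a hidden vertex along an edge, which I sweep over the $O(L)$ integral offsets allowed by the (pseudo-polynomial) lengths. This produces $O(L\,n^2)$ candidate configurations.

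For each configuration I would test the Conway condition and the metric in $O(L+n)$ time: the translation match $[A,B]\leftrightarrow[D,E]$ and the four palindrome tests are carried out by walking the paired arcs in lockstep, comparing vertex to vertex and matching the intervening collinear runs, whose total description has size $O(L+n)$. I then form the candidate face triangle from the four rotation centers and accept iff (i) its side lengths are the multiset $\{a,b,c\}$, (ii) the triangle is acute, so that $Q(a,b,c)$ is a genuine tetramonohedron as required by Lemma~\ref{lem:tetra}, and (iii) the surface area agrees with the assumed equality between the area of $P$ and that of $Q(a,b,c)$. Summing over all candidates gives the stated $O(L(L+n)n^2)$ running time.

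The hard part is exactly the hiddenness of the vertices of $Q$: since each has curvature $180^\circ$, a rotation center need not meet any corner of $P$, so the decomposition cannot be pinned down purely combinatorially from the $n$ vertices, and one must argue that sweeping the $O(L)$ discretized placements of the hidden vertex, together with the palindrome analysis, exhausts all Conway decompositions. Care is also needed for the degenerate cases in which some of $A,\dots,F$ coincide, which are admitted by clause~(3) of the definition of a Conway tile, and to confirm that a detected tile reproduces the \emph{specified} $Q(a,b,c)$ rather than merely some tetramonohedron, a point the explicit side-length and acuteness checks settle.
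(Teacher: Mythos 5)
There is a genuine gap at exactly the point you flag as ``the hard part,'' and the patch you propose for it does not work. You need to locate the break points $A,\dots,F$ and the four rotation centers when they lie in the relative interiors of edges of $P$, and you propose to ``sweep over the $O(L)$ integral offsets allowed by the (pseudo-polynomial) lengths.'' But the theorem allows $a,b,c$ to be arbitrary positive \emph{reals}, and $P$ is given in the real RAM model, so the position of a hidden vertex along an edge of $P$ is a continuous parameter with no integer grid to discretize against; an integral-offset sweep can simply miss every valid Conway decomposition. Relatedly, you assert $O(n^2)$ choices from ``anchoring two of the six break points at candidate vertices of $P$,'' but you never prove that any valid decomposition must place two break points at corners of $P$ --- and since every vertex of a tetramonohedron has curvature $180^\circ$ (Lemma~\ref{lem:tetra}), it is entirely possible that no rotation center and no break point is a corner of $P$. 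Without a structural lemma pinning these continuous unknowns to a finite, computable candidate set, the enumeration is not exhaustive and the algorithm is not correct.

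The paper closes precisely this gap with machinery your proposal omits. It classifies the cut tree $T(Q)$ into X/Y/U/F/H shapes and shows (Lemma~\ref{lem:property}) that, except in two degenerate configurations, at least two leaves satisfy Property~$(*)$, which forces the corresponding tetrahedron vertices to be the exact \emph{midpoints} of edges of $P$ --- a finite $O(n^2)$ candidate set requiring no sweep. The $O(L)$ factor then comes not from discretizing positions but from Lemma~\ref{lem:lattice}: once two midpoints are fixed, the triangular lattice generated by $(a,b,c)$ must contain both as grid points, which restricts to $O(L)$ integer combinations $k_a\vec{v_a}+k_b\vec{v_b}$. The two exceptional configurations (where the decomposition degenerates toward a rolling belt and midpoints are not available) are handled by the separate Algorithms~\ref{alg:type2} and~\ref{alg:type3}, which exploit parallel equal-length edge pairs and a forced gluing propagation instead. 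Your proposal would need analogues of all three of these ingredients --- in particular a replacement for the midpoint argument --- before the claimed $O(L(L+n)n^2)$ enumeration could be justified.
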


Without loss of generality, we assume that the area of $P$ is equal to the surface area of $Q(a,b,c)$.
We prove the theorem by following the proof of the characterization in \cite{AkiyamaMatsunaga2020}.
Let four vertices of a given tetramonohedron $Q$ be $v_1,v_2,v_3,v_4$.
As shown in Theorem \ref{th:spanning}, the set of cut lines of $Q$ to unfold $P$ 
forms a tree $T(Q)$ spanning all four vertices $v_1,v_2,v_3,v_4$.
Thus $T(Q)$ contains at least two leaves of degree 1.
We first note that no point $p$ on the surface of $Q$ cannot make 
a leaf in $T(Q)$ except on four vertices $v_1,v_2,v_3,v_4$ since the curvature at $p$ is $360^\circ$ 
(such a point is made by a ``redundant'' cut and reduced on $P$ when it is unfolded).
That is, every leaf of $T(Q)$ corresponds to one of the four vertices of $Q$.
The vertices of degree greater than 2 are also defined in the same manner as the graph theory.
We now define the vertex of degree 2 more carefully.
Let $p$ be a point in $T(Q)$ such that two line segments $\ell$ and $\ell'$ are incident to $p$.
Then, we have two angles around $p$ between $\ell$ and $\ell'$.
The point $p$ is a \emph{vertex} $p$ on $T(Q)$ of degree 2 
if and only if one of these two angles is not equal to $180^\circ$.
In other words, each point on $T(Q)$ is not considered as a vertex of degree 2 when 
the point is surrounded by two $180^\circ$ angles.
Intuitively, two consecutive line segments in a kinked line on the surface of $Q$ share
a ``vertex'' of $T(Q)$ of degree 2 if they make non-$180^\circ$ angle at the point. 
\begin{obs}
\label{obs:deg2}
Each vertex $v_i$ of $Q$ is always a vertex of $T(Q)$.
\end{obs}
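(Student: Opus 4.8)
The plan is to argue directly from the curvature at $v_i$. By Lemma~\ref{lem:tetra}, since $Q$ is a tetramonohedron every vertex has curvature $180^\circ$, so the co-curvature (the total face angle) at $v_i$ equals $360^\circ-180^\circ=180^\circ$. Since $T(Q)$ is a spanning tree of the vertices of $Q$ by Theorem~\ref{th:spanning}, the point $v_i$ lies on $T(Q)$ and has degree at least $1$. I would then split into cases according to the degree of $v_i$ in $T(Q)$.

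If $v_i$ is a leaf (degree $1$) or has degree at least $3$, then $v_i$ is a vertex of $T(Q)$ directly by the graph-theoretic part of the definition, so there is nothing to prove. The only case requiring the curvature argument is degree $2$: suppose exactly two cut segments $\ell$ and $\ell'$ of $T(Q)$ are incident to $v_i$. On the surface of $Q$ these two segments partition the cone angle around $v_i$ into the two angles appearing in the definition of a degree-$2$ vertex, say $\alpha$ and $\beta$; since the whole cone angle at $v_i$ is the co-curvature, we have $\alpha+\beta=180^\circ$.

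Now I would observe that this forces at least one of $\alpha,\beta$ to differ from $180^\circ$: if one of them equalled $180^\circ$, the other would be $0^\circ$, forcing $\ell$ and $\ell'$ to coincide and contradicting that $v_i$ has degree $2$. Hence one of the two angles is not equal to $180^\circ$, which is exactly the condition for $v_i$ to be a vertex of $T(Q)$ of degree $2$. Combining the cases, $v_i$ is always a vertex of $T(Q)$.

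The subtle point to get right is the contrast with the behaviour on $\partial P$: a vertex of curvature $180^\circ$ can be laid out on a straight edge of the unfolding $P$, as already noted before Lemma~\ref{lem:tetra}, so one might worry that $v_i$ could be ``passed straight through'' and hence fail to register as a vertex. The argument above shows this cannot happen on $T(Q)$ itself, because the ambient cone angle at $v_i$ is only $180^\circ$ rather than $360^\circ$, and two distinct segments emanating from it can never both subtend a straight angle. This is the one place where the tetramonohedron hypothesis (via Lemma~\ref{lem:tetra}) is essential, and I expect it to be the only genuinely substantive step.
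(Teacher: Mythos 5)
Your proof is correct and follows essentially the same route as the paper's: the degree-$1$ and degree-$\ge 3$ cases are immediate from the definition, and for degree $2$ the paper likewise invokes the $180^\circ$ curvature at $v_i$, which you merely spell out in more detail (the two angles around $v_i$ sum to the co-curvature $180^\circ$, so they cannot both equal $180^\circ$). No gaps; your version just makes explicit the one-line justification the paper leaves implicit.
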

\begin{proof}
If $v_i$ has degree 1 or greater than 2, it is a vertex of $T(Q)$.
If it has degree 2, it should be a vertex of $T(Q)$ since the curvature of $v_i$ is $180^\circ$.
\qed\end{proof}

\begin{figure}\centering
\input{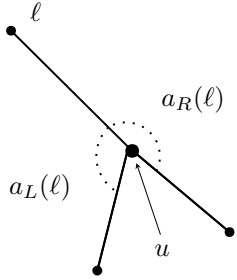}
\caption{The two angles $a_R(\ell)$ and $a_L(\ell)$ for a leaf $\ell$.}
\label{fig:angles}
\end{figure}

Now we focus on the spanning tree $T(Q)$ of $Q$;
every vertex $v_i$ of $Q$ is also a vertex of $T(Q)$, and every leaf of $T(Q)$ is a vertex of $Q$.
For this tree $T(Q)$, we introduce some notations.
For each leaf $\ell$ of $T(Q)$, the \emph{associated edge} $e(\ell)$ of $\ell$ 
is the unique edge incident to $\ell$. 
When $e(\ell)=\{u,\ell\}$, $u$ is the \emph{parent} of $\ell$.
On $T(Q)$, $a_R(\ell)$ and $a_L(\ell)$ are the angles made by 
$e(\ell)$ with its neighbor edges sharing the parent of $\ell$ in clockwise and counterclockwise,
 respectively (see \figurename~\ref{fig:angles}).

A leaf $\ell$ of $T(Q)$ corresponds to a vertex of a tetramonohedron $Q$.
Therefore, the leaf is mapped to a unique point in $\partial P$. 
However, since $Q$ is a tetramonohedron, 
the curvature at $\ell$ is $180^\circ$ by Lemma \ref{lem:tetra}, 
hence it is an inner point on an edge (or a line segment) of $P$.
Thus we need some tricks to find it for a given $P$.
By the definition of vertices of $T(Q)$, 
if the leaf $\ell$ satisfies the following property $(*)$,
$\ell$ is the midpoint of an edge:

\medskip 
\noindent{\sf Property $(*)$}: 
Both of $a_R(\ell)\neq 180^\circ$ and $a_L(\ell)\neq 180^\circ$.
\medskip 

If a leaf $\ell$ of $T(Q)$ has Property $(*)$, the neighbor points on $\partial P$ are vertices of $P$.
That is, for the parent $u$ of $\ell$, 
we have an edge (or straight line) $u'u''$ on $\partial P$ such that $u'$ and $u''$ are vertices of $P$
and they are glued together to make the vertex $u$ on $Q$ and 
the midpoint of the line segment $u'u''$ in $\partial P$ corresponds to $\ell$ on $T(Q)$.
In other words, when the curvatures at $u'$ and $u''$ are both not equal to $180^\circ$, 
it is easy to find the corresponding vertex $\ell$ of $T(Q)$, or a vertex $v_i$ on $Q$ of 
curvature $180^\circ$.

Since $T(Q)$ is a tree, it has at least two leaves.
We first consider the case that two leaves have Property $(*)$ as Type 1.
We have two exceptional cases, which will be handled as Type 2 and Type 3 later.

\paragraph{Type 1:}
In Type 1, the $T(Q)$ has at least two leaves that satisfy Property $(*)$.
Let $\ell$ be a leaf satisfying Property $(*)$ and $v$ be the parent of $\ell$.
Then it is easy to see that there is a line segment 
made from two copies of the edge $v \ell$ of $T(Q)$ on $\partial P$.
In other words, $\partial P$ has two consecutive vertices $p_i$ and $p_{i+1}$ such that
the midpoint of the line segment $p_i p_{i+1}$ will be folded to $\ell$, 
which is one of $v_1,v_2,v_3,v_4$, and $p_i$ and $p_{i+1}$ are glued together to make 
the vertex $v$ on $Q$. 
Since $T(Q)$ has two leaves, we can obtain two of $v_1,v_2,v_3,v_4$.
Thus we obtain the following algorithm:

\begin{algorithm}[h]
 \caption{Folding algorithm for Type 1}
 \label{alg:type1}
 \SetKwInOut{Input}{Input}
 \SetKwInOut{Output}{Output}
 \Input{A polygon $P=(p_0,p_1,\ldots,p_{n-1},p_0)$ and three positive numbers $a,b,c$}
 \Output{All ways of folding $P$ to a tetramonohedron $Q(a,b,c)$ in Type 1 (if one exists)}
 \ForEach{pair of two edges $\{e,e'\}$ of $P$}{
   take midpoints $m$ of $e$ and $m'$ of $e'$\;
   \ForEach{triangular lattice defined by $(a,b,c)$ having two grid points $m$ and $m'$}{
      Perform the stamping of $Q$ on $P$ along the lattice\;
      Check if $P$ is a net of $Q$ by folding and gluing $P$ based on the partition of $Q$ by stamping\;
   }
 }
\end{algorithm}

\begin{figure}\centering
\input{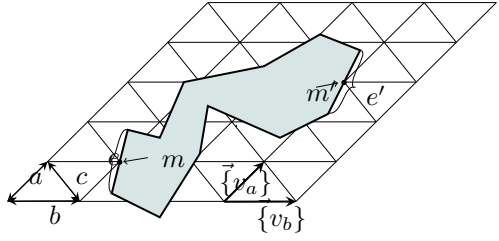}
\caption{A polygon $P$ on a lattice by $(a,b,c)$.}
\label{fig:lattice}
\end{figure}

Since the algorithm checks all combinations, 
the correctness of Algorithm \ref{alg:type1} is trivial:
It decides whether $P$ can fold to a tetramonohedron $Q$ when $T(Q)$ is in Type 1.
In order to evaluate running time of Algorithm \ref{alg:type1},
we show a technical lemma:
\begin{lemma}
\label{lem:lattice}
Let $P$ be a simple polygon $(p_0,p_1,\ldots,p_{n-1},p_0)$,
and $m, m'$ be two midpoints of two edges $e,e'$ of coordinates $(x,y)$ and $(x',y')$.
Let $Q(a,b,c)$ be a tetramonohedron having the surface area equal to the area of $P$.
Then the number of triangular lattice defined by $(a,b,c)$ having two grid points $m$ and $m'$ is $O(L)$.
\end{lemma}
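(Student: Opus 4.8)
The plan is to reduce the count to the number of lattice points lying on a fixed circle, and then bound that number by a simple packing argument.

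First I would fix all the geometry coming from $(a,b,c)$. Since the area of $P$ equals the surface area of $Q(a,b,c)$, the triangle with side lengths $a,b,c$ is determined up to congruence, and hence so is the triangular tiling it generates. Taking two edge vectors $\vec u_0,\vec v_0$ of this triangle sharing a common endpoint, the set of grid points of the monohedral triangle tiling is the lattice $\Lambda_0=\{\,i\vec u_0+j\vec v_0 : i,j \text{ integers}\,\}$: every neighbouring triangle is the $180^\circ$ rotation of its neighbour about an edge midpoint, so every vertex of the tiling is an integer combination of $\vec u_0,\vec v_0$. A concrete \emph{triangular lattice defined by $(a,b,c)$} is then a rigid placement $G=R_\theta\Lambda_0+\vec s$ of $\Lambda_0$, specified by a rotation angle $\theta$ and a translation $\vec s$ (allowing a reflection changes the count by at most a constant factor, so I ignore it).

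Next I would eliminate the translational freedom using $m$. Requiring $m$ to be a grid point forces $\vec s\equiv m \pmod{R_\theta\Lambda_0}$, so for each fixed $\theta$ the point set $G=m+R_\theta\Lambda_0$ is the \emph{unique} placement with rotation $\theta$ that contains $m$. Thus a lattice through $m$ is specified by its rotation alone, and the remaining condition $m'\in G$ becomes $m'-m\in R_\theta\Lambda_0$, i.e.\ $R_{-\theta}(m'-m)\in\Lambda_0$. Writing $d=\msize{m m'}$, the vector $R_{-\theta}(m'-m)$ always has length $d$, so as $\theta$ varies it traces the circle of radius $d$ about the origin; it lies in $\Lambda_0$ exactly when it coincides with one of the lattice points of $\Lambda_0$ on that circle, and each such lattice point $w$ determines the unique rotation taking $w$ to $m'-m$. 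Hence the number of admissible lattices is at most the number of points of $\Lambda_0$ at distance exactly $d$ from the origin.

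Finally I would bound this count. Let $\lambda$ be the length of a shortest nonzero vector of $\Lambda_0$; since the three sides $a,b,c$ are themselves lattice vectors, $\lambda$ is a positive constant determined by $(a,b,c)$, of order $\min\{a,b,c\}$. Any two distinct lattice points differ by a lattice vector and so are at distance at least $\lambda$; therefore the points of $\Lambda_0$ on the circle of radius $d$ are pairwise at least $\lambda$ apart, and a packing estimate on a circle of radius $d$ gives at most $O(d/\lambda)$ of them. Since $m,m'$ lie on $\partial P$ we have $d\le D_P<L/2$, and treating the geometric constant $\lambda$ of $Q$ as $O(1)$ (as elsewhere in the paper) yields the claimed bound $O(L)$. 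I expect the counting step to be the only real obstacle: one must argue that the lattice points at a fixed distance are genuinely sparse. The pairwise-distance argument above is what keeps this clean and avoids delicate number-theoretic estimates for representations by a binary quadratic form, while the remaining steps --- fixing the lattice shape, pinning the translation by $m$, and reducing $m'\in G$ to a point-on-a-circle condition --- are routine once a placement is parametrised by $(\theta,\vec s)$.
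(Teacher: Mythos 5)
Your proposal is correct and follows essentially the same route as the paper: both reduce the count to the number of lattice vectors $k_a\vec{v_a}+k_b\vec{v_b}$ of the triangle's edge lattice whose length equals $\msize{mm'}\le D<L/2$, each such vector fixing the rotation (and, with $m$ pinned as a grid point, the whole placement). The only difference is the last counting step --- the paper bounds the range of one integer coordinate ($-L/a\le k_a\le L/a$, with $k_b$ then determined up to two choices), while you use a minimum-separation packing bound on the circle of radius $\msize{mm'}$ --- and the two are interchangeable, both treating the fixed dimensions $a,b,c$ of $Q$ as constants.
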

\begin{proof}
An illustration of the desired lattice is depicted in \figurename~\ref{fig:lattice}.
That is, a feasible lattice for $m$ and $m'$ has grid points on $m$ and $m'$.
Then, we introduce two vectors $\vec{v_a}$ and $\vec{v_b}$ 
that span two edges of the unit lattice triangle of length $a$ and $b$, respectively.
Then, it is easy to see that $m$ and $m'$ are on grid points if and only if 
there are two integers $k_a$ and $k_b$ such that $\vec{mm'}=k_a \vec{v_a}+k_b \vec{v_b}$.
Then we have $-L/a \le k_a \le L/a$. Thus the number of possible $k_a$ is $O(L+n)$.
Once $k_a$ is fixed, we can compute if $k_b$ is a reasonable integer or not.
Therefore, 
we can conclude that the number of triangular lattices 
defined by $(a,b,c)$ having two grids $m$ and $m'$ is $O(L)$.
\qed\end{proof}

Now we show the running time:
\begin{lemma}
\label{lem:type1}
Algorithm \ref{alg:type1} runs in $O(L(L+n)n^2)$ time.
\end{lemma}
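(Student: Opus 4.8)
The plan is to bound the running time by multiplying the number of iterations of the two nested \textbf{foreach} loops of Algorithm \ref{alg:type1} by the cost of a single iteration (one stamping together with one gluing check). First I would count the iterations. The outer loop ranges over all unordered pairs $\{e,e'\}$ of edges of $P$, of which there are $O(n^2)$. For each fixed pair, having fixed the midpoints $m$ and $m'$, the inner loop ranges over all triangular lattices defined by $(a,b,c)$ that have $m$ and $m'$ as grid points; by Lemma \ref{lem:lattice} there are $O(L)$ of these. Hence the body of the inner loop is executed $O(L\,n^2)$ times in total, and it remains to show that each execution costs $O(L+n)$ time.

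The crux is to bound the size $\msize{\calF}$ of the region set produced by one stamping and the number $n'$ of vertices of the resulting polygon $P'$, both by $O(L+n)$. I would argue as follows. Since the area of $P$ equals the surface area of $Q(a,b,c)$, which amounts to four lattice triangles, at most a constant number of lattice triangles lie entirely inside $P$; every other region of $\calF$ must meet $\partial P$. The number of such boundary-meeting regions is controlled by how often the lattice edges cut $\partial P$, and this is bounded using Lemma \ref{lem:traverse}: an edge of $P$ of length $\ell$ passes through $O(\ell)$ faces, so summing over the $n$ edges of $P$ the boundary is subdivided into $O(L+n)$ pieces. Consequently $n'=O(L+n)$; and since each region of $\calF$ is either one of the $O(1)$ full interior triangles or a boundary component charged to a piece of $\partial P'$, we also obtain $\msize{\calF}=O(L+n)$.

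With these two bounds in hand, one iteration costs $O(L+n)$. The stamping runs in $O(\msize{\calF}+n)=O(L+n)$ time: although $Q(a,b,c)$ is not a regular polyhedron, each of its faces is a single fixed triangle, so the incremental DFS argument behind the reduced bound of Theorem \ref{th:stamping} applies verbatim, touching each vertex of $P$ only a constant number of times. The gluing check runs in $O(n')=O(L+n)$ time by Theorem \ref{th:glue}. Multiplying the iteration count by the per-iteration cost gives
\[
O(n^2)\cdot O(L)\cdot O(L+n)=O\bigl(L(L+n)n^2\bigr),
\]
as claimed.

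The step I expect to be the main obstacle is the second paragraph, namely making the $O(L+n)$ bound on $\msize{\calF}$ and $n'$ fully rigorous. The subtle point is that a single stamped triangle can intersect the (possibly non-convex) polygon $P$ in several disconnected components, so $\msize{\calF}$ is not merely the number of lattice cells hitting $P$. One must verify that all such components still touch $\partial P$ and are therefore accounted for by the $O(L+n)$ boundary pieces, while the contribution of the fully interior faces is separately pinned down to $O(1)$ by the fixed area of $P$.
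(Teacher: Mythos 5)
Your proposal is correct and follows essentially the same route as the paper's proof: $O(n^2)$ edge pairs, times $O(L)$ lattices per pair by Lemma~\ref{lem:lattice}, times an $O(L+n)$ cost per stamping-plus-gluing iteration via Theorems~\ref{th:stamping} and~\ref{th:glue}. The extra work you do in the second paragraph — bounding $\msize{\calF}$ and $n'$ by $O(L+n)$ through the boundary-crossing argument of Lemma~\ref{lem:traverse} — is detail the paper leaves implicit here but establishes by the same reasoning elsewhere (cf.\ Lemma~\ref{lem:rolling} and Theorem~\ref{th:upper-penta}), so it is a faithful, slightly more explicit version of the intended argument.
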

\begin{proof}
The number of pairs of two edges is $O(n^2)$.
The stamping of $Q$ on $P$ and gluing check of $P$ takes $O(L+n)$ time
as shown in Theorem \ref{th:stamping} and Theorem \ref{th:glue}.
By Lemma \ref{lem:lattice}, the number of possible triangular lattices for 
a given pair of grid points is $O(L)$. 
Thus Algorithm \ref{alg:type1} runs in $O(L(L+n)n^2)$.
\qed 
\end{proof}

We here give two other exceptional cases, 
and we show that any $T(Q)$ is in one of Types 1, 2, and 3 later.

\begin{figure}
\centering
\input{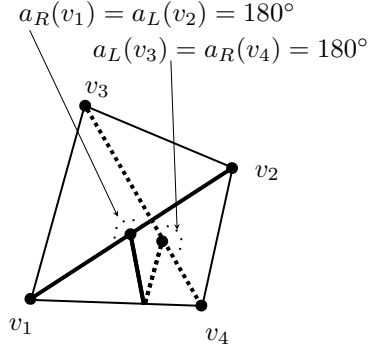}
\caption{Each leaf $v_i$ satisfies $a_R(v_i)=180^\circ$ or $a_L(v_i)=180^\circ$ in Type 2.}
\label{fig:type2}
\end{figure}

\paragraph{Type 2:}
In Type 2, the set of cut lines of $Q$ contains two independent line segments, say, $v_1 v_2$ and $v_3 v_4$.
When we first cut along these lines, we obtain a cylinder,
which is called the ``rolling belt'' in \cite{DemaineORourke2007}.
After that, the cylinder is cut, unfolded, and $P$ is obtained from it. 
If the last cut line(s) does not touch any of four vertices of $Q$, 
we can observe that for each leaf $\ell$, one of $a_R(\ell)$ and $a_L(\ell)$ is $180^\circ$ (\figurename~\ref{fig:type2}).
Therefore, there are two edges $e$ and $e'$ in $\partial P$ corresponding to $v_1 v_2$ and $v_3 v_4$. 
Thus $e$ is parallel to $e'$ on $P$ and the length of $\msize{e}=\msize{e'}=2\msize{v_1v_2}=2\msize{v_3v_4}$.
In this case, since a cylinder is obtained when $P$ is glued except $e$ and $e'$ to a tetramonohedron $Q$,
we can obtain (another) tetramonohedron $Q'$ when we fold along the crease line joining 
two midpoints of $e$ and $e'$. 
In other words, when $P$ is given, we can fold infinitely many distinct tetramonohedra.
Hence, without loss of generality, we assume that one endpoint of $e$ and another endpoint of $e'$ 
are vertices of curvature $180^\circ$ of a tetramonohedron $Q'$.
Thus we can determine if $P$ can fold to a tetramonohedron by checking if $P$ can fold to a cylinder by gluing except $e$ and $e'$.

\begin{algorithm}[h]
 \caption{Folding algorithm for Type 2}
 \label{alg:type2}
 \SetKwInOut{Input}{Input}
 \SetKwInOut{Output}{Output}
 \Input{A polygon $P=(p_0,p_1,\ldots,p_{n-1},p_0)$}
 \Output{A way of folding $P$ to a tetramonohedron in Type 2 (if one exists)}
 \ForEach{pair of two edges $\{e,e'\}$ of $P$}{
   \If{if $e=\{v_i,v_{i+1}\}$ and $e'=\{v_{j},v_{j+1}\}$ are parallel and $\msize{e}=\msize{e'}$}{
      Check if the path $(v_{i+1},\ldots,v_{j})$ can be glued to $(v_{j+1},\ldots,v_{i})$\;
      \lIf{Two paths are glued}{output ``Yes''}
   }
 }
\end{algorithm}

The matching of the paths $(v_{i+1},\ldots,v_{j})$ and $(v_{j+1},\ldots,v_{i})$
can be done by checking the following conditions:
$\angle v_{i-1}v_{i}v_{i+1}+\angle v_{i}v_{i+1}v_{i+2}=180^\circ$,
$\angle v_{j-1}v_{j}v_{j+1}+\angle v_{j}v_{j+1}v_{j+2}=180^\circ$,
$\angle v_{i+k}v_{i+k+1}v_{i+k+2}+\angle v_{j-k}v_{j-k+1}v_{j-k+2}=360^\circ$ for each $k=0,1,2,\ldots$,
$\msize{v_{i+k+1}v_{i+k+2}}=\msize{v_{j-k+1} v_{j-k}}$ for each $k=0,1,2,\ldots$,
and the number of vertices in $v_{i+1},\ldots,v_j$ and 
the number of vertices in $v_{j+1},\ldots,v_{i}$ are equal.
The correctness of Algorithm \ref{alg:type2} in this case is trivial.
\begin{lemma}
\label{lem:type2}
Algorithm \ref{alg:type2} runs in $O(n^3)$ time.
\end{lemma}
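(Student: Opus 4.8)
The plan is to read off the running time from the three nested cost components of Algorithm~\ref{alg:type2}: the number of edge pairs examined by the outer loop, the cost of the parallel-and-equal-length precheck inside the \textbf{if}, and the cost of the path-gluing check that it guards. Since $P$ has $n$ vertices and hence $n$ edges, the number of unordered pairs $\{e,e'\}$ is $O(n^2)$, which is exactly the work attributable to the outer \textbf{foreach} loop.

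First I would dispose of the precheck. Deciding whether $e=\{v_i,v_{i+1}\}$ and $e'=\{v_j,v_{j+1}\}$ are parallel and satisfy $\msize{e}=\msize{e'}$ uses only the coordinates of the four endpoints, so under the real RAM model it costs $O(1)$ per pair.

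The key step is to argue that, when the precheck succeeds, the gluing check of the two paths $(v_{i+1},\ldots,v_j)$ and $(v_{j+1},\ldots,v_i)$ runs in $O(n)$ time. This check consists of the two endpoint angle identities, together with, for each $k=0,1,2,\ldots$, one angle-sum condition $\angle v_{i+k}v_{i+k+1}v_{i+k+2}+\angle v_{j-k}v_{j-k+1}v_{j-k+2}=360^\circ$ and one length condition $\msize{v_{i+k+1}v_{i+k+2}}=\msize{v_{j-k+1}v_{j-k}}$, plus the test that the two paths contain equally many vertices. I would verify all of these in a single simultaneous traversal, advancing one index forward from $v_{i+1}$ and the other backward from $v_i$. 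Each individual angle or distance is an $O(1)$ computation from the stored coordinates, and the parameter $k$ advances at most as far as the shorter path, whose length is $O(n)$; the traversal halts as soon as the two walks meet or a condition fails. Hence the gluing check costs $O(n)$.

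Multiplying the $O(n^2)$ pairs by the per-pair cost $O(1)+O(n)=O(n)$ yields the claimed $O(n^3)$ bound. I expect the only genuine obstacle to lie in the middle step: one must confirm that the listed conditions really can be evaluated in one linear pass, i.e.\ that the indices $i+k$ and $j-k$ stay synchronized so that each condition is tested a constant number of times and the total number of conditions is bounded by the path length rather than by the number of pairs. Everything else is a routine product of the three bounds.
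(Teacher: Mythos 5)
Your proposal matches the paper's proof: both bound the outer loop by the $O(n^2)$ pairs of edges and argue that the matching of the two paths can be checked in linear time, giving $O(n^3)$ overall. Your account simply spells out the constant-time precheck and the synchronized two-pointer traversal in more detail than the paper does.
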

\begin{proof}
The number of pairs of two edges is $O(n^2)$.
The matching of two paths can be checked in linear time.
Thus Algorithm \ref{alg:type2} runs in $O(n^3)$ time.
\qed 
\end{proof}

\begin{figure}
\centering
\input{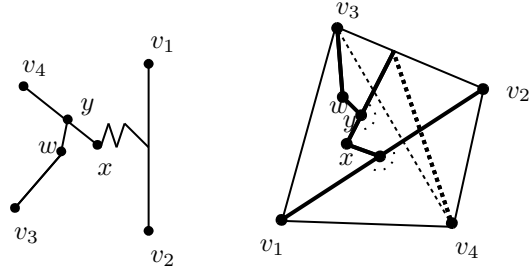}
\caption{Type 3: Only $v_3$ satisfies Condition $(*)$.}
\label{fig:type3}
\end{figure}

\paragraph{Type 3:}
In Type 3, the set of cut lines of $Q$ contains two independent line segments, 
say, $v_1 v_2$ and $x v_4$ with $x\neq v_3$. 
In this case, only $v_3$ satisfies Property $(*)$.
On $T(Q)$, we have the situation shown in \figurename~\ref{fig:type3};
$v_1$ and $v_2$ are joined by a straight line and there are three vertices
$x,y,w$ such that $v_4$ and $x$ are joined by a straight line,
$y$ is on the line segment $v_4 y$, and $v_3$ is joined to $y$ 
with some cut lines. (Note: $v_3$ can be $w$.)
Therefore, on $\partial P$, $v_3$ is the midpoint of an edge of $P$,
we have the same sequence of length from $v_3$ to both sides,
and 
when we find the first pair $e$ and $e'$ of $\msize{e}>\msize{e'}$,
then $e'=(w,y')$ and $e=(y'',v_4,y''',x)$, where $y',y'',y'''$ are 
the three vertices forming $y$ on $Q$. 
Hence we can find the point $v_4$ on $\partial P$ uniquely.
Thus we can determine if $P$ can fold to a tetramonohedron $Q$ in Type 3.

\begin{algorithm}[h]
 \caption{Folding algorithm for Type 3}
 \label{alg:type3}
 \SetKwInOut{Input}{Input}
 \SetKwInOut{Output}{Output}
 \Input{A polygon $P=(p_0,p_1,\ldots,p_{n-1},p_0)$}
 \Output{A way of folding $P$ to a tetramonohedron in Type 3 (if one exists)}
 \ForEach{an edge $e$ of $P$}{
   Take the midpoint of $e$ as $v_3$\;
   Find $v_4$ from $v_3$\;
   Glue from $v_3$ and $v_4$ and obtain the points $x$, $y$, and $w$\;
   Continue gluing from $y$ to the last edge\;
   Check if $P$ is a net of a tetramonohedron by comparing $\msize{v_3v_4}$ and the last unglued edge\;
 }
\end{algorithm}

In the last step, when the algorithm obtains an open end of a cylinder by cutting along the edge $v_1v_2$, 
it concludes that if $P$ can fold to a tetramonohedron if $2\msize{v_3 v_4}$ is equal to 
the length of the open end (or cycle).
The correctness of Algorithm \ref{alg:type3},
or if $P$ is a polygon that can fold to a tetramonohedron $Q$ when $T(Q)$ is in Type 3,
follows the arguments in \cite{AkiyamaMatsunaga2020}.
Thus we give the running time:
\begin{lemma}
\label{lem:type3}
Algorithm \ref{alg:type3} runs in $O(n^2)$ time.
\end{lemma}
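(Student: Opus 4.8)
The plan is to separate the running time into the number of outer-loop iterations and the cost of a single iteration, and to argue that each is $O(n)$. The outer loop of Algorithm~\ref{alg:type3} ranges over the \emph{edges} of $P$, of which there are exactly $n$; this is precisely what makes Type~3 cheaper than Type~2. In contrast to Type~2, once the midpoint of the chosen edge $e$ is fixed as $v_3$, the partner vertex $v_4$ is \emph{uniquely} determined by the structure of $T(Q)$ described above (the first length mismatch encountered while zipping away from $v_3$ pins down $v_4$). Hence there is no inner loop over a second edge, and it suffices to show that a single iteration runs in $O(n)$ time.

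First I would bound the step ``Find $v_4$ from $v_3$.'' Since $v_3$ is the midpoint of $e$, the rotation-symmetry structure of a Conway tile guarantees that the two boundary walks leaving $v_3$ carry equal successive edge-lengths until the first mismatch; finding $v_4$ therefore amounts to scanning $\partial P$ simultaneously in both directions, comparing the lengths of the edges encountered, and halting at the first pair $e,e'$ with $\msize{e}>\msize{e'}$. Each edge of $P$ is inspected at most once in this scan, so this step is $O(n)$. Next, the steps ``Glue from $v_3$ and $v_4$'' and ``Continue gluing from $y$'' are precisely the zipping procedure analysed in the proof of Theorem~\ref{th:glue}: maintaining $\partial P$ as a doubly linked list, every elementary gluing operation permanently deletes at least one segment from the working boundary and costs $O(1)$. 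Thus the number of gluing operations over the whole iteration is bounded by the number of boundary segments, which is $O(n)$, and the auxiliary points $x$, $y$, $w$ are identified within this same pass in $O(1)$ additional work each. The final check, comparing $2\msize{v_3 v_4}$ against the length of the last unglued open end, is $O(1)$.

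Combining these observations, each of the $n$ iterations costs $O(n)$, which yields the claimed $O(n^2)$ total running time. The only point that requires care — and hence the main obstacle — is verifying that the zipping sub-steps never reprocess an edge a super-constant number of times. This is guaranteed by the charging argument inherited from Theorem~\ref{th:glue}, in which each gluing step removes at least one edge from the working boundary, so no amortised blow-up can occur and the per-iteration bound remains linear.
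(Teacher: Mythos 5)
Your proposal is correct and follows essentially the same decomposition as the paper's proof: $O(n)$ outer iterations over the edges of $P$, with each iteration costing $O(n)$ for the linear scan locating $v_4$ plus the linear-time zipping check inherited from Theorem~\ref{th:glue}. You simply supply more detail on why each sub-step is linear, which is consistent with the paper's (terser) argument.
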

\begin{proof}
The number of edges of $P$ is $O(n)$.
The vertex $v_4$ can be found in linear time by just following both sides from $v_3$.
The gluing can be done in $O(n)$ time from $v_3$ and $v_4$.
Therefore, Algorithm \ref{alg:type3} runs in $O(n^2)$ time.
\qed 
\end{proof}

Before the proof of the main theorem, we give a technical lemma for Property $(*)$:
\begin{lemma}
\label{lem:property}
(1) For any leaf $\ell$, it satisfies Property $(*)$ if its parent $v$ has degree 2.
(2) If two leaves $\ell$ and $\ell'$ share their parent $v$ of degree 3
 and the angle between $e(\ell)$ and $e(\ell')$ is not $180^\circ$, 
 then at least one of $\ell$ and $\ell'$ satisfies Property $(*)$.
(3) When four leaves share their parent $r$, at least two leaves satisfy Property $(*)$.
\end{lemma}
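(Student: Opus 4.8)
The common engine for all three parts is an angular–sector count at the parent vertex. Writing $A(v)$ for the total surface angle of $Q$ around a point $v$ of $T(Q)$, the curvature discussion gives $A(v)=180^\circ$ when $v$ is one of the four vertices of $Q$ and $A(v)=360^\circ$ otherwise; in every case $A(v)\le 360^\circ$. If $v$ has degree $d$ in $T(Q)$, its $d$ incident edges split the neighborhood of $v$ into $d$ strictly positive sectors whose sum is $A(v)$. The first thing I would record is that, for a leaf $\ell$ with parent $v$, the two quantities $a_R(\ell)$ and $a_L(\ell)$ are exactly the two sectors flanking the edge $e(\ell)$; hence Property $(*)$ says precisely that neither sector adjacent to $e(\ell)$ equals $180^\circ$. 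Each part then reduces to a short inequality on these sectors.

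For (1) I would note that a degree-$2$ parent $v$ has just two sectors $s_1,s_2$ with $s_1+s_2=A(v)$, and these are $a_R(\ell)$ and $a_L(\ell)$. By the definition of a degree-$2$ vertex of $T(Q)$, at least one of $s_1,s_2$ differs from $180^\circ$. If $A(v)=360^\circ$, then $s_1=180^\circ$ holds exactly when $s_2=180^\circ$, so one of them being $\ne 180^\circ$ forces both; if $A(v)=180^\circ$, strict positivity makes both sectors smaller than $180^\circ$. Either way Property $(*)$ holds.

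For (2) I would label the three sectors at the degree-$3$ parent $v$ as $\alpha$ (the sector directly between $e(\ell)$ and $e(\ell')$) and $\beta,\gamma$ (the remaining two, with $\beta$ flanking $e(\ell')$ and $\gamma$ flanking $e(\ell)$), so $\alpha+\beta+\gamma=A(v)\le 360^\circ$ with $\alpha\ne 180^\circ$ by hypothesis. Since the sectors adjacent to $e(\ell)$ are $\alpha,\gamma$ and those adjacent to $e(\ell')$ are $\alpha,\beta$, the leaf $\ell$ can fail $(*)$ only when $\gamma=180^\circ$ and $\ell'$ only when $\beta=180^\circ$. Both failing would give $\alpha=A(v)-360^\circ\le 0^\circ$, contradicting $\alpha>0^\circ$, so at least one leaf satisfies $(*)$.

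For (3) the four leaves are $v_1,\dots,v_4$ and $r$ is an interior point, so $A(r)=360^\circ$ and the four cyclic sectors $\theta_1,\theta_2,\theta_3,\theta_4$ are positive with $\theta_1+\theta_2+\theta_3+\theta_4=360^\circ$. A leaf fails $(*)$ only if one of its two adjacent sectors equals $180^\circ$; but two distinct sectors cannot both equal $180^\circ$ (the other two would then sum to $0^\circ$), so at most one sector is $180^\circ$ and only the at most two leaves flanking it can fail, leaving at least two leaves with Property $(*)$. The only point needing care throughout is that the parent of a leaf may itself be a vertex of $Q$ with $A(v)=180^\circ$ rather than an interior point; I expect this to be the main (though mild) obstacle, and the uniform bound $A(v)\le 360^\circ$ together with strict positivity of the sectors is exactly what neutralizes it.
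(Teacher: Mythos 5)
Your proof is correct and follows essentially the same approach as the paper: a count of the strictly positive angular sectors at the parent vertex, using the bound on the total angle there. If anything, your version is slightly more careful than the paper's — you uniformly handle the case where the parent is itself a vertex of $Q$ (total angle $180^\circ$ rather than $360^\circ$), and in part (2) you justify that the two outer sectors cannot both equal $180^\circ$ via strict positivity of the middle sector, a step the paper states somewhat loosely.
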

\begin{proof}
(1) If the parent $v$ is a vertex of $Q$, we have $a_R(\ell)+a_L(\ell)=180^\circ$ 
and $0<a_R(\ell),a_L(\ell)<180^\circ$. Thus Property $(*)$ holds.

\noindent
(2) Since $\ell$ and $\ell'$ share the angle, $a_R(\ell)=a_L(\ell')$ or $a_R(\ell')=a_L(\ell)$.
Without loss of generality, we assume that $a_R(\ell)=a_L(\ell')$.
Then, $a_R(\ell)\neq 180^\circ$ by assumption. 
Since $a_R(\ell)+a_R(\ell')+a_L(\ell)=360^\circ$, $a_R(\ell')+a_L(\ell)\neq 180^\circ$.
Therefore, at least one of $a_R(\ell')$ and $a_L(\ell)$ is not equal to $180^\circ$.
Thus at least one of $\ell$ and $\ell'$ satisfies Property $(*)$.

\noindent
(3) Since $r$ has 4 children leaves, at most one angle can be equal to or greater than $180^\circ$, 
and the other three angles are consecutively less than $180^\circ$. 
Let $\ell$ and $\ell'$ be the leaves between these three consecutive angles less than  $180^\circ$.
Then these two leaves satisfy Property $(*)$.
\qed 
\end{proof}

We now turn to the proof of the main theorem in this section:
\begin{proof}%
(of Theorem \ref{th:tetra})
For a given polygon $P$, we perform three Algorithms \ref{alg:type1}, \ref{alg:type2}, and \ref{alg:type3} one
by one. By Lemmas \ref{lem:type1}, \ref{lem:type2}, and \ref{lem:type3},
the running time is $O(L(L+n)n^2)$ time in total.
Thus it is sufficient to show that any cut lines of $Q$ to obtain $P$ should be one of these types.

\begin{figure}
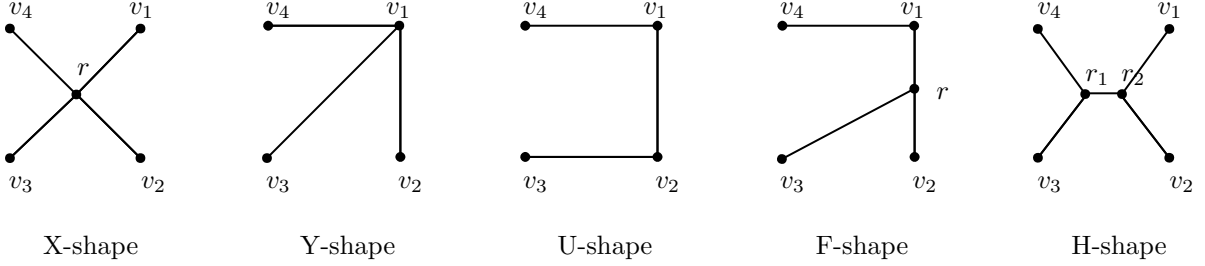
\centering
\begin{minipage}[t]{0.15\textwidth}\centering
\input{figure/X.tex} 

\vspace*{-1cm}
X-shape
\end{minipage}
\hfill
\begin{minipage}[t]{0.15\textwidth}\centering
\input{figure/Y.tex}

\vspace*{-1cm}
Y-shape
\end{minipage}
\hfill
\begin{minipage}[t]{0.15\textwidth}\centering
\input{figure/U.tex}

\vspace*{-1cm}
U-shape
\end{minipage}
\hfill
\begin{minipage}[t]{0.15\textwidth}\centering
\input{figure/F.tex}

\vspace*{-1cm}
F-shape
\end{minipage}
\hfill
\begin{minipage}[t]{0.15\textwidth}\centering
\input{figure/H.tex}

\vspace*{-1cm}
H-shape
\end{minipage}
\caption{X-shape, Y-shape, U-shape, F-shape, and H-shape.}
\label{fig:Shapes}
\end{figure}

From now on, we consider the topological structure of $T(Q)$, 
and show that most cases are in Type 1 except two special cases, which imply Type 2 and Type 3.
According to the case analysis in \cite{AkiyamaMatsunaga2020},
$T(Q)$ has one of the following topological structures of X-shape, Y-shape, U-shape, F-shape, and 
H-shape (\figurename~\ref{fig:Shapes}):

\paragraph{X-shape:}
All of $v_1,v_2,v_3,v_4$ are leaves, and there is a vertex $r$ in $T(Q)$ with $\deg(r)=4$.
By Lemma \ref{lem:property}(3), this case is in Type 1.

\paragraph{Y-shape:}
Three of $v_1,v_2,v_3,v_4$ are leaves, and the last one is a vertex of degree 3.
Without loss of generality, we assume that $\deg(v_1)=3$ and $\deg(v_2)=\deg(v_3)=\deg(v_4)=1$.
Then none of three angles $\angle v_2 v_1 v_3$, $\angle v_3 v_1 v_4$ and $\angle v_4 v_1 v_2$
is equal to $180^\circ$ since $v_1$ is a vertex of $Q$ of curvature 
is $\angle v_2 v_1 v_3+\angle v_3 v_1 v_4+\angle v_4 v_1 v_2=180^\circ$,
and $0<\angle v_2 v_1 v_3,\angle v_3 v_1 v_4,\angle v_4 v_1 v_2$.
Hence this case is in Type 1.

\paragraph{U-shape:}
Two of $v_1,v_2,v_3,v_4$ are leaves, and the other two are vertices of degree 2.
Without loss of generality, we assume that $\deg(v_1)=\deg(v_2)=2$ and $\deg(v_3)=\deg(v_4)=1$, 
and $v_1$ is closer to $v_4$ than $v_2$.
If $T(Q)$ has a vertex $u$ of degree 2 between $v_1$ and $v_4$, $v_4$ has Property $(*)$ by Lemma \ref{lem:property}(1).
Thus we consider the case that $v_1$ is the parent of $v_4$. However, then 
$v_4$ has Property $(*)$ by Lemma \ref{lem:property}(1) for the parent $v_1$.
The leaf $v_3$ also satisfies Property $(*)$ by the same argument with $v_2$.
Thus this case is in Type 1.

\paragraph{F-shape:}
Three of $v_1,v_2,v_3,v_4$ are leaves, and the last one is a vertex of degree 2,
and $T(Q)$ has another vertex $r$ of $\deg(r)=3$.
Without loss of generality, $v_1$ is the vertex of degree 2,
and $v_2$ is the leaf reachable to $v_1$ without through $r$.
Then, by the same argument of U-shape, $v_2$ satisfies Property $(*)$.
In the same way, if $v_3$ or $v_4$ has other vertices of degree 2 on the way to $r$, 
it satisfies Property $(*)$. 
Thus, we consider the other case that both $v_3$ and $v_4$ are children of $r$.
If the angle $\angle v_3 r v_4\neq 180^\circ$, by Lemma \ref{lem:property}(2),
one of $v_3$ and $v_4$ satisfies Property $(*)$.
On the other hand, when $\angle v_3 r v_4 = 180^\circ$, 
we have two independent line segments $v_1 v_2$ and $v_3 v_4$.
This case is in Type 2. Intuitively, the cut lines $v_1 v_2$ and $v_3 v_4$ open $Q$ to a cylinder,
and the cylinder is open by cutting line segment(s) joining $v_1$ and $r$.

\paragraph{H-shape:}
All of $v_1,v_2,v_3,v_4$ are leaves, and there are two vertices $r_1$ and $r_2$ in $T(Q)$ with $\deg(r_1)=\deg(r_2)=3$.
We assume that $r_1$ has children $v_1$ and $v_2$, and $r_2$ has children $v_3$ and $v_4$.
(When the other vertices are between them, we can reduce to the other cases above.)
If $\angle v_1 r_1 v_2\neq 180^\circ$ and $\angle v_3 r_2 v_4\neq 180^\circ$,
by Lemma \ref{lem:property}(2), two vertices $r_1$ and $r_2$ have at least one leaf satisfying 
Property $(*)$.
On the other hand, when $\angle v_1 r_1 v_2 =180^\circ$ and $\angle v_3 r_2 v_4 =180^\circ$, we have the case in Type 2.
The last case is that, without loss of generality, 
 $\angle v_1 r_1 v_2 = 180^\circ$ and $\angle v_3 r_2 v_4\neq 180^\circ$.
Let $r'$ be the third neighbor of $r_2$ other than $v_3$ and $v_4$ (it can be $r'=r_2$).
If $\angle r' r_2 v_3\neq 180^\circ$ and $\angle r' r_2 v_4\neq 180^\circ$,
both of $v_3$ and $v_4$ satisfy Property $(*)$, which is in Type 1.
Type 3 is the last remaining case that $\angle v_1 r_1 v_2 =180^\circ$ and $\angle v_4 r_2 r'=180^\circ$.

Since all cases are covered by the results in \cite{AkiyamaMatsunaga2020},
we can conclude that three Algorithms \ref{alg:type1}, \ref{alg:type2}, and \ref{alg:type3} 
can determine whether $P$ can fold to $Q$ or not.
\qed
\end{proof}

%\RV{Check the consistency with Theorem \ref{th:tetra}.}
\begin{cor}
\label{cor:tetra} 
Let $P$ be a polygon of $n$ vertices.
We can decide whether $P$ can fold to a regular tetrahedron or not in $O(L(L+n)n^2)$ time.
\end{cor}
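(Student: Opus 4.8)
The plan is to observe that a regular tetrahedron is nothing but the special tetramonohedron $Q(a,b,c)$ with $a=b=c$, so that the corollary reduces immediately to Theorem \ref{th:tetra}. First I would verify that the defining hypothesis of a tetramonohedron is satisfied: each face of a regular tetrahedron is an equilateral triangle, and an equilateral triangle is acute (all three angles equal $60^\circ<90^\circ$), so the four congruent faces are indeed acute triangles. Hence a regular tetrahedron is a tetramonohedron in the sense of Section \ref{sec:tetra}, and in particular every vertex has curvature $180^\circ$ by Lemma \ref{lem:tetra}.

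Next I would fix the single free parameter. By the normalization in Section \ref{sec:pre}, the edge length of $Q$ is determined by the area of $P$: writing $s$ for the common edge length, the surface area is $\sqrt{3}\,s^2$ (four equilateral triangles of side $s$), so from the assumption $\mathrm{Area}(P)=\sqrt{3}\,s^2$ we recover $s=\sqrt{\mathrm{Area}(P)/\sqrt{3}}$ in constant time once $\mathrm{Area}(P)$ has been computed in $O(n)$ time. Setting $a=b=c=s$ then specifies the target solid $Q(s,s,s)$ completely.

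Finally I would invoke Theorem \ref{th:tetra} verbatim with these parameters: running Algorithms \ref{alg:type1}, \ref{alg:type2}, and \ref{alg:type3} one by one decides whether $P$ folds to $Q(s,s,s)$, and by Lemmas \ref{lem:type1}, \ref{lem:type2}, and \ref{lem:type3} the total time is $O(L(L+n)n^2)$, which dominates the $O(n)$ preprocessing needed to compute $s$.

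There is essentially no obstacle here beyond the bookkeeping; the only point deserving a sentence of care is confirming that the equilateral case stays inside the ``acute triangle'' requirement exploited in the proof of Lemma \ref{lem:tetra} (a non-acute congruent triangulation would fail to fold into a solid), and this is immediate since $60^\circ$ is acute. Everything else, including the reduction of all topological shapes of $T(Q)$ to Types 1, 2, and 3, is inherited unchanged from the tetramonohedron algorithm.
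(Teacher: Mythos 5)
Your proposal is correct and follows the same route as the paper: specialize the tetramonohedron result of Theorem \ref{th:tetra} by setting $a=b=c$, with the edge length fixed by the area normalization, and run Algorithms \ref{alg:type1}--\ref{alg:type3}. The extra checks you include (equilateral triangles are acute; the side length is recoverable from $\mathrm{Area}(P)$ in $O(n)$ time) are sound and only make explicit what the paper leaves implicit.
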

\begin{proof}
When $Q$ is a regular tetrahedron, by letting $a=b=c$, 
we obtain the same results as for the tetramonohedron.
Thus the running time of Algorithm \ref{alg:type1} is $O(L(L+n)n^2)$.
Algorithms \ref{alg:type2} and \ref{alg:type3} run in this bound, which completes the proof.
\qed
\end{proof}

\section{Cube}
\label{sec:cube}

In this section, the main goal in our context is the case of a cube $Q$,
which consists of six squares and the co-curvature of each vertex is $270^\circ$.
We however show a stronger result for general orthogonal boxes.

\begin{theorem}
\label{th:cube}
Let $a,b,c$ be any positive natural numbers.
Then the folding problem of a box $Q$ of size $a\times b\times c$ from a given simple polygon $P$ 
can be solved in $O(D^2 n^3)$ time, 
where $n$ is the number of vertices in $P$, and $D$ is the diameter of $P$.
% D -> L, nD -> (L+n):  L^2 n^2
In our context, the running time can be represented as $O(L (L+n) n^2)$ time, where $L$ is the perimeter of $P$.
\end{theorem}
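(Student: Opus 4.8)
The plan is to instantiate the common framework of Algorithm \ref{alg:common}. First I would observe that every vertex of a box has three incident faces, each contributing a $90^\circ$ angle, so its co-curvature is $270^\circ$ and its curvature is $90^\circ\neq 180^\circ$. Hence a box is never a tetramonohedron and has no vertex of curvature $180^\circ$, so Lemma \ref{lem:non180} applies in its strong form: any net $P$ of $Q$ carries at least two vertices $p_i,p_{i'}$ that are glued directly onto two distinct box vertices $q_j,q_{j'}$ without extra angle. This justifies the outer double loop over the $O(n^2)$ pairs of vertices of $P$ together with the constant-size choice of target box vertices, exactly as in Algorithm \ref{alg:common}, and it guarantees that every valid folding is discovered by some iteration.

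Second, I would exploit the special geometry of the box stamping recalled in Section \ref{sec:stamping}: once the initial face $F_0$ is placed, every subsequently stamped face has its edges parallel to the axes fixed by $F_0$. Consequently the images of the box vertices all lie on an axis-parallel grid whose spacings are drawn from $a,b,c$, and because $a,b,c$ are integers this grid is integral. Fixing $p_i\mapsto q_j$ anchors one grid point and leaves only the rotation angle $\theta$ free, and requiring $p_{i'}$ to land on a grid point compatible with $q_{j'}$ pins $\theta$ to a discrete set. The key counting step, an analogue of Lemma \ref{lem:lattice}, is to bound the number of admissible grids by $O(D)$: writing $\vec{p_i p_{i'}}$ in grid coordinates forces each of its two components to be a bounded integer combination of $a,b,c$; since $\msize{p_i p_{i'}}\le D$ and the spacings are at least $1$, there are only $O(D)$ candidate values for one component, each of which determines $\theta$ and hence the whole grid, after which feasibility of the other component is checked in constant time. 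As $D<L$, this count is also $O(L)$.

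Third, for each of the $O(D)$ candidate grids I would run the stamping of Theorem \ref{th:stamping} to build $T(P,Q,F_0)$ and the refined boundary polygon $P'$, then the gluing test of Theorem \ref{th:glue}. Here I would bound the stamped structure using Lemma \ref{lem:traverse}: each edge of $P$ meets $O(D)$ box faces, and summing over $\partial P$ yields $|\calF|=O(L)$ regions and a refinement with $n'=O(L+n)$ vertices. Since the box faces come from only three fixed rectangle shapes, the argument used for regular polyhedra in Theorem \ref{th:stamping} gives a stamping pass in $O(|\calF|+n)=O(L+n)$ time, and the gluing test costs $O(n')=O(L+n)$. Multiplying the three costs, with $O(D)$ candidate grids (which is also $O(L)$) and $O(L+n)$ per pass, gives $O(n^2)\cdot O(L)\cdot O(L+n)=O(L(L+n)n^2)$; bounding the same per-pass cost instead by $O(L+n)=O(Dn)$ yields $O(D^2 n^3)$.

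I expect the main obstacle to be the second step, namely making the claim that two directly-glued vertices determine the grid rigorous for a genuine box rather than a cube. Since the three face shapes $a\times b$, $b\times c$, and $a\times c$ are generally distinct, the planar pattern produced by rolling is not the uniform integer lattice but a brick-like grid whose $x$- and $y$-spacings alternate among $a,b,c$ according to the parity of the rolling; I must argue that the legal positions of a box vertex relative to a fixed anchor still form a discrete set describable by $O(D)$ integer parameters, and that a second anchored vertex leaves only $O(D)$ orientations. Verifying correctness afterward, that the stamping from these two anchors reconstructs exactly the valid foldings and that Observation \ref{obs:out} forbids any box vertex from landing inside $P$, is then routine given Theorems \ref{th:stamping} and \ref{th:glue}.
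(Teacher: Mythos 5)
Your proposal is correct and follows essentially the same route as the paper: the $O(n^2)$ outer loop over pairs of directly-glued vertices justified by Lemma \ref{lem:non180}, an $O(D)$ enumeration of candidate axis orientations (the paper's Lemma \ref{lem:XY}, which resolves your ``brick grid'' worry exactly as you anticipate, by noting that integrality of $a,b,c$ places all box-vertex images on the integer lattice so the displacement $\vec{p_ip_j}$ must have integer components $X,Y$ with $X^2+Y^2=\msize{p_ip_j}^2$), and a per-candidate stamping-plus-gluing pass bounded by $O(Dn)$ or $O(L+n)$ via the analogue of Lemma \ref{lem:traverse} (the paper's Lemma \ref{lem:rolling}) together with Theorems \ref{th:stamping} and \ref{th:glue}. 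The only detail you omit is the constant factor of six initial placements of the first rectangle at $p_i$, which does not affect the asymptotics.
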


In \cite{MHU2020}, the authors investigated the same folding problem for a box $Q$, 
and they gave a pseudo-polynomial time algorithm that runs in $O(D^{11}n^2(D^5+\log n))$ time.
It checks all combinations of $a,b$, and $c$ in $O(D^4)$ time.
Therefore, the algorithm in \cite{MHU2020} gives us $O(D^{7}n^2(D^5+\log n))$ time algorithm
when $a$, $b$, and $c$ are explicitly given.
Our algorithm drastically improves its running time to $O(D^2 n^3)$.
% \footnote{The most significant difference is
% that we use two points on $\partial P$ as the vertices of $Q$ in this paper, on the other hand,
% the algorithm in \cite{MHU2020} uses only one point on $\partial P$ as the vertex of $Q$.}

\begin{cor}
\label{cor:cube}
The folding problem of a cube $Q$ from a given simple polygon $P$ can be solved in $O(D^2 n^3)$ time, 
where $n$ is the number of vertices in $P$, and $D$ is the diameter of $P$.
In our context, the running time can be represented as $O(L(L+n) n^2)$ time, where $L$ is the perimeter of $P$.
\end{cor}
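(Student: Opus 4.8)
The plan is to derive Corollary \ref{cor:cube} as an immediate specialization of Theorem \ref{th:cube}. First I would observe that, under the size normalization adopted in Section \ref{sec:pre}, where each edge of a regular polyhedron is assumed to have unit length, a cube is precisely a box of size $1\times 1\times 1$. Since $1$ is a positive natural number, the hypotheses of Theorem \ref{th:cube} are met with $a=b=c=1$, so the theorem applies verbatim and solves the folding problem for a cube in $O(D^2 n^3)$ time, equivalently $O(L(L+n)n^2)$ time, exactly as recorded in the dual parameterization of Theorem \ref{th:cube}.

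The key steps, in order, are: (1) confirm the normalization so that the target cube is the unit box; (2) invoke Theorem \ref{th:cube} with the fixed integer side lengths $a=b=c=1$; and (3) restate the inherited bound in both the $D$ and the $L$ parameterizations. Because the side lengths are now fixed constants rather than unknowns to be searched over, there is no combinatorial overhead from enumerating feasible triples $(a,b,c)$; the running time is simply the box-folding bound of Theorem \ref{th:cube} evaluated at constant dimensions.

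The only point requiring a moment of care — and what passes for the ``hard part'' of an otherwise routine specialization — is verifying that the integrality assumption on the shortest box edge, which underlies Lemma \ref{lem:traverse} and the stamping bound used by the box algorithm, genuinely holds for the unit cube. This is immediate, since all three edge lengths equal the integer $1$, so the traversal bound $M=O(D_P)$ on the number of faces that an edge of length $\ell_{\max}$ can cross holds with the constant factor $2/a=2$ from the rectangular case of Lemma \ref{lem:traverse}. Once this is checked, no further argument is needed and the corollary follows directly from Theorem \ref{th:cube}.
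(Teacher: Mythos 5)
Your proposal is correct and matches the paper's treatment: the paper states Corollary \ref{cor:cube} as an immediate specialization of Theorem \ref{th:cube} with $a=b=c$ (and even remarks afterward that the equality of the side lengths gives no further speed-up), which is exactly your step of invoking the theorem at the unit box under the standard normalization. Your extra check that the integrality hypothesis of Lemma \ref{lem:traverse} holds for the unit cube is a harmless and correct addition.
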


In our algorithm, the fact $a=b=c$ is not useful to improve the running time.
Therefore, hereafter, we assume that $Q$ is a box of size $a\times b\times c$
(the size is explicitly given as a part of the input), 
$P$ is a simple polygon with diameter $D$ and perimeter $L$,
and let $\ell_{\max}$ be the length of the longest edge of $P$.

%% The following property will be used to bound the number of rolling the box $Q$.
%% \begin{lemma}
%% \label{lem:num}
%% We assume that $P$ is a net of $Q$. 
%% Then an edge of $P$ can intersect $O(D)$ faces of $Q$.
%% \end{lemma}

%% \begin{proof}
%% Let $e$ be the edge of $P$ such that the number of intersections of faces of $Q$ is the maximum.
%% Let $R$ be any rectangular face that $e$ intersects.
%% If $e$ intersects two parallel lines of $R$ as $A$ in \figurename~\ref{fig:crossbox}, 
%% $R$ consumes at least length $a$ from $e$, where $a$ is the shortest length of edges of $Q$.
%% We next consider the case that $e$ intersects two adjacent edges of $R$ as $B$ or $C$ in \figurename~\ref{fig:crossbox}.
%% In the situation of $B$, that is, the edge $e$ intersects two adjacent edges of $C$ again,
%% then considering both of $B$ and $C$, we can observe that
%% these two rectangles together consume at least length $a$ from $e$.
%% In the situation of $C$, that is, the edge $e$ intersects two parallel edges of $D$,
%% then $D$ itself consumes at least $a$ from $e$.
%% In any case, one or two consecutive rectangles intersecting $e$ consume at least length $a$ from $e$.
%% Now, the length of $e$ is $O(D)$. Thus $e$ intersects at most $O(D)$ faces.
%% (We note that we do not consider a special case that $e$ intersects a rectangle at its vertex.
%% In this case, $4$ rectangles share this vertex on $Q$. Thus, we can consider that 
%% two of these $4$ rectangles consume at least length $a$ from $e$ in the same manner.)
%% \qed
%% \end{proof}

\subsection{Algorithm}

\begin{algorithm}[th]%[H]
\caption{Outline of the algorithm in \cite{MHU2020} (when $a,b,c$ are given).}
\label{alg:1}
 \SetKwInOut{Input}{Input}
 \SetKwInOut{Output}{Output}
 \Input{A polygon $P=(p_0,p_1,\ldots,p_{n-1},p_0)$}
 \Output{A set $S=\{Q_0,Q_1,\ldots,Q_k\}$ of boxes of size $a\times b\times c$ that can be folded from $P$}
   \For{$i\leftarrow 0$ \KwTo $n-1$}{
    \If{curvature at $p_i$ is $270^\circ$}{
     Find a position of $Q$ on $P$ such that $p_i$ is a vertex of $Q$,
      and all vertices of $Q$ are not inside of $P$ by stamping\;
     Check if $P$ can fold $Q$ by gluing, and output if it can\;
     Find the next position by rotating $Q$ on $P$ at $p_i$ and repeat the check\;
    }
   }
\end{algorithm}

The algorithm in \cite{MHU2020} can be outlined as Algorithm \ref{alg:1}.
That is, the algorithm checks all possible points $p_i$ if it makes $270^\circ$.
The key issue is to find the next position of $Q$ on $P$ by rotating at one point $p_i$ of curvature $270^\circ$.
In \cite{MHU2020}, the upper bound of the number of possible rotations is the main factor of the running time.

By Lemma \ref{lem:non180}, if $Q$ can be folded from $P$, 
there are at least \emph{two} vertices on $P$ that fold to the vertices of $Q$. 
Our main idea is to use the second vertex as shown in Algorithm \ref{alg:2}.

\begin{algorithm}[th]%[H]
\caption{Outline of our algorithm.}
\label{alg:2}
 \SetKwInOut{Input}{Input}
 \SetKwInOut{Output}{Output}
 \Input{A polygon $P=(p_0,p_1,\ldots,p_{n-1},p_0)$}
 \Output{A set $S=\{Q_0,Q_1,\ldots,Q_k\}$ of boxes of size $a\times b\times c$ that can be folded from $P$}
   \For{$i\leftarrow 0$ \KwTo $n-1$}{
    \For{$j\leftarrow i+1$ \KwTo $n-1$}{
    \If{curvature at $p_i$ and $p_j$ are $270^\circ$}{
      Let $\ell:=(x(p_i)-x(p_j))^2+(y(p_i)-y(p_j))^2$\;
      \For{$X\leftarrow 0$ \KwTo $\floor{\sqrt{\ell}}$}{
        Let $Y:=\sqrt{\ell -X^2}$\;
        \If{$Y$ is an integer}{
          Set $x$-axis and $y$-axis by $X$ and $Y$\;
          Put $Q$ on $P$ such that $p_i$ is a vertex of $Q$ along the axes\;
          Check if all vertices of $Q$ are not inside of $P$ by stamping\;
          Check if $P$ can fold to $Q$ by gluing, and output if it can\;
        }
      }
   }
 }
}
\end{algorithm}

We can use the same arguments in \cite{MHU2020} to show the correctness of our algorithm.
Essentially, we check all the possible cases, which is guaranteed by Lemma \ref{lem:non180}.
That is, if $P$ can fold to $Q$, 
there are two vertices $p_i$ and $p_j$ of curvature $270^\circ$ on the boundary of $P$.
Our algorithm checks all combinations of them in $O(n^2)$ time.
When we fix the right pair of $p_i$ and $p_j$, 
the edge $p_i p_j$ should be the oblique side of some right triangle $\angle p_i q p_j$
such that $x(q)=x(p_j)$ and $y(q)=y(p_i)$.
Since we can assume an integral grid for the box $Q$, we have 
$\msize{p_i q}=X$ and $\msize{p_j q}=Y$ for some positive integers $X$ and $Y$, 
which give the $y$-axis and $x$-axis for box stamping.
Then we have a lemma:
\begin{lemma}
\label{lem:XY}
Assume that we fix the right pair of $p_i$ and $p_j$.
Then the number of the corresponding pairs of $x$-axis and $y$-axis is $O(D)$.
Since $2D\le L$, the number is also $O(L)$.
\end{lemma}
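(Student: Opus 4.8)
The plan is to bound the number of integer pairs $(X,Y)$ examined by the inner loop, since each feasible pair determines the grid orientation up to a constant number of reflections. First I would observe that $\ell$ is exactly the squared Euclidean distance $\msize{p_i p_j}^2$ between the two fixed vertices, so $\sqrt{\ell}=\msize{p_i p_j}\le D$ by the definition of the diameter $D=D_P$. The loop ranges $X$ over the integers $0,1,\ldots,\floor{\sqrt{\ell}}$, and for each such $X$ there is at most one candidate $Y=\sqrt{\ell-X^2}$. Hence the loop inspects at most $\floor{\sqrt{\ell}}+1\le D+1=O(D)$ candidate pairs, and the pairs that survive the integrality test for $Y$ are precisely the decompositions of the displacement from $p_i$ to $p_j$ into an integral horizontal leg $X$ and an integral vertical leg $Y$ of a grid-aligned right triangle $\angle p_i q p_j$.

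Next I would argue that counting these pairs really does count axis configurations up to a constant factor. For a fixed feasible pair $(X,Y)$, requiring $p_i$ to lie on a grid vertex and the displacement to $p_j$ to have legs of lengths $X$ and $Y$ along the two axis directions pins down the rotation angle of the integral grid relative to the segment $p_i p_j$; only the choice of which leg is horizontal versus vertical, and the reflection of the grid across $p_i p_j$, remain free. Thus each $(X,Y)$ yields $O(1)$ distinct pairs of $x$-axis and $y$-axis, and the total number of axis configurations arising from the fixed pair $p_i,p_j$ is $O(D)$. Finally, invoking the inequality $2D_P<L_P$ recorded in the preliminaries, we obtain $O(D)=O(L)$, which gives the second form of the bound.

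I do not expect a genuine obstacle here, as the core of the lemma is the elementary observation $\sqrt{\ell}\le D$ together with the one-dimensional range of the loop variable $X$. The only point that deserves explicit care is the second paragraph: verifying that a single $(X,Y)$ induces only boundedly many admissible orientations of the axes, so that the pair count faithfully upper-bounds the configuration count rather than undercounting it. Everything else is a direct reading of the loop structure of Algorithm \ref{alg:2}.
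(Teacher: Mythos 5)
Your argument is essentially identical to the paper's: bound $X$ by $0\le X\le \msize{p_i p_j}\le D$, note that $Y$ is determined by $X$, and invoke $2D<L$ for the second bound. The extra paragraph confirming that each $(X,Y)$ yields only $O(1)$ axis orientations is a harmless elaboration the paper leaves implicit; the proof is correct and takes the same route.
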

\begin{proof}
As discussed above, $X$ is a positive integer with $0\le X\le \msize{p_i p_j}$ for the given $p_i$ and $p_j$.
Since $\msize{p_i p_j}\le D$, we have the lemma.
\qed\end{proof}

Once we fix the $x$-axis and $y$-axis with letting $p_i=(0,0)$ and $p_j=(X,Y)$,
we now put $Q$ at $p_i$. Then, since the curvature at $p_i$ is $270^\circ$,
at least one of four orthants is included in $P$. More precisely, for sufficiently small $\epsilon>0$, 
all points $(x,y)$ with $x^2+y^2\le \epsilon$ are in $P$ for at least one of the following four conditions:
(1)  $x\ge 0$ and $y\ge 0$, (2)  $x\ge 0$ and $y\le 0$, 
(3)  $x\le 0$ and $y\ge 0$, and (4)  $x\le 0$ and $y\le 0$.
Without loss of generality, we assume that we have case (1).
Then we can put the box $Q$ at $p_i$ on this orthant.
There are six possible cases. 
That is, the first rectangle occupies $(x,y)$ such that 
(1) $0\le x\le a$ and $0\le y\le b$,
(2) $0\le x\le a$ and $0\le y\le c$,
(3) $0\le x\le b$ and $0\le y\le c$,
(4) $0\le x\le b$ and $0\le y\le a$,
(5) $0\le x\le c$ and $0\le y\le a$, or 
(6) $0\le x\le c$ and $0\le y\le b$.
We check these 6 cases one by one.
From each of these six initial positions, we start stamping by rolling the box $Q$ on $P$.
As discussed in Sections \ref{sec:stamping} and \ref{sec:glue-check},
we can check all possible ways of folding of $Q$ from $P$ in this way.
Now we give a crucial lemma for the estimation of time complexity:
\begin{lemma}
\label{lem:rolling}
The number of rollings of $Q$ on $P$ in each loop is $O(Dn)$. The number is also $O(L+n)$.
\end{lemma}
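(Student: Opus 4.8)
The plan is to bound the number $\msize{\calF}$ of faces placed during a single stamping run, since by Lemma~\ref{lem:tree} the contact graph $T(P,Q,F_0)$ is a tree, so the number of rollings is exactly $\msize{\calF}-1$. First I would fix the integer coordinate frame produced in Lemma~\ref{lem:XY}, placing $p_i$ at the origin with the $x$- and $y$-axes determined by the integers $X,Y$. Because $a,b,c$ are integers and the initial face is placed with integer corners, an easy induction on the rollings shows that every placed face is an axis-parallel rectangle whose corners have integer coordinates and whose side lengths lie in $\{a,b,c\}\ge 1$. In particular all fold lines lie on integer grid lines, and each face spans at least one unit in each axis direction.

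The structural core of the argument is that every corner of a placed face is a vertex of the box $Q$. By Observation~\ref{obs:out}, no vertex of $Q$ lies strictly inside $P$ when the stamping corresponds to a folding, hence no face corner is interior to $P$, and there are \emph{no purely interior faces}: the closure of every placed face meets $\partial P$, either because an edge of $P$ crosses its interior, or because $\partial P$ touches it along an edge or at a corner. I would support this by the local angle count: around any interior point of $P$ the faces of $Q$ contribute a total angle of exactly $360^\circ$, so at most two faces meet there, along a straight fold line. Consequently vertical and horizontal fold lines never cross inside $P$, and the fold lines form a family of non-crossing chords whose endpoints are the boundary appearances of vertices of $Q$.

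With this in hand I would charge every face to $\partial P$. A face whose interior is crossed by an edge $e$ of $P$ is charged to $e$; by the argument of Lemma~\ref{lem:traverse} an edge of length $\ell$ meets $O(\ell)$ faces, so summing over the $n$ edges gives $\sum_e O(\ell_e)=O(L)$ charged faces, while the bound $M=O(D)$ per edge gives $O(Dn)$. The remaining faces meet $\partial P$ only tangentially, at a single point that is a corner of the face; since each such contact contributes a $90^\circ$ wedge to the interior angle of $P$ at that point, at most four such faces can meet at any vertex of $P$, contributing $O(n)$ more. Combining the cases yields $\msize{\calF}=O(L+n)$ and $\msize{\calF}=O(Dn)$, and hence the claimed bounds on the number of rollings (absorbing the additive $O(n)$ into $O(Dn)$ since $D\ge 1$). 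I expect the main obstacle to be exactly this tangential case: one must account for faces that touch $\partial P$ at a vertex of $Q$ landing in the relative interior of an edge of $P$, and show that each such contact is charged to a fold-line chord reaching the boundary, so that it is absorbed into the $O(L)$ and $O(Dn)$ terms rather than adding an uncontrolled amount. This is where the non-crossing-chord structure and the grid-crossing count underlying Lemma~\ref{lem:traverse} have to be combined carefully.
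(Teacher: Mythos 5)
Your proposal is correct and follows essentially the same route as the paper: both arguments charge each rolling (equivalently, each placed face, via the tree structure) to the boundary $\partial P$, using Observation~\ref{obs:out} to rule out faces buried in the interior and the grid/angle counting behind Lemma~\ref{lem:traverse} to get $O(\msize{e}/a)$ faces per edge $e$, hence $O(L+n)$ and $O(Dn)$ overall. The tangential case you flag as the main obstacle in fact dissolves under Observation~\ref{obs:out}: a face meeting $\partial P$ only at a single corner would leave its remaining corners, which are vertices of $Q$, strictly inside $P$, so such faces cannot arise.
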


%% \begin{figure}
%% \centering
%% \includegraphics[width=0.6\textwidth]{./}
%% \caption{One or two consecutive rectangles consume at least length $a$ from an edge $e$}
%% \label{fig:crossbox}
%% \end{figure}

\begin{proof}
The stamping is done in the DFS manner.
Then, by Observation \ref{obs:out}, any vertex of $Q$ cannot be inside of $P$ 
during the rolling (otherwise, $P$ is not a net of $Q$ at that point, and we refuse).
Therefore, two rollings contribute at least length $a$ to consume the length of an edge of $P$
in the proof of Lemma \ref{lem:traverse}. % (see \figurename~\ref{fig:crossbox}).
That is, each edge of $P$ is consumed by $O(D)$ rollings of $Q$.
Since $P$ has $n$ edges, the total number of rolling of $Q$ to consume all the edges of $P$ is $O(Dn)$.
When all edges of $P$ are consumed, or covered by $Q$, the stamping is finished. Thus we have the lemma.
In the same argument, each edge $e$ requires $O(\msize{e}/a)$ rolling.
Thus the number is also $O(L+n)$ in total.
\qed\end{proof}
We now turn to the proof of the main theorem in this section:

\begin{proof}(of Theorem \ref{th:cube})
We owe the previous work in \cite{MHU2020} the proof of the correctness of our algorithm.
Essentially, both algorithms check all feasible combinations that contain correct answers if they exist.
Therefore, our remaining task is to show the time complexity of our algorithm.

As shown already, the first loop for $p_i$ and $p_j$ takes $O(n^2)$ time.
For each pair of $p_i$ and $p_j$, 
the number of possible combinations of $X$ and $Y$ is $O(D)$ by Lemma \ref{lem:XY}.
The number of ways of putting the box $Q$ is six, and for each of them, 
the number of rolling takes $O(Dn)$ time by Lemma \ref{lem:rolling}.
%% For each rolling, the check if any vertex of $Q$ is put inside of $P$ takes $O(n)$ time
%% by traversing the edges of $P$.
%% However, since the stamping is done by the DFS manner, 
%% we can implement it so that each edge of $P$ is touched in constant time in one stamping.
By Theorem \ref{th:stamping}, each stamping can be done in $O(Dn+n)$ time.
By Theorem \ref{th:glue}, the gluing check takes $O(n')$ time, where $n'$ is the number of the vertices of $P'$.
In the box case, we can see that $n'=O(D+n)$.
Therefore, total running time of our algorithm is 
$O(n^2\times D\times(Dn+n+D+n)\log n))=O(D^2 n^3)$.

When we use $L$ instead of $D$, the number of rolling is $O(L+n)$ by Lemma \ref{lem:rolling}
and hence this step requires $O(L+n)$ time to traverse $\partial P$.
Therefore, the running time becomes $O(L(L+n) n^2)$.
\qed\end{proof}

\section{Deltahedra and Regular Dodecahedron}
\label{sec:delta}

The common outline of our algorithms for a regular dodecahedron and a non-concave deltahedron 
is given in Algorithm \ref{alg:common}.
That is, the algorithm checks all possible combinations of pairs $\{p_{i},p_{i'}\}$ and $q_j$.
By Lemma \ref{lem:non180}, if $Q$ can be folded from $P$, there are at least two vertices
$p_{i},p_{i'}$ of $P$ that correspond to $q_{j},q_{j'}$ of $Q$ for some $q_{j'}$, 
respectively, with $i\neq i'$ and $j\neq j'$.
Hereafter, we assume that the vertex $p_0$ of $P$ corresponds to the vertex $q_0$ of $Q$,
and $p_{i}$ of $P$ corresponds to the vertex $q_j$ of $Q$, respectively, without loss of generality.

The key point is how to decide the relative orientation of $Q$ and $P$, 
which has an influence of time complexity of the algorithm.
Intuitively, for this issue, we also try all possible cases.
The time complexity (or the number of trials) is different depending on the shape of $Q$.
For the remainder of Section \ref{sec:delta}, we assume that the orientation of $Q$ relative to $P$ is fixed.
%We give the details of the two phases in the algorithm.

\subsection{Regular Dodecahedron}
\label{sec:dodeca}

In this section, we assume that $Q$ is a regular dodecahedron and the length of each edge is 1.
Since the area of a pentagon of unit edge is $\frac{\sqrt{25+10\sqrt{5}}}{4}$,
we assume the area of $P$ is $12\times \frac{\sqrt{25+10\sqrt{5}}}{4}=3\sqrt{25+10\sqrt{5}}$
 without loss of generality.
Since we know $Q$, the input of this problem is just 
a polygon $P=(p_0,p_1,\ldots,p_{n-1},p_0)$ of area $3\sqrt{25+10\sqrt{5}}$, and 
we will decide if $P$ can be folded to a unit-size regular dodecahedron $Q$.
By Lemma \ref{lem:non180}, 
we also know that two vertices $p_0$ and $p_i$ of $P$ correspond to two distinct vertices, 
say $q_0$ and $q_j$, of $Q$.
Then the main theorem in this section is as below.

\begin{theorem}
\label{th:penta}
Let $P$ be a simple polygon with $n$ vertices. We denote by $L$ the perimeter of $P$.
Then the folding problem of a regular dodecahedron from $P$ can be solved in $O((L+n)^4 n^2)$ time.
\end{theorem}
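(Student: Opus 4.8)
The plan is to specialize the common stamping scheme of Algorithm~\ref{alg:common} to a unit regular dodecahedron $Q$ and then to bound the number of relative placements of $Q$ on $P$ that the scheme must examine. First I would record the two geometric facts that drive everything. At each vertex of $Q$ three regular pentagons meet, so the co-curvature is $3\times 108^\circ=324^\circ$ and the curvature is $36^\circ$; in particular \emph{no} vertex of $Q$ has curvature $180^\circ$. This places us in the strong case of Lemma~\ref{lem:non180}: every net $P$ of $Q$ carries at least two vertices $p_i,p_{i'}$ that are glued, with no extra angle, onto two distinct vertices $q_j,q_{j'}$ of $Q$. That is precisely the hypothesis under which Algorithm~\ref{alg:common} is correct, so iterating over all $O(n^2)$ pairs $\{p_i,p_{i'}\}$ and over the $O(1)$ vertices $q_j$ of $Q$ is certain to meet a correct anchoring. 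Correctness of the two inner tests is then inherited verbatim: Lemma~\ref{lem:tree} and Theorem~\ref{th:stamping} guarantee that the stamping yields a well-defined tree $T(P,Q,F_0)$, and Theorem~\ref{th:glue} decides the gluing.

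Next I would fix one anchor $p_i\leftrightarrow q_j$ and determine the relative orientation of $Q$, and this is where the dodecahedron genuinely differs from the cube and the deltahedra. There the faces tile the plane -- a square lattice for the box, a triangular lattice for the deltahedra -- so matching two vertices pins the orientation up to a single integer lattice parameter, which is why Lemma~\ref{lem:XY} and Lemma~\ref{lem:lattice} give only $O(L)$ candidate placements per pair. Regular pentagons do \emph{not} tile the plane, so there is no lattice to align to and the image of a vertex of $Q$ under repeated rolling does not lie on any discrete grid. The plan is therefore to use the \emph{second} anchor to discretize the otherwise continuous rotational freedom at $p_i$: among the placements that start $q_j$ at $p_i$, I would retain those in which the segment $p_ip_{i'}$ is realized as a vertex-to-vertex development of $Q$ ending at $p_{i'}$. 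By Lemma~\ref{lem:traverse} a segment of length $\msize{p_ip_{i'}}\le D\le L$ meets only $O(L)$ pentagonal faces, so every admissible face sequence has bounded length and the set of candidate placements stays finite; Observation~\ref{obs:out} prunes it further, since no vertex of $Q$ may land strictly inside $P$.

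For each surviving placement I would run the two inner tests. The stamping proceeds in the DFS manner of Section~\ref{sec:stamping}; the number of stamped regions is $\msize{\calF}=O(L+n)$ by the accounting of Lemma~\ref{lem:rolling} -- a unit pentagon has inradius bounded below, so each rolling consumes at least a constant length of $\partial P$, whence an edge $e$ is traversed $O(\msize{e})$ times and $\sum_e\msize{e}=L$ -- and Theorem~\ref{th:stamping} then builds $T(P,Q,F_0)$ in $O(\msize{\calF}+n)=O(L+n)$ time, while Theorem~\ref{th:glue} performs the gluing check in $O(n')=O(L+n)$ time. Multiplying the $O(n^2)$ anchor pairs by the per-placement cost $O(L+n)$ and by the number of candidate placements gives the stated running time, provided the latter is $O((L+n)^3)$.

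Establishing that placement count will be the main obstacle, and it is where the extra factors over the cube appear. Intuitively, replacing a lattice by a non-tiling pentagonal geometry upgrades the single integer search behind Lemma~\ref{lem:XY} to one with several coupled degrees of freedom: the combinatorial face sequence that straightens $p_ip_{i'}$, together with the positions at which that development meets the edges and a second vertex of $Q$, each bounded by $O(L+n)$ through Lemma~\ref{lem:traverse}. I expect the coarse product to be $O((L+n)^3)$ candidate placements, with Observation~\ref{obs:out} (no vertex of $Q$ inside $P$) doing the pruning that keeps the family finite. Making this count rigorous is the delicate step; once the bound $O((L+n)^3)$ is in hand, the theorem follows by the multiplication above, everything else being an assembly of Lemmas~\ref{lem:non180}, \ref{lem:traverse}, \ref{lem:rolling} and Theorems~\ref{th:stamping}, \ref{th:glue}.
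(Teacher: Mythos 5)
Your overall architecture matches the paper's: two anchor vertices from Lemma \ref{lem:non180} (valid here since every vertex of a regular dodecahedron has curvature $36^\circ$), $O(n^2)$ pairs, a per-placement cost of $O(L+n)$ for stamping and gluing via Theorems \ref{th:stamping} and \ref{th:glue} together with an $O(L+n)$ bound on $\msize{\calF}$, and a claimed $O((L+n)^3)$ bound on the number of candidate placements per pair. But the one step you defer --- ``making this count rigorous is the delicate step'' --- is precisely the content of the proof, and the mechanism you sketch for it would not work. Enumerating ``combinatorial face sequences that straighten $p_ip_{i'}$'' gives no polynomial bound: a development of length $O(L)$ admits exponentially many face sequences, since each rolling of the dodecahedron offers several choices of edge. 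The pruning by Observation \ref{obs:out} does not rescue this, because that test can only be applied after a placement is fixed, not to cut down the enumeration in advance.

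The paper's actual device is different and is what makes the count come out to $O((L+n)^3)$. Every edge traversed during a vertex-to-vertex development is one of the five unit directions of a regular pentagon, and since $(\cos\frac{4\pi}{5},\sin\frac{4\pi}{5})=-\vec{b_0}+\vec{b_1}-\vec{b_2}+\vec{b_3}$, the displacement collapses (by commutativity of vector addition, independently of the order of rollings) to an integer combination $\vec{p_{i'}}-\vec{p_i}=B_0\vec{b_0}+B_1\vec{b_1}+B_2\vec{b_2}+B_3\vec{b_3}$ with $\msize{B_k}=O(L+n)$, the bound coming from the total number of stampings. This replaces the exponential family of face sequences by $O((L+n)^4)$ integer $4$-tuples; fixing $B_0,B_1,B_2$ and imposing the known length $\msize{\vec{p_{i'}}-\vec{p_i}}$ leaves at most two choices of $B_3$, giving $O((L+n)^3)$ candidates, and the initial face of $Q$ is aligned along one of the four $\vec{b_k}$, a constant factor. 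Your proposal correctly diagnoses why Lemmas \ref{lem:XY} and \ref{lem:lattice} do not apply (pentagons do not tile the plane), but without this vector-decomposition argument the claimed placement bound, and hence the theorem, is not established.
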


\begin{figure}[h]
\centering
\includegraphics[width=0.7\linewidth]{./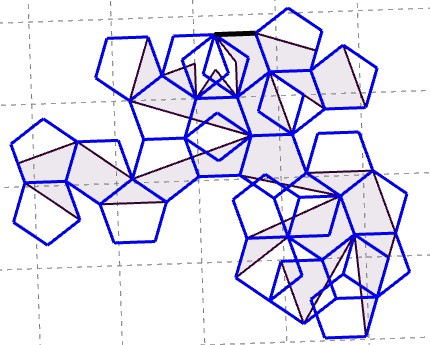}
\caption{An example of overlapping stamping. 
Some pentagons are overlapping by stamping of $Q$ along a feasible net $P$.}
\label{fig:overlap}
\end{figure}

\subsubsection{Stamping}

By assumption, $Q$ can reach from $p_0$ to $p_i$ on $P$ by stamping $Q$ such that
$p_0$ and $p_i$ are corresponding to two different vertices of $Q$.
By rotation of $P$, we have a sequence of regular pentagonal faces $(\hat{f}_0,\hat{f}_1,\ldots,\hat{f}_k)$ such that 
(1) $\hat{f}_0$ contains the edge joining points $p_0=(0,0)$ and $(1,0)$ as its base edge,
(2) $p_i=(x_i,y_i)$ is a vertex of $\hat{f}_k$, and 
(3) two consecutive pentagons $\hat{f}_{j'}$ and $\hat{f}_{j'+1}$ share an edge for each $j'$ with $0\le j'<k$.
Intuitively, the sequence gives us the shortest way of stamping of $Q$ joining $p_0$ and $p_i$ on $P$.
In other words, if we put $Q$ on $P$ with a proper relative angle, 
$Q$ can be unfolded to $P$, and we can reach from $p_0$ to $p_i$ by traversing the edges of these regular pentagons.
We note that two consecutive pentagons do not overlap (without their shared edge), 
but nonconsecutive pentagons can overlap by stamping (see \figurename~\ref{fig:overlap}).

\begin{figure}[h]
\centering
\includegraphics[width=0.3\linewidth]{./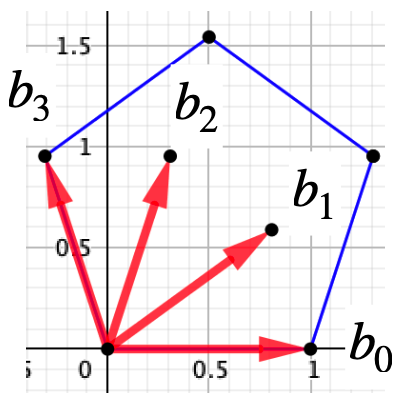}
\caption{Four unit vectors for a unit pentagon.}
\label{fig:vectors}
\end{figure}

\begin{figure}[h]
\centering
\includegraphics[width=0.7\linewidth]{./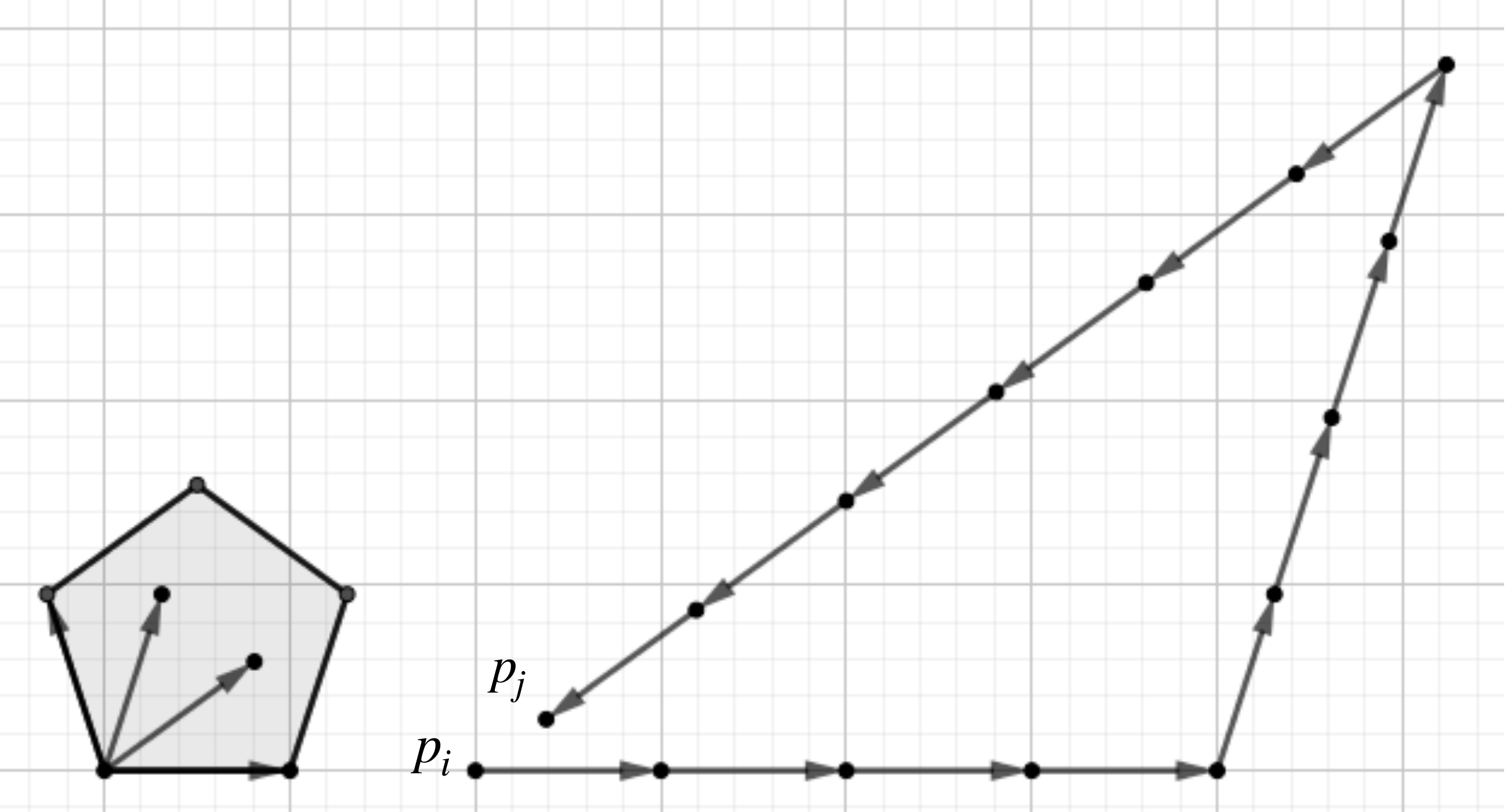}
\caption{Two points $p_i$ and $p_j$ are close, however, they can be spanned by many vectors.}
\label{fig:penta}
\end{figure}

When we consider each edge of the pentagons as a unit vector,
this traverse can be represented by a linear combination of the following four vectors (\figurename~\ref{fig:vectors}):
$\vec{b_0}=(0,1)$, $\vec{b_1}=(\cos\frac{\pi}{5},\sin\frac{\pi}{5})$,
$\vec{b_2}=(\cos\frac{2\pi}{5},\sin\frac{2\pi}{5})$, 
and $\vec{b_3}=(\cos\frac{3\pi}{5},\sin\frac{3\pi}{5})$.
Note that $(\cos\frac{4\pi}{5},\sin\frac{4\pi}{5})=
-\vec{b_0}+\vec{b_1}-\vec{b_2}+\vec{b_3}$.
Thus, $Q$ can be folded from $P$ only if we have four integers $B_0,B_1,B_2,B_3$ such that
\[
\vec{p_i}-\vec{p_0}=B_0\vec{b_0}+B_1\vec{b_1}+B_2\vec{b_2}+B_3\vec{b_3},
\]
and hence
\[
\msize{\vec{p_i}-\vec{p_0}}=\msize{B_0\vec{b_0}+B_1\vec{b_1}+B_2\vec{b_2}+B_3\vec{b_3}}.
\]

We here note that $\msize{B_k}$ does not necessarily small even if
$\msize{\vec{p_i}-\vec{p_0}}$ is small (\figurename~\ref{fig:penta}).
When we consider a grid (as a triangular grid in Lemma \ref{lem:lattice} and
a square grid in Lemma \ref{lem:XY}) which is spanned by two unit vectors,
we can say that $\msize{B_k}\le \msize{\vec{p_i}-\vec{p_0}}$.
In the case of a regular pentagon, we have no such a grid,
and hence we bound the number $\msize{B_k}$ by the number of the stamping.
That is, by Theorem \ref{th:stamping}, we have $\msize{B_k}=O(L+n)$.
Thus, it is enough to check $O((L+n)^4)$ combinations of four integers $B_0,B_1,B_2,B_3$.
For each possible integers $B_0,B_1,B_2,B_3$, we can compute $p_i=(x_i,y_i)$ by rotation of $P$.
After putting $P$ on the proper place so that $p_0=(0,0)$ and $p_i=(x_i,y_i)$, 
we perform the stamping of $Q$ on $P$ and obtain the partition of $P$.
We here note that we use the commutative law of vectors.
Thus the first relative position of $Q$ is one of four positions along 
$\vec{b_0}, \vec{b_1}, \vec{b_2}, \vec{b_3}$.
%as shown in in \figurename~??.
For each position, we perform the second phase for checking gluing.

\subsubsection{Gluing check}

By stamping of $Q$ on $P$, $P$ is partitioned into regions $\calF=\{F_0,F_1,\ldots,F_{h-1}\}$.
More precisely, $F_0$ is the intersection of $P$ and $Q$ on an initial position such that $p_0=q_0=(0,0)$.
Since it is a valid stamping, there are no vertices of $Q$ inside of $F_0$.
As discussed in Lemma \ref{lem:tree}, the contact graph $T=(P,Q,F_0)$ is a tree.
For notational convenience, we consider $F_0$ is the root of $T$,
and the elements in $\calF$ are numbered from $F_0$ in the way of the BFS manner.

First, we glue $F_0$ on $Q$ so that the corresponding vertex $p_0$ on $P$ (or $F_0$) 
comes to a vertex $q_0$ of $Q$. Then the gluing process is done on $Q$ from $F_0$ in the BFS manner.
As shown in Theorem \ref{th:stamping}, the stamping can be done in $O(\msize{\calF}n)$ time.
We have the following upper bound of $\msize{\calF}$:
\begin{theorem}
\label{th:upper-penta}
$\msize{\calF}=O(L+n)$.
\end{theorem}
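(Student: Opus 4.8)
The plan is to bound $\msize{\calF}$ by counting the regions directly, exploiting that the contact graph is a tree. By Lemma \ref{lem:tree} the graph $T(P,Q,F_0)$ is a tree on the node set $\calF$, so the number of rollings is exactly $\msize{\calF}-1$; it therefore suffices to bound the number of regions. First I would classify each region according to how it meets $\partial P$: call $F_i$ \emph{boundary-heavy} if $F_i$ and $\partial P$ share a segment of positive length, and \emph{internal} otherwise. The whole estimate then splits into bounding these two families separately.

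For the boundary-heavy regions I would reuse the counting already carried out for the box in Lemma \ref{lem:rolling}, now driven by the pentagon case of Lemma \ref{lem:traverse}. Fix an edge $e=p_kp_{k+1}$ of $P$. Since every face of $Q$ is a unit regular pentagon and, by Observation \ref{obs:out}, no vertex of $Q$ lies in the interior of $P$ during a valid stamping, the portion of $e$ lying on $\partial P$ is covered by stamped pentagons that change only $O(\msize{e}+1)$ times, exactly as in \figurename~\ref{fig:traverse} and Lemma \ref{lem:traverse}. Summing the incidences of (region, boundary segment) over the $n$ edges of $P$ gives $\sum_{0\le k<n} O(\msize{p_kp_{k+1}}+1)=O(L+n)$, and each boundary-heavy region is charged at least once. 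Hence the number of boundary-heavy regions is $O(L+n)$.

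The delicate part, and the step I expect to be the main obstacle, is the internal regions, since a priori they need not consume any boundary and could in principle form a long chain of faces stamped across the interior of $P$. I would first rule out any region whose closure avoids $\partial P$ entirely: such a region would be a whole face $f_i$ with $f_i\subseteq P$, so all five corners of $f_i$, which are vertices of $Q$, would lie in the interior of $P$, contradicting Observation \ref{obs:out} (equivalently Lemma \ref{lem:boundary}). The same reasoning shows that every internal region is in fact an entire pentagonal face whose five corner vertices of $Q$ all sit on $\partial P$ at isolated points. It then remains to bound how many vertices of $Q$ can appear on $\partial P$: by Theorem \ref{th:spanning} the cut set is a spanning tree of the vertices of $Q$, and unfolding traverses each of its arcs twice; since the tree spans the $O(1)$ vertices of $Q$ (their number being fixed by Gauss--Bonnet Theorem \ref{th:gauss}) and a tree on $O(1)$ leaves has $O(1)$ nodes and edges, the total number of vertex-of-$Q$ occurrences on $\partial P$ is $O(1)$. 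Each such occurrence is a pinch point at which at most three faces meet (the local angle is at most the co-curvature $324^\circ=3\cdot 108^\circ$), so the number of (internal region, corner) incidences is $O(1)$; dividing by the five corners carried by each internal region bounds the number of internal regions by $O(1)$.

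Combining the two bounds yields $\msize{\calF}=O(L+n)+O(1)=O(L+n)$, as claimed. I would close by noting that the only properties used are that the faces are congruent polygons of bounded inradius and that $Q$ has $O(1)$ vertices, so the identical scheme transfers verbatim to the deltahedral cases; the genuinely new ingredient compared with the box argument of Lemma \ref{lem:rolling} is precisely the exclusion of long interior chains, which the Gauss--Bonnet count of vertex occurrences on $\partial P$ supplies.
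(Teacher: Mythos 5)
Your proposal is correct, and its core count is the one the paper uses: charge each stamped region to the positive-length portion of $\partial P$ it consumes, and bound the number of unit pentagons meeting a single edge $e$ of $P$ by $O(\msize{e}+1)$ via Lemma~\ref{lem:traverse}, summing to $O(L+n)$ over the $n$ edges. Where you genuinely add something is in making explicit the hypothesis the paper's proof leaves silent, namely that every region in $\calF$ actually meets $\partial P$ in a segment of positive length. The paper jumps from ``the stamps covering the edges of $P$ number $O(L+n)$'' to ``$\msize{\calF}=O(L+n)$'' without excluding regions that are entire faces lying inside $P$ and touching $\partial P$ only at isolated points; your dichotomy into boundary-heavy and internal regions, together with the observation that an internal region must be a whole pentagon all five of whose corners are vertices of $Q$ sitting on $\partial P$ (by Observation~\ref{obs:out}), closes exactly this case and is a real improvement in rigor. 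One step of your patch needs repair, though the conclusion survives: the claim that ``a tree on $O(1)$ leaves has $O(1)$ nodes and edges'' is false for the cut tree $T$, which may contain arbitrarily many degree-$2$ kink vertices along its cut paths. What you actually need is that the number of occurrences of vertices of $Q$ on $\partial P$, which equals $\sum_{q}\deg_T(q)$ summed over the vertices $q$ of $Q$, is $O(1)$; this holds because every leaf of $T$ is a vertex of $Q$, so $T$ has at most $m$ leaves and hence total degree excess at most $m-2$ at its branch nodes, giving $\sum_{q}\deg_T(q)\le 3m=O(1)$ for the dodecahedron. With that substitution your $O(1)$ bound on internal regions, and hence the theorem, goes through.
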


\begin{proof}
The number of stampings of $Q$ on $P$ is given by the total number of visits of each region $F_i$.
On the other hand, $\msize{\calF}$ is the number of $F_i$s.
Thus, precisely, $(\msize{\calF}-1)$ is the number of the first visiting each $F_i$ by $Q$ except $F_0$.
The stamping of $Q$ is done along the BFS tree. 
Therefore, since each edge of the BFS tree is traversed twice, 
the number of stampings made by $Q$ is $2(\msize{\calF}-1)$. 
Thus $\msize{\calF}$ is proportional to the number of stampings.

\begin{figure}[h]
\centering
\includegraphics[width=0.6\linewidth]{./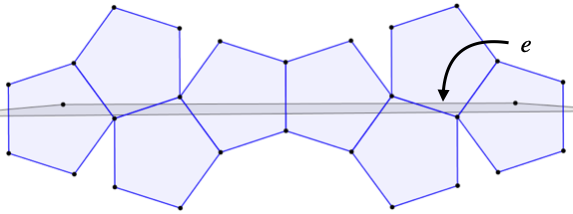}
\caption{An edge $e$ can be covered by $O(\msize{e})$ pentagons
since each angle of a pentagon is $108^\circ$.}
\label{fig:pentagons}
\end{figure}

Let $e$ be an edge of $P$. 
By stamping of $Q$ along the edge $e$, since each pentagonal face of $Q$ has the unit size,
the number of pentagons $\hat{f}_i$ to cover $e$ is $O(\msize{e})$ by Lemma \ref{lem:traverse} with \figurename~\ref{fig:pentagons}.
Thus, the number of pentagonal faces of $Q$ as stamps to cover all the edges $e$ of $P$
is $O(L+n)$ in total. Therefore, we obtain $\msize{\calF}=O(L+n)$.
\qed
\end{proof}

\subsubsection{Time complexity}

Now we consider the time complexity of our algorithm for a regular dodecahedron.
For a given polygon $P=(p_0,p_1,\ldots,p_{n-1},p_0)$,
the algorithm first generates all possible combinations of $(p_i,p_{i'})$, which produce $O(n^2)$ cases. 
We here note that we essentially have one way of choosing $q_0$ by the symmetry of $Q$.
For this $q_0$, we have a constant number (precisely, it is 7) of cases of $q_j$.
Thus we do not need to consider this constant factor for a regular dodecahedron.
For each pair $(p_i,p_{i'})$, we construct 
a vector $\vec{p_{i'}}-\vec{p_i}=B_0\vec{b_0}+B_1\vec{b_1}+B_2\vec{b_2}+B_3\vec{b_3}$
by checking all possible values of $B_0,B_1,B_2,B_3$ with $B_k=O(L+n)$ for $k=0,1,2,3$.
This step generates $O((L+n)^4)$ combinations if we check all combinations in a straightforward way.
However, when $B_0,B_1,B_2$ are fixed, 
since $\msize{\vec{p_i}-\vec{p_0}}=\msize{B_0\vec{b_0}+B_1\vec{b_1}+B_2\vec{b_2}+B_3\vec{b_3}}$ is given,
we have two possible values depending on $B_3\ge 0$ or $B_3<0$,
and they can be computed in a constant time.
Thus it is enough to check $O((L+n)^3)$ combinations by computing two candidates of $B_3$ from $B_0,B_1,B_2$.
For each case, the algorithm performs stamping of $Q$. 
During the stamping, we check if each vertex of a face of $Q$ is inside $P$ or not.
It is done along the traverse of the tree in BFS manner, and hence it can be done in $O(n)$ time in total.
Thus the running time of stamping is $O(\msize{\calF}+n)$, where $(\msize{\calF}-1)$ is the number of stampings.
By Theorem \ref{th:upper-penta}, we have $\msize{\calF}=O(L+n)$.
After the (valid) stamping, we obtain a partition $\calF=\{F_0,F_1,\ldots,F_{\msize{\calF}}\}$ of $P$.
During the stamping, as discussed in Section \ref{sec:glue-check},
we have already constructed a refined polygon
$P'=(p'_0,p'_1,\ldots,p'_{n'-1},p'_{n'}=p'_0)$ with the set $S$ of gluing points in $P'$.
By Theorem \ref{th:glue}, checking the gluing of elements in $\calF$ onto $Q$ takes $O(n')$ time.
Since the tree $T=(P,Q,F_0)$ has $\msize{\calF}$ vertices and $(\msize{\calF}-1)$ edges,
we have $\sum_{i=0}^{\msize{\calF}-1}\msize{F_i}=O(\msize{\calF}+n)$, which is $O(L+n)$ by Theorem \ref{th:upper-penta}.
Therefore, in total, the algorithm runs in $O((L+n)^4 n^2)$ time. It completes the proof of Theorem \ref{th:penta}.

\subsection{Non-concave Deltahedron} 

In this section, we assume that $Q$ is 
a non-concave deltahedron such that it has at least two vertices of curvature not equal to $180^\circ$.
We assume that each face of $Q$ consists of some unit regular triangles;
each unit triangle has three edges of unit length 1 and area $\frac{\sqrt{3}}{4}$.
Let $t$ be the total number of unit triangles on the surface of $Q$.
That is, the surface area of $Q$ is $\frac{\sqrt{3}}{4}t$.
Let $\{q_0,q_1,\ldots,q_{m-1}\}$ be the set of vertices of $Q$.
We assume that 
(1) the set of faces $\{f_0,f_1,\ldots,f_{l-1}\}$ of $Q$ is given, where $l$ is the number of faces of $Q$,
(2) each vertex $q_j$ has its coordinate $(x_j,y_j,z_j)$, and
(3) each face has its vertices in clockwise order.
The basic idea of the algorithm is the same as in Section \ref{sec:dodeca};
we here consider the differences.

\begin{theorem}
\label{th:tri}
Let $P$ be a simple polygon with $n$ vertices of perimeter $L$.
Let $Q$ be a non-concave deltahedron\footnote{For simplicity, we call ``non-concave deltahedron''
a polyhedron that is either a convex deltahedron or a non-strictly-convex deltahedron,
and we assume that it is not a regular tetrahedron.} with $m$ vertices.
Then the folding problem of $Q$ from $P$ can be solved in $O(L(L+n) m n^2)$ time.
\end{theorem}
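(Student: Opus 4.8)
The plan is to instantiate the common procedure of Algorithm~\ref{alg:common} and to reuse the stamping and gluing machinery verbatim, supplying one genuinely new ingredient: a triangular-lattice counting argument that plays the role of Lemma~\ref{lem:lattice}. Since $Q$ is not a regular tetrahedron, it is not a tetramonohedron, so by Corollary~\ref{cor:nontetra} it has at least two vertices of curvature not equal to $180^\circ$; Lemma~\ref{lem:non180} then guarantees two vertices $p_i,p_{i'}$ of $P$ of angle $\neq 180^\circ$ that fold onto two distinct vertices of $Q$. Accordingly I would iterate over the $O(n^2)$ pairs $\{p_i,p_{i'}\}$ and, for each, over the $O(m)$ choices of the vertex $q_j$ of $Q$ to which $p_i$ is glued. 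Unlike the regular dodecahedron, a general deltahedron has no symmetry to collapse this choice, which is exactly the source of the factor $m$ in the running time.

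The crucial step is to bound the number of relative orientations of $Q$ and $P$. Because every face of $Q$ is a convex polyiamond, that is, a union of unit equilateral triangles glued edge-to-edge, stamping $Q$ unfolds its whole surface onto a single triangular lattice; once the initial region $F_0$ is placed with $p_i$ at the origin as a lattice point, every vertex of $Q$ lands on a lattice point. Writing $\vec{v_a}$ and $\vec{v_b}$ for the two spanning unit vectors of this lattice, with $\vec{v_a}\cdot\vec{v_b}=\cos 60^\circ=\tfrac12$, the requirement that $p_{i'}$ also be a lattice point forces
\[
\vec{p_{i'}}-\vec{p_i}=k_a\vec{v_a}+k_b\vec{v_b}
\]
for integers $k_a,k_b$. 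I would then establish the analogue of Lemma~\ref{lem:lattice}: the quadratic form satisfies
\[
k_a^2+k_ak_b+k_b^2=\msize{\vec{p_{i'}}-\vec{p_i}}^2 ,
\]
and since it is positive definite (its smallest eigenvalue is $\tfrac12$) we get $k_a^2+k_b^2\le 2\msize{\vec{p_{i'}}-\vec{p_i}}^2\le 2D^2$. Hence $\msize{k_a}=O(D)=O(L)$, and for each fixed $k_a$ there are at most two integer values of $k_b$. Thus only $O(L)$ feasible lattices (orientations) remain, each determining a fixed placement of $Q$ relative to $P$. This is precisely where the triangular tiling beats the pentagonal one: there is a genuine $2$-dimensional lattice, so the count is $O(L)$ rather than the $O((L+n)^3)$ forced in Section~\ref{sec:dodeca}.

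For each orientation I would run the stamping and the gluing check as in Section~\ref{sec:dodeca}. Since the minimum face angle is $60^\circ$, Lemma~\ref{lem:traverse} shows that each edge $e$ of $P$ is crossed by $O(\msize{e})$ faces of $Q$, so the analogue of Theorem~\ref{th:upper-penta} gives $\msize{\calF}=O(L+n)$; because all faces live on the same fixed triangular lattice, the contact tree $T(P,Q,F_0)$ and the refined polygon $P'$ can be built by incremental DFS touching each vertex of $P$ a constant number of times, hence in $O(\msize{\calF}+n)=O(L+n)$ time, as in the regular-polyhedron case of Theorem~\ref{th:stamping}. Here Observation~\ref{obs:out} must be applied only to the genuine vertices of $Q$, those of curvature $<360^\circ$: a flat vertex of curvature $360^\circ$ lies in the interior of a polyiamond face and is allowed to fall inside $P$. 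The gluing check costs $O(n')=O(L+n)$ by Theorem~\ref{th:glue}, and since it compares the accumulated angle at each candidate point against the stored co-curvature of the corresponding point of $Q$, it correctly certifies the vertices of curvature $180^\circ$ even though they lie on straight parts of $\partial P$. Multiplying the four factors gives $O(n^2)\cdot O(m)\cdot O(L)\cdot O(L+n)=O(L(L+n)mn^2)$.

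I expect the main obstacle to be the lattice-counting step together with the curvature bookkeeping, rather than the arithmetic: one must argue carefully that the unfolding lands on a single triangular lattice even when the faces are arbitrary polyiamonds and $Q$ is only non-strictly convex, and that Observation~\ref{obs:out} and the gluing check are invoked on exactly the right point set (excluding curvature-$360^\circ$ vertices, including curvature-$180^\circ$ ones). The correctness of the exhaustive search itself is inherited directly from Algorithm~\ref{alg:common} and Lemma~\ref{lem:non180}, so the remaining work is entirely the time analysis outlined above.
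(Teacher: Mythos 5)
Your proposal is correct and follows essentially the same route as the paper: iterate over the $O(n^2)$ pairs of candidate vertices of $P$ and the $O(m)$ choices of target vertex of $Q$, bound the feasible triangular lattices by $O(L)$ (the paper writes $\vec{p_{i'}}-\vec{p_i}=B'_0\vec{b_0}+B'_1\vec{b'_1}$ and likewise observes that fixing $B'_0$ leaves two choices of $B'_1$), and then run the $O(L+n)$ stamping and gluing checks of Sections~\ref{sec:stamping} and~\ref{sec:glue-check}. Your explicit positive-definiteness computation for the form $k_a^2+k_ak_b+k_b^2$ and your remark about applying Observation~\ref{obs:out} only to vertices of curvature less than $360^\circ$ merely make explicit what the paper leaves implicit (the paper instead folds the face count $l$ into the bound via $l=O(m)=O(n)$), so no substantive difference remains.
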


We still have the property that we can reach from $p_0$ to $p_i$ on $P$ by stamping $Q$ on it.
However, now we have $O(m n^2)$ combinations for pairs of pair $(p_i,p_{i'})$ and $q_j$.
Hereafter, we assume that the vertex $p_{i}$ of $P$ forms a vertex $q_{j}$ on $Q$ and 
the vertex $p_{i'}$ forms some vertex $q_{j'}$ on $Q$.
In the same argument in the tetrahedron case, for two vectors 
$\vec{b_0}=(1,0)$ and 
$\vec{b'_1}=(\cos\frac{\pi}{3},\sin\frac{\pi}{3})=(\frac{1}{2},\frac{\sqrt{3}}{2})$,
$Q$ can be folded from $P$ only if we have two integers $B'_0$ and $B'_1$ such that
\[
\msize{\vec{p_{i'}}-\vec{p_i}}=\msize{B'_0\vec{b_0}+B'_1\vec{b'_1}}.
\]
We have that $\msize{B'_0}$ and $\msize{B'_1}$ are at most $L$ by Lemma \ref{lem:lattice}, 
and hence we have $O(L^2)$ combinations to be checked. 
However, once we fix $B'_0$, then $B'_1$ has two possible values. 
Thus this step requires $O(L)$ combinations.
% L n^2 m
Each partition of $P$ by stamping of $Q$ takes $O(L+n)$ time by the same argument 
in the case of a dodecahedron.
% L+n

For gluing, almost all arguments are the same as the pentagonal case 
since they do not use the fact that the shape of a face is a pentagon.
The only difference is that we stamp all (possibly different) faces of $Q$; 
this fact gives us an additional lower bound $l$ of the number of stampings.
Therefore, the time complexity of this algorithm for 
a non-concave deltahedron is $O(L(L+l+n)m n^2)$.
Here, by the Euler characteristic, we have $l=2+e-m$, where $e$ is the number of edges of $Q$.
When we consider $Q$ as a graph, it is a planar graph, which implies that $e=O(m)$, or $l=O(m)$.
By Theorem \ref{th:gauss}, $Q$ has at most four vertices of curvature $180^\circ$.
Thus we have $m=O(n)$. 
Therefore, the time complexity of this algorithm is $O(L(L+n)m n^2)$,
which completes the proof of Theorem \ref{th:tri}.

Since $m$ is a constant for each of a regular octahedron and a regular icosahedron,
we have the time complexities in Table \ref{tab:alg}.

\section{Concluding Remarks}

In this paper, we give a series of design scheme of pseudo-polynomial algorithms for
solving the folding problem for given simple polygon $P$ and convex polyhedron $Q$.
When $Q$ is a regular polyhedra (also known as a Platonic solid) or some variants,
our algorithm runs efficiently. We have some open problems for extension.

The extension to convex polyhedra that consist of finite regular polygons is not so difficult.
Most results in this paper work except the estimation of the number of possible vectors.
If we allow to use non-regular polygons, it is not easy to estimate the number of possible vectors
joining two common vertices of $P$ and $Q$. Thus we may need a different approach.

The extension to concave polyhedra is more challenging.
In our algorithms, we use the convexity of a polyhedron in several places.
For example, the contact graph of the faces of a polyhedron is not necessarily acyclic for a concave polyhedron
(a simple example is given in \cite[Fig.~22.6]{DemaineORourke2007}).
In such a case, the set of cut lines is not connected, and the contact graph is not a tree.
Moreover, a vertex of $Q$ may have curvature $360^\circ$.
Finding a nontrivial set of concave polyhedra that allows us to solve the folding problem efficiently
is another interesting open problem.

\bibliographystyle{spmpsci}      % mathematics and physical sciences
\bibliography{main}   % name your BibTeX data base

\end{document}